\newcommand{\defparaproblem}[4]{
 \vspace{2mm}
\noindent\fbox{
 \begin{minipage}{0.96\textwidth}
 \begin{tabular*}{\textwidth}{@{\extracolsep{\fill}}lr} #1 & \\ \end{tabular*}
 {\textbf{Input:}} #2 \\
 {\textbf{Parameter:}} #3 \\
 {\textbf{Question:}} #4
 \end{minipage}
 }
 \vspace{2mm}
}
\DeclareMathOperator{\cw}{\bf cw}
\DeclareMathOperator{\lcw}{\bf lcw}
\DeclareMathOperator{\pw}{\bf pw}
\DeclareMathOperator{\logpw}{\bf logpw}
\DeclareMathOperator{\loglcw}{\bf loglcw}
\newcommand{\N}{\mathbb{N}}
\newcommand{\Z}{\mathbb{Z}}
\newcommand\mimval{\mathrm{mim}}
\newcommand\cutmim{\mathrm{cutmim}}
\newcommand\mimw{\mathrm{mimw}}
\newcommand\cutw{\mathrm{cutw}}
\newcommand\card[1]{|#1|}
\newcommand\Splitcomp{Subdivision-complement\xspace}
\newcommand\splitcomp{subdivision-complement\xspace}
\newcommand\CMC{CMC\xspace}
\newcommand\CMClique{\textsc{Chained Multicolored Clique}\xspace}
\newcommand\CMISet{\textsc{Chained Multicolored Independent Set}\xspace}
\newcommand\ISet{\textsc{Independent Set}\xspace}
\newcommand\DomSet{\textsc{Dominating Set}\xspace}
\newcommand\qCol{\textsc{$q$-Coloring}\xspace}
\newcommand\fiveLCol{\textsc{$5$-List-Coloring}\xspace}
\newcommand\fiveCol{\textsc{$5$-Coloring}\xspace}
\newcommand\FVSet{\textsc{Feedback Vertex Set}\xspace}
\newcommand\MIForest{\textsc{Maximum Induced Forest}\xspace}
\newcommand\calI{\mathcal{I}}
\newcommand\calP{\mathcal{P}}
\title{XNLP-completeness for Parameterized Problems on Graphs with a Linear Structure}
\author{Hans L. Bodlaender}{Utrecht University, The Netherlands}{h.l.bodlaender@uu.nl}{https://orcid.org/0000-0002-9297-3330}{
}
\author{Carla Groenland}{Utrecht University, The Netherlands}{c.e.groenland@uu.nl}{http://orcid.org/0000-0002-9878-8750}{Supported by the project CRACKNP that has received funding from the European Research Council (ERC) under the European Union’s Horizon 2020 research innovation programme (grant agreement No 853234).}
\author{Hugo Jacob}{ENS Paris-Saclay, France}{hugo.jacob@ens-paris-saclay.fr}{https://orcid.org/0000-0003-1350-3240}{}
\author{Lars Jaffke}{University of Bergen, Norway}{lars.jaffke@uib.no}{https://orcid.org/0000-0003-4856-5863}{Supported by the Norwegian Research Council (project number 274526) and the Meltzer Research Fund.}
\author{Paloma T.\ Lima}{IT University of Copenhagen, Denmark}{palt@itu.dk}{https://orcid.org/
0000-0001-9304-4536}{}
\authorrunning{H.\,L. Bodlaender, C. Groenland, H. Jacob, L. Jaffke, and P.\,T. Lima} 
\keywords{parameterized complexity, XNLP, linear clique-width, W-hierarchy, pathwidth, linear mim-width, bandwidth}
\begin{document}

\maketitle

\begin{abstract}
    In this paper, we showcase the class XNLP as a natural place for many hard problems parameterized by linear width measures. This strengthens existing W[1]-hardness proofs for these problems, since XNLP-hardness implies $W[t]$-hardness for all $t$. It also indicates, via a conjecture by Pilipczuk and Wrochna [ToCT 2018], that any XP algorithm for such problems is likely to require XP space. 
    
    In particular, we show XNLP-completeness for natural problems parameterized by pathwidth, linear clique-width, and linear mim-width. The problems we consider are \textsc{Independent Set}, \textsc{Dominating Set}, \textsc{Odd Cycle Transversal}, \textsc{($q$-)Coloring}, \textsc{Max Cut}, \textsc{Maximum Regular Induced Subgraph}, \textsc{Feedback Vertex Set}, \textsc{Capacitated (Red-Blue) Dominating Set}, and \textsc{Bipartite Bandwidth}.
\end{abstract}

\section{Introduction}
Since the inception of parameterized complexity in the late 1980s and early 1990s, much research has
been done on establishing the complexity of parameterized problems. 
Typically one is particularly interested in either designing FPT-algorithms for these problems, or to prove them $W[t]$-hard, for some $t$, which provides evidence that such a problem is not likely to be fixed-parameter tractable. 
As opposed to the classical P versus NP-complete setting, the question of membership in some class of the $W$-hierarchy is often much less clear.
While some natural problems such as \textsc{Independent Set} and \textsc{Dominating Set} are known to be $W[1]$-complete and $W[2]$-complete, respectively, many other problems are unknown to be complete
for a class of parameterized problems, and even conjectured not to be in the $W$-hierarchy.
%
Recently, building upon work by Elberfeld et al.~\cite{ElberfeldST15}, Bodlaender et al.~\cite{XNLP-comp} 
introduced a complexity class called XNLP, which gives a way of addressing this question.


The class XNLP consists of the parameterized problems that can be solved with a non-deterministic algorithm that uses $f(k)\log n$ space and $f(k)n^c$ time, where $f$ is a computable function, $n$ is the input size, $k$ is the parameter and $c$ is a constant. In particular, XNLP-hardness implies $W[t]$-hardness for all $t$. Therefore it is unlikely that any XNLP-hard problem is complete for some $W[t]$.

One success story within parameterized algorithms and complexity is the use of width measures of graphs as parameters (see, e.g.,~\cite{CyganFKLMPPS15}).
Typically, such width measures are defined in terms of a tree-like decomposition of a graph,
and the width describes the complexity of the decomposition, and therefore, in turn, of the graph.
Such width measures also have linear variants, where the decomposition resembles a path instead of a tree.
In this work, we provide evidence that the class 
\emph{XNLP is the `natural home' for hard
problems parameterized by linear width measures.}

Let us give some intuitive explanation why this is the case.
A typical dynamic programming algorithm that uses such a linear decomposition 
stores, at each node of the path, some partial solutions associated with it.
The table entries associated with the nodes are then filled in the order in which they appear on the path.
If one turns such an algorithm into a nondeterministic algorithm, 
it often suffices at the $i$-th node to nondeterministically determine the 
table index corresponding to the correct partial solution (if it exists) from the table entry that was 
previously determined for the $(i-1)$-th node.
In such a case, membership in XNLP follows if 
each single table entry of such a DP algorithm can be represented by $f(k)\log n$ bits 
(where $k$ is the width)
and if the nondeterministic step does not require a computation that uses significantly more space.
This is often the case.
Now, such an approach fails for tree-like decompositions, 
since even a nondeterministic algorithm might have to keep 
too many table entries at some point during the computation.
One common situation in which this occurs is when the algorithm needs to store one table entry 
for each level of the decomposition. 
This incurs a multiplicative factor in the memory usage that depends on the height of the tree,
which can be prohibitively large.

In this direction, Bodlaender et al.~\cite{XNLP-comp} showed that \textsc{List Colouring} parameterized by the pathwidth of the input graph, and \textsc{Bandwidth} are XNLP-complete.
In this paper, 
we show XNLP-completeness of fundamental graph problems parameterized by linear variants of well-established width measures, such as pathwidth, linear clique-width and linear mim-width, as well as some of their logarithmic analogues. 
 


Besides showing $W[t]$-hardness for all $t$, XNLP-hardness also provides insight into the space complexity of parameterized problems. Pilipczuk and Wrochna~\cite{PilipczukW18} proposed the following 
conjecture.\footnote{The statement of the conjecture here is
equivalent to the conjecture on time and memory use for the {\sc Longest Common Subsequence} problem
from \cite{PilipczukW18}; the name of the conjecture is taken as analogue to
the naming of XP as problems that use \emph{slice-wise polynomial time} (see \cite[Section 1.1]{CyganFKLMPPS15}).}

\begin{conjecture}[Slice-wise Polynomial Space Conjecture~\cite{PilipczukW18}]
\label{conj}
XNLP-hard problems do not have an algorithm, that runs in $n^{f(k)}$ time and $f(k)n^c$ space, with $f$ a computable function, $k$ the parameter, $n$ the input size, and $c$ a constant.
\end{conjecture}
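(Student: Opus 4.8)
The plan begins with an honest observation: \cref{conj} is a \emph{conjecture}, due to Pilipczuk and Wrochna, not a theorem, and neither they nor the present paper prove it. So there is no proof to reconstruct; what one can usefully describe is what a proof would have to establish and where the difficulty lies. A proof of \cref{conj} would separate XNLP from the parameterized class of problems admitting an algorithm that simultaneously runs in $n^{f(k)}$ time and $f(k)\,n^c$ space (informally, ``slice-wise polynomial space''). The first reduction I would make is that it suffices to prove this for a \emph{single} XNLP-complete problem: Pilipczuk and Wrochna's original formulation concerns \textsc{Longest Common Subsequence} parameterized by the number of strings, which they show to be XNLP-complete, but any of the width-parameterized problems proved XNLP-complete in the rest of this paper (e.g.\ \textsc{Independent Set} parameterized by pathwidth) would serve equally well, because a parameterized logspace reduction composes with any such algorithm without pushing the space bound past $f(k)\,n^c$ or the time bound past $n^{f(k)}$, and an XNLP-hard problem receives reductions from every problem in XNLP. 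So the task is to prove one concrete parameterized time--space trade-off lower bound.

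The main obstacle is that this trade-off appears to be out of reach of current techniques, and I would try to make precise \emph{why}. If the parameter $k$ is fixed to a constant, XNLP collapses to $\mathsf{NL}$ while ``slice-wise polynomial space'' collapses to $\mathsf{P}$; since $\mathsf{NL}\subseteq\mathsf{P}$, there is no contradiction at the level of any single slice, and hence the entire content of \cref{conj} resides in the \emph{dependence on $k$}. Extracting an unconditional lower bound that is genuinely about this dependence --- rather than about a fixed slice --- is exactly the kind of statement that present lower-bound technology cannot deliver: in spirit it would require time--space trade-offs for nondeterministic logspace computations of a strength we do not know how to obtain, much as $\mathsf{L}\neq\mathsf{NL}$ remains open. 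I therefore do not expect a direct proof.

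Consequently, the realistic ``proof proposal'' is a \emph{conditional} one. I would aim to derive \cref{conj} from a cleaner hypothesis --- for instance a hypothesized (uniform or non-uniform) branching-program-size or nondeterministic-space lower bound for an explicit function, or a complexity-theoretic assumption on \textsc{Longest Common Subsequence} --- and, in the other direction, record the consequences of \cref{conj} as the payoff of the completeness results proved here: assuming \cref{conj}, none of \textsc{Independent Set}, \textsc{Dominating Set}, \textsc{Odd Cycle Transversal}, \textsc{($q$-)Coloring}, \textsc{Feedback Vertex Set}, etc.\ parameterized by pathwidth, linear clique-width, or linear mim-width admits an $n^{f(k)}$-time, $f(k)\,n^c$-space algorithm. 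Within this paper, then, the correct use of \cref{conj} is not to prove it but to combine it with the forthcoming XNLP-completeness reductions to obtain conditional space lower bounds for natural linear-width problems.
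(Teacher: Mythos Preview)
Your assessment is correct: \cref{conj} is stated in the paper as an open conjecture attributed to Pilipczuk and Wrochna, and the paper offers no proof of it; it is used exactly as you describe, as a hypothesis that, combined with the paper's XNLP-completeness results, yields conditional space lower bounds. There is nothing further to compare.
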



Typically, membership in XP for the problems studied in our paper follows from a dynamic programming approach that uses a significant amount of memory.  XNLP-hardness indicates (via Conjecture~\ref{conj}) that dynamic programming is in some sense `optimal' (no XP algorithm can use `significantly less' memory).

\subparagraph*{Linear width measures and logarithmic analogues.}
The width measures we consider in this work include linear variants of arguably the most prominent measures, and some of their generalizations.
Pathwidth is a linear variant of the classic treewidth parameter, which, informally speaking, measures how close a connected graph is to being a tree.
In this vein, pathwidth measures how close a connected graph is to being a path.
Clique-width (or, equivalently, rank-width) generalizes treewidth to several simply structured dense graphs,
and its linear counterpart is called linear clique-width (linear rank-width).
Maximum induced matching width~\cite{VatshelleThesis}, or mim-width for short, 
in turn generalizes clique-width and remains bounded even on well-studied graph classes such as interval and permutation graphs, 
where the clique-width is known to be unbounded.
In fact, for most of these classes the \emph{linear} mim-width is bounded.
%
%

We also introduce a new parameter that we call logarithmic linear clique-width, analogous to the parameter logarithmic pathwidth that was introduced by Bodlaender et al.~\cite{XNLP-comp}. For an $n$-vertex graph of linear clique-width $k$, logarithmic linear clique-width takes the value $\lceil k/\log n\rceil$. 
We stress the fact that XNLP-hardness parameterized by a logarithmic parameter implies that there is no algorithm solving the problem in time $2^{O(k)}n^{O(1)}$ and space $k^{O(1)}n^{O(1)}$, where $k$ is the original parameter\footnote{Indeed, replacing $k$ with $k'\log n$, this gives running time $2^{O(k'\log n)}=n^{O(k')}$ and space $k'^{O(1)}n^{O(1)}$, which is excluded by the conjecture.}, under Conjecture \ref{conj}. Such results can complement existing (S)ETH lower bounds for single exponential FPT algorithms with lower bounds on the space requirements of such algorithms.


\subparagraph*{Bipartite bandwidth.}
Finally, we consider a bipartite variant of the notoriously difficult~\cite{Bodlaender21} problem 
of computing the bandwidth of a graph.
Here, for a bipartite graph with vertex bipartition $(A, B)$, and bandwidth target value $w$, 
we want to find an ordering $\alpha$ of $A$ and an ordering $\beta$ of $B$,
such that for each edge $ab$, $|\alpha(a) - \beta(b)| \le w$.
We consider this problem parameterized by $w$, 
and show that it is XNLP-complete, even when the input graph is a tree.

\subparagraph*{Our results.}
We summarize our results in the following theorem.
\begin{theorem}\label{thm:main}
    The following problems are XNLP-complete.
\begin{enumerate}[(i)]
     \item\label{thm:main:pw} 
    {\sc Capacitated Red-Blue Dominating Set} and {\sc Capacitated Dominating Set} parameterized by pathwidth.
    \item\label{thm:main:lcw} 
    \textsc{Coloring}, \textsc{Maximum Regular Induced Subgraph}, and {\sc Max Cut} parameterized by linear clique-width.
    \item\label{thm:main:log}
    {\sc $q$-Coloring} and  {\sc Odd Cycle Transversal}
    parameterized by logarithmic pathwidth or logarithmic linear clique-width.
    \item\label{thm:main:mim}
    \textsc{Independent Set}, \textsc{Dominating Set}, \textsc{Feedback Vertex Set}, and \textsc{$q$-Coloring} for fixed $q \ge 5$ parameterized by linear mim-width.
    \item\label{thm:main:bandwidth} 
    {\sc Bipartite Bandwidth}, even if the input graph is a tree.
\end{enumerate}
    Furthermore, \textsc{Feedback Vertex Set} parameterized by logarithmic pathwidth or logarithmic linear clique-width is XNLP-hard.
\end{theorem}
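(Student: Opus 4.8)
The plan is to give a parameterized logspace reduction from \qCol parameterized by logarithmic pathwidth---which, for a fixed $q\ge 3$, is XNLP-complete by the third item of \cref{thm:main}---to \FVSet, so that the pathwidth of the output graph $G$ is $O(\pw(H))$, where $H$ is the input graph and the hidden constant depends only on $q$. Since $\lcw(G)\le\pw(G)+2$ for every graph and $G$ has only polynomially many vertices, such a reduction sends a \qCol-instance of logarithmic pathwidth $\kappa$ to a \FVSet-instance of logarithmic pathwidth and logarithmic linear clique-width both $O(\kappa)$, so XNLP-hardness transfers to \FVSet under each of the two logarithmic parameters. We do not aim at membership: a single state of the natural dynamic programming for \FVSet over a path decomposition records, among other data, a partition of the current bag into the components of the remaining forest, which costs $\Theta(\pw\log\pw)$ bits and hence exceeds the $f(\kappa)\log n$ space allowed by XNLP once $\pw$ is logarithmic in $n$.

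For the reduction, replace each vertex $v$ of $H$ by a \emph{palette gadget} $Q_v$, which has constant size since $q$ is fixed: it contains $q$ distinguished \emph{colour vertices} $c_v^{1},\dots,c_v^{q}$ forming a clique, and for every pair $i\ne j$ two further private vertices, each adjacent to $c_v^{i}$ and to $c_v^{j}$. A short case analysis shows that $Q_v$ has minimum feedback vertex set exactly $q-1$, and that every feedback vertex set of that size deletes all colour vertices of $Q_v$ but one; this surviving colour vertex will encode the colour of $v$. For each edge $uv\in E(H)$ and each colour $i$, add the edge $c_u^{i}c_v^{i}$ and a private vertex $e_{uv}^{i}$ adjacent to $c_u^{i}$ and $c_v^{i}$, so that the triangle on $c_u^{i},c_v^{i},e_{uv}^{i}$ is present exactly when $u$ and $v$ both keep colour $i$. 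Finally set the budget to $(q-1)\card{V(H)}$.

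Budget accounting yields correctness. Every $Q_v$ forces at least $q-1$ deletions, so a feedback vertex set of size $(q-1)\card{V(H)}$ must be optimal on each palette and contain nothing outside the palettes; by the uniqueness above it therefore corresponds to a colouring of $H$, and it is a feedback vertex set of $G$ if and only if no conflict triangle survives, i.e.\ if and only if that colouring is proper. Conversely, a proper colouring of $H$ gives such a set, after checking that in that case the surviving colour vertices of distinct palettes form an independent set and every $e_{uv}^{i}$ is a leaf or isolated in the resulting graph, so that deleting the set leaves a forest. For the width bound, turn a path decomposition of $H$ into one of $G$ by replacing each vertex by the vertices of its palette and, for each edge $uv$, inserting a bounded run of bags that additionally contain the $q$ vertices $e_{uv}^{i}$; this multiplies the width by $O(q^{2})=O(1)$. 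The construction is clearly computable in $O(\log n)$ additional space.

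The bulk of the work is the gadget analysis, which is where subtle errors tend to hide: one has to verify that the palette gadget has the claimed \emph{unique} minimum feedback vertex set---so that it cannot be served as cheaply while keeping two colour vertices, which would corrupt the colour it encodes---and that gluing palettes together through the conflict triangles creates no unintended cycles spanning several palettes. For each fixed $q$ these are finite checks, but they are the crux of the proof; the budget accounting and the pathwidth estimate are routine. A more self-contained but technically heavier route is to reduce directly from a chained problem such as \CMISet or \CMClique, encoding the index selected in each colour class of each ``column'' in binary with $O(\log n)$ bit-vertices and forbidding each disallowed pair of indices by a dedicated short cycle through the surviving bit-vertices of two consecutive columns; the difficulty then is to keep every gadget simultaneously internally thin, incident to only $O(k\log n)$ vertices, and free of unintended cycles, so that both the pathwidth and the linear clique-width of the output stay $O(k\log n)$.
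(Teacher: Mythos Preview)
Your proposal addresses only the final sentence of the theorem (XNLP-hardness of \textsc{Feedback Vertex Set} under the two logarithmic parameters), so I compare against the paper's argument for that claim.

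Your reduction is correct. The palette gadget does what you claim: keeping $r\ge 2$ colour vertices forces both private vertices of each of the $\binom{r}{2}$ surviving pairs into the transversal, giving $|S\cap Q_v|\ge (q-r)+2\binom{r}{2}\ge q>q-1$; keeping $r\le 1$ forces $q-1$ colour vertices into $S$, with equality only when $S\cap Q_v$ is exactly a set of $q-1$ colour vertices. The global forest check also goes through: after deleting a budget-tight set, every surviving private vertex and every $e_{uv}^{i}$ has degree at most $1$, and the surviving colour vertices are pairwise nonadjacent because the colouring is proper, so the remainder is a disjoint union of stars. Replacing each bag of a nice path decomposition of $H$ by the union of the corresponding palettes, and inserting one extra bag per $e_{uv}^{i}$ at the introduce-edge node for $uv$, gives a path decomposition of $G$ of width $O(q^{2}\cdot\pw(H))$, and the whole construction is a pl-reduction. (One wording slip: the minimum feedback vertex sets of $Q_v$ are not \emph{unique}---there are $q$ of them---but what you need, and state correctly earlier, is that each consists of $q-1$ colour vertices.)

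The paper takes a different route. Rather than piggy-back on item~(iii), it reduces \textsc{Chained Positive CNF-SAT} directly to \textsc{Odd Cycle Transversal}, encoding each variable block in ternary by a row of triangles and using dedicated \emph{arrow} and \emph{negation} gadgets to propagate deletions into clause cycles; it then observes that in the instances produced, an odd cycle transversal of the target size is automatically a feedback vertex set (once their odd cycles are hit, the gadgets contain no even cycles and become disconnected). So FVS hardness comes essentially for free from the OCT construction. Your approach trades that two-for-one trick for a much simpler gadget and a shorter argument, at the cost of depending on the $q$-\textsc{Coloring} hardness established in item~(iii). Both are valid; yours is cleaner if one is happy to chain reductions within the theorem, while the paper's is more self-contained and yields OCT and FVS simultaneously.
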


Note that \cref{thm:main}(\ref{thm:main:lcw}) and~(\ref{thm:main:mim}) include the first XNLP-completeness results for graph problems with the linear 
clique-width and linear mim-width as parameter.


\subparagraph*{Related Work.}
Guillemot~\cite{Guillemot11} introduced the class WNL (which equals XNLP closed under fpt-reductions), and
showed some problems to be complete for WNL, including a version of {\sc Longest Common Subsequence}. The
class XNLP (under a different name) was introduced by Elberfeld et al.~\cite{ElberfeldST15}, who
also showed a number of problems, including {\sc Linear Cellular Automaton Acceptance}, to be complete for the
class. A large number of parameterized problems was shown to be XNLP-complete recently by Bodlaender
et al.~\cite{XNLP-comp}. Very recently, in work that aims at separating the complexity of
treewidth and pathwidth at one side, and stable gonality at another side, 
Bodlaender et al.~\cite{BodlaenderCW22} showed a number of flow problems parameterized by pathwidth to be complete for XNLP. 

\section{Overview of the results}\label{sec:overview}
In this section, we give a bird's-eye view of the results proved in this paper,
and discuss related work for the specific problems we consider.

\subparagraph*{Parameterized by linear clique-width.}
We consider the \textsc{Max Cut}, the \textsc{Coloring}, and the \textsc{Maximum Regular Induced Subgraph} problem
parameterized by linear clique-width.
For \textsc{Max Cut}, 
let $E(V_1,V_2)$ denote the set of edges with one endpoint in $V_1$ and one endpoint in $V_2$. 

\defparaproblem{\sc Max Cut}{A graph $G=(V,E)$ described by a given linear $k$-expression describing $G$ and an integer $W$.}{$k$.}{Is there a bipartition of $V$ into $(V_1,V_2)$ such that $|E(V_1,V_2)| \geq W$?}

In 1994, Wanke~\cite{Wanke94} showed that {\sc Max Cut} is in XP for
graphs of bounded NLC-width, which directly implies XP-membership 
with clique-width as parameter, as NLC-width and clique-width are linearly
related. In 2014,
Fomin et al.~\cite{FominGLS14} consider the fine grained complexity for
\textsc{Max Cut} for graphs of small clique-width, giving an algorithm
with improved running time and showing asymptotic optimality (assuming the Exponential Time Hypothesis). 
From their results, it follows
that \textsc{Max Cut} is $W[1]$-hard with clique-width as parameter.
In \cref{sec:maxcut}, we prove the following theorem.
\begin{restatable}{theorem}{maxcut} 
{\sc Max Cut} with linear clique-width as parameter is XNLP-complete.
\end{restatable}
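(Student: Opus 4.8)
For the (easier) membership direction, the plan is to scan the given linear $k$-expression from left to right with a nondeterministic machine using $O(k\log n)$ bits of work space. The crucial observation is that two vertices carrying the same label at some point of the expression are indistinguishable with respect to all later operations, so in order to maintain a bipartition $(V_1,V_2)$ it suffices to remember, for every label $i\in[k]$, the numbers $a_i$ and $b_i$ of currently $i$-labelled vertices placed on the $V_1$- resp.\ $V_2$-side, together with a running count $c$ of cut edges. When a vertex is introduced we nondeterministically choose its side and increment the corresponding counter; for an operation $\eta_{i,j}$ we increase $c$ by $a_ib_j+b_ia_j$, the number of the freshly added $(a_i+b_i)(a_j+b_j)$ edges that cross the cut; for a relabelling $\rho_{i\to j}$ we add $a_i,b_i$ into $a_j,b_j$ and reset $a_i,b_i$ to $0$. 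Every counter is bounded by $n$ (and $c$ by $\binom{n}{2}$), so a configuration has $O(k\log n)$ bits and each single update is computable in logarithmic space; we accept iff $c\ge W$ at the end of the expression. This places \textsc{Max Cut} parameterized by linear clique-width in XNLP.

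\textbf{Hardness: the source problem.}
For hardness the plan is to reduce from \CMClique; this problem is the chained variant of \textsc{Multicolored Clique} and is XNLP-complete, by an adaptation of the XNLP-hardness reductions of Bodlaender et al.~\cite{XNLP-comp}. Recall that an instance consists of groups $\mathcal G_1,\dots,\mathcal G_r$, each partitioned into $k$ colour classes, where one must select one vertex per colour per group so that the selections of any two consecutive groups induce a clique (equivalently, satisfy a prescribed compatibility relation); the parameter is $k$. The chain structure is exactly what will keep the \emph{linear} clique-width of the constructed graph bounded in terms of $k$.

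\textbf{Hardness: the construction.}
For each group $\mathcal G_i$ and colour $c$ I would build a \emph{selection gadget}: a weighted clique on vertices $\{x^{i,c}_v : v\text{ in colour class }c\text{ of }\mathcal G_i\}$ (integer edge weights being simulated by parallel copies of vertices, which is cheap in clique-width since copies of a vertex can share a label), together with edges from these vertices to a ``fixed'' vertex that is pinned to the $V_1$-side by a heavy pendant structure. A short optimization-over-bipartitions calculation shows that the weights can be tuned so that the gadget's maximum cut contribution is a known constant, attained precisely by the cuts that put exactly one of the $x^{i,c}_v$ on the $V_1$-side; this configuration we read as ``$v$ is selected for colour $c$ in group $i$''. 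Between the selection gadgets of consecutive groups $\mathcal G_i$ and $\mathcal G_{i+1}$ I would add a \emph{verification gadget} whose maximum cut contribution is again a fixed constant but is decreased by a fixed positive amount whenever the two selected vertices are \emph{not} compatible in the \CMClique instance. Taking $W$ to be the sum of all per-gadget maxima, the constructed graph then has a cut of size at least $W$ if and only if the \CMClique instance is a yes-instance. Finally, the graph is output together with a linear $k'$-expression obtained by processing the chain group by group: since a gadget of group $i$ is adjacent only to gadgets of groups $i-1,i,i+1$, at any moment only $O(1)$ groups need ``live'' labels, and each individual gadget needs only $O(k)$ labels (its $O(1)$ ``core'' vertices, the parallel copies forming sets of equally-labelled vertices, and a shared pool of ``fixed'' vertices), so $k'=O(k)$. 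As the whole construction is plainly computable in logarithmic space given $k$, this is a valid reduction witnessing XNLP-hardness, and together with membership it yields XNLP-completeness.

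\textbf{Main obstacle.}
The routine-looking half (membership) is indeed routine; the real work is the gadget engineering. One must design the selection and, especially, the verification gadget so that (i) each has an \emph{exactly} known maximum cut value, (ii) every deviation from a valid selection/compatibility pattern strictly decreases the cut by a uniform amount, and (iii) all of this fits inside a linear $k$-expression using only $O(k)$ labels and polynomially many vertices. Pinning down the exact optimum of a \textsc{Max Cut} instance built from overlapping weighted cliques and bicliques typically requires a delicate exchange/balancing argument, and reconciling it with the tight $O(k)$-label budget along the chain is where I expect essentially all of the difficulty to concentrate.
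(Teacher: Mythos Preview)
Your membership argument is correct and essentially identical to the paper's.

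For hardness, the paper takes a completely different and much simpler route: it reduces from \textsc{Circulating Orientation} parameterized by pathwidth (shown XNLP-complete in~\cite{BodlaenderCW22}), not from \CMClique. Given an edge-weighted graph $G$ with a path decomposition of width $\ell$, one builds $H$ by creating, for each vertex $v$, each incident edge $e$, and each $i\in[1,w(e)]$, a vertex $x_{v,e,i}$; two such vertices are adjacent iff they share the first index $v$ or the second index $e$. Thus each vertex $v$ of $G$ yields a clique of size $D(v)=\sum_{e\ni v}w(e)$ and each edge $e$ yields a clique of size $2w(e)$. A short counting argument shows that $H$ has a cut of size exactly $\sum_e w(e)^2+\sum_v D(v)^2/4$ iff $G$ has a circulating orientation: any cut has at most $\sum_e w(e)^2$ edge-clique edges and at most $\sum_v D(v)^2/4$ vertex-clique edges, and simultaneous equality forces every edge-clique to be split by side (this \emph{is} an orientation) and every vertex-clique to be split in half (this \emph{is} the circulation condition). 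A linear $(\ell+O(1))$-expression for $H$ is then read straight off a nice path decomposition of $G$: one active label per vertex in the current bag, two temporary labels to introduce an edge's $2w(e)$ new vertices before merging them into the labels of its endpoints, and one inactive label.

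The gap you flag in your own plan is genuine and is exactly where the paper's approach pays off. Engineering a Max-Cut verification gadget whose optimum is attained precisely on compatible selections is the technical heart of the Fomin et al.\ $W[1]$-hardness proof, and it depends on carefully balanced integer weights and a nontrivial exchange argument; you have not supplied this gadget, and porting it to the chained setting while keeping the label budget $O(k)$ and the reduction logspace is not a routine adaptation. By choosing \textsc{Circulating Orientation} as the source, the paper makes the ``exact optimum'' step collapse to the trivial fact that a cut of a clique $K_m$ has at most $\lfloor m^2/4\rfloor$ edges, with equality iff it is balanced --- no delicate gadget design is needed at all. If you wish to pursue your route you must actually construct and analyse the verification gadget; otherwise, switching source problems as the paper does is both cleaner and complete.
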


Next, we consider the classical \textsc{Coloring} problem, 
which given a graph $G$ and an integer $k$ asks if $G$ has a proper coloring with $k$ colors.
Similarly to the story of the \textsc{Max Cut} problem, 
\textsc{Coloring} parameterized by clique-width was shown to be in XP by Wanke in 1994~\cite{Wanke94},
and a W[1]-hardness proof only followed in 2010 by Fomin et al.~\cite{FominGLS10}.
The XP algorithm for coloring runs in time $n^{O(2^k)}$, where $k$ is the clique-width,
and Fomin et al.~\cite{FominGLSZ19} 
even showed that this run time can probably not be substantially improved:
an algorithm running in time $n^{2^{o(k)}}$ would refute the ETH.
We prove the following in \cref{sec:col}.

\begin{restatable}{theorem}{coloringcw}
    \textsc{Coloring} parameterized by linear clique-width is XNLP-complete.
\end{restatable}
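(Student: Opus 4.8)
The plan is a left‑to‑right sweep of the given linear $k$-expression, viewed as a sequence of operations, each of which introduces one new vertex carrying a label in $[k]$, adds all edges between two label classes ($\eta_{i,j}$), or relabels one class into another ($\rho_{i\to j}$); let $G_t$ be the labelled graph after the first $t$ operations. The key point is that, for a proper colouring $\phi$ of $G$, the only feature of $\phi$ restricted to $G_t$ that influences whether $\phi$ extends to a proper colouring of $G$ is its \emph{colour profile}: the function assigning to each $S\subseteq[k]$ the number $\pi(S)$ of colours $c$ for which $S$ is exactly the set of labels carrying some vertex of colour $c$; this occupies $O(2^k\log n)$ bits. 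A nondeterministic machine reads the expression once, keeping the profile of a partial colouring together with the number of colours used so far: on introducing a vertex with label $i$ it nondeterministically either moves one unit from some class $S$ with $\pi(S)>0$ to $S\cup\{i\}$ (reusing a colour) or opens a new colour, incrementing the colour count (rejecting if it would exceed $W$) and $\pi(\{i\})$; on $\eta_{i,j}$ it rejects if $\pi(S)>0$ for some $S\supseteq\{i,j\}$ and does nothing otherwise; on $\rho_{i\to j}$ it transports, deterministically, every class $S$ with $i\in S$ to $(S\setminus\{i\})\cup\{j\}$. It accepts if it reaches the end of the expression without rejecting. Accepting runs correspond exactly to (profiles of) proper colourings of $G$ with at most $W$ colours, and the machine runs in $f(k)\,n^{O(1)}$ time using $f(k)\log n$ space, so the problem lies in XNLP.

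\textbf{XNLP-hardness.} I would reduce from the chained multicoloured problem \CMISet, where one must pick a vertex $v_i^j$ from each colour class $V_i^j$ ($i\in[r]$ blocks, $j\in[k]$ colours) so that no two chosen vertices are adjacent, adjacencies being confined to pairs of classes within a block or in consecutive blocks. The target graph $G'$ consists of one large clique $P$, split into a \emph{range} $\mathcal{C}_i^j$ of $|V_i^j|$ vertices for each class — this forces the chromatic number up to $|P|$, gives all range vertices distinct colours, and makes distinct ranges receive disjoint colour sets — together with, for each class $(i,j)$, a \emph{choice vertex} adjacent to all of $P$ except $\mathcal{C}_i^j$ (so its colour must be that of some vertex of $\mathcal{C}_i^j$, encoding the chosen element), and, for each edge $(a,b)$ between a pair of classes $(i,j),(i',j')$, a \emph{test vertex} adjacent to the two corresponding choice vertices and to all of $P$ except the two range vertices indexing $a$ and $b$. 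Taking the target number of colours to be $|P|$, a short case analysis (using disjointness of the ranges) shows the test vertex for $(a,b)$ is colourable exactly when the chosen pair is not $(a,b)$; hence $G'$ is $|P|$-colourable iff the \CMISet instance is a yes-instance, and the construction is clearly logspace-computable.

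\textbf{Main obstacle.} The real work is to produce $G'$ \emph{together with} a linear $f(k)$-expression. Locality of the adjacencies helps: every range, choice vertex and test vertex is relevant only inside a window of three consecutive blocks, so only $f(k)$ of them need individual labels at any cut, and $P$ can be spelled out block by block and range by range with $O(1)$ auxiliary labels. The delicate point is that a single test vertex for $(a,b)$ must be adjacent to all but one vertex of \emph{each} of two ranges, so it has to be created while those ranges are being spelled out vertex by vertex, and all the test vertices simultaneously waiting for a given range must each skip a different one of its vertices. Arranging this to respect the left‑to‑right discipline of a linear expression without the number of auxiliary labels growing with $n$ — by carefully choosing the order in which range vertices and test vertices are introduced (possibly re‑spelling a range through a cheap vertex‑wise equality gadget, namely a complete bipartite graph minus a perfect matching between two copies), and by relabelling spent vertices into a single class at once — is where essentially all of the effort goes. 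This is the same phenomenon that makes \textsc{Coloring} hard for clique-width in the first place, now in linear form, and I expect it to be the main obstacle.
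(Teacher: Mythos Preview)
Your membership argument is correct and is essentially the paper's proof.

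For hardness you take a genuinely different route from the paper, and the obstacle you flag is a real gap, not just bookkeeping. In your construction every test vertex must be adjacent to all of the clique $P$ except two designated vertices, and the $\Theta(|E(G)|)$ test vertices have pairwise distinct size-two non-neighbourhoods in $P$. In any linear expression, once the vertices of a range have been merged into one label they can no longer be distinguished, so giving a test vertex its ``all but one'' pattern in a range requires that range to be spelled out vertex-by-vertex \emph{at that moment}; but the two ranges touched by a single test vertex are processed at different times, and for a fixed range the many incident test vertices need to skip different positions, forcing many of them to carry distinct labels simultaneously. Your proposed fix via complete-bipartite-minus-matching ``equality'' copies runs into the same difficulty one level down: to attach a fresh copy to the current one you again need per-vertex access to the current copy, so you are pushed into an interleaved chain of copies; and the very first link of that chain must be built simultaneously with the original range while that range is also being made a clique with all earlier ranges and kept available (under a single label) for all later ranges and all later test vertices of $P$. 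It may be possible to thread all of this with $f(k)$ labels, but nothing in the proposal shows how, and it is precisely the point where a wrong ordering silently blows the label count up to something depending on $n$.

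The paper avoids the whole issue by reducing from \textsc{Minimum Maximum Outdegree} parameterized by pathwidth rather than from \CMISet. Each edge of weight $\alpha$ is replaced by a fixed gadget $H_\alpha$ that (i) admits a linear $10$-expression ending with just three live labels, and (ii) has exactly two proper $2\alpha$-colouring ``types'', distinguished not by \emph{which} colours occur on a designated set but by \emph{how many} colours occur there ($\alpha$ versus $2\alpha$). A per-vertex clique then enforces the outdegree bound by a pure pigeonhole count on colours. Because the reduction never needs to pin a specific colour to a specific vertex --- only to control colour \emph{counts} on whole label classes --- no ``all but one'' adjacencies arise, and the linear clique-width of the output is just the input pathwidth plus a constant. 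That numeric rather than positional encoding is the idea your approach is missing.
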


Lastly, we consider the \textsc{Maximum Regular Induced Subgraph}
problem. 
The problem was studied by several authors, including
Asahiro et al.~\cite{AsahiroEIM14}, who show among others an algorithm
that uses linear time for graphs of bounded treewidth, where the time
depends single exponentially on the treewidth. 
Moser and Thilikos~\cite{MoserT09}, 
and independently Mathieson and Szeider~\cite{MathiesonS08} show (amongst other results) that the problem is $W[1]$-hard when the size of the subgraph (parameter $W$ in our description below) is used
as parameter.
Broersma et al.~\cite{BroersmaGP13} give XP algorithms for
several problems, including \textsc{Maximum Regular Induced Subgraph}
for graphs of bounded clique-width.
The proof of the theorem below is given in \cref{sec:maxris}.

\defparaproblem{\textsc{Maximum Regular Induced Subgraph}}{A graph $G$ described by a given linear $k$-expression and two integers $W$ and $D$.}{$k$.}{Is there a $D$-regular induced subgraph of $G$ on at least $W$ vertices?}

\begin{restatable}{theorem}{regularsubgraph}
    \label{thm:regular-induced-subgraph}
    \textsc{Maximum Regular Induced Subgraph} parameterized by linear clique-width is XNLP-complete.
\end{restatable}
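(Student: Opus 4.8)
We prove the statement in two parts: membership in XNLP via a nondeterministic dynamic program that streams over the linear $k$-expression, and XNLP-hardness by a logarithmic-space reduction from an XNLP-complete problem that already carries a linear structure.

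For membership, the plan is to sweep the given (irredundant) linear $k$-expression from left to right and, at each vertex introduction, guess whether the new vertex belongs to the sought $D$-regular induced subgraph $S$. The key observation is the following invariant: at any point of the sweep, any two already-introduced vertices that currently carry the same label and both lie in $S$ have the same number of neighbours in $S$ so far. This holds because, once two vertices share a label, every later clique-width operation (a relabelling, or an edge-insertion between two label classes) treats the two identically and never again places them in different labels; hence the difference of their current $S$-degrees is frozen, and since at the end of the expression both must equal $D$, it is already $0$. Consequently, the whole state we need to carry is, for each label $i \in [k]$: whether some vertex of $S$ currently has label $i$, and if so, the number $s_i \le n$ of such vertices together with their common current $S$-degree $d_i \le D \le n$. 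This is $O(k\log n)$ bits. The updates are immediate: an edge-insertion between labels $i$ and $j$ adds $s_j$ to $d_i$ and $s_i$ to $d_j$ (reject this branch if either exceeds $D$); a relabelling of $i$ to $j$ merges the two classes (reject if both are occupied but $d_i \neq d_j$); introducing a new $S$-vertex into a label already occupied by $S$ is legal only if that class currently has degree $0$. We accept iff, at the end, every occupied class has $d_i = D$ and $\sum_i s_i \ge W$. This is a nondeterministic algorithm using polynomial time and $O(k\log n)$ space, so the problem lies in XNLP.

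For hardness, the plan is to reduce from an XNLP-complete problem that already comes with a linear, ``chained'' structure, concretely \textsc{Max Cut} parameterized by linear clique-width (shown XNLP-complete above), or a chained multicoloured selection problem such as \CMClique if one is available for the purpose. The template is the usual one for (linear) clique-width lower bounds: replace each vertex of the source instance by a constant-size gadget, add padding gadgets to inflate and balance degrees, and tune the targets $W$ and $D$ so that a $D$-regular induced subgraph is forced to make exactly one ``selection'' inside each group of gadgets and attains $W$ vertices only when these selections are globally consistent with the constraints of the source instance. Since the gadget replacing a source vertex interacts with the rest of the graph only through the label that vertex carries in the source's linear $k$-expression, the constructed graph admits a linear $c(k)$-expression obtained by a local rewriting of the source expression, and since both the graph and this expression are produced by a single left-to-right pass over the source instance, the reduction runs in logarithmic space.

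The main obstacle is the gadget engineering in the hardness direction. One must design degrees so that inside each colour class of a group a $D$-regular induced subgraph picks exactly one vertex (not zero, not two), so that the intra-group constraints (a clique, respectively a large cut) and the inter-group ``link'' constraints translate into degree equalities that hold exactly when the combinatorial constraint does, and \emph{simultaneously} so that the whole construction admits a linear clique-width expression using only $O(k)$ labels at every cut, which rules out gadgets that would have to remember unboundedly much about the groups already processed. Making the arithmetic of all these degree conditions line up so that the optimum equals $W$ exactly on yes-instances, with no parasitic solutions, while staying within $O(k)$ labels, is where essentially all the difficulty lies; by contrast, once the frozen-degree invariant above is noticed, the membership direction is routine.
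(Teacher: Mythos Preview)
Your membership argument is correct and essentially identical to the paper's: both keep, per label, the count of chosen vertices and their common $S$-degree, and both rely on the observation that once two vertices share a label their degree difference is frozen, so in a $D$-regular solution it must already be zero. Your exposition of this invariant is in fact more explicit than the paper's, which simply asserts that a non-uniform degree in a label class can be rejected.

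The hardness direction, however, is not a proof but a description of what a proof would have to accomplish. You name two candidate source problems (\textsc{Max Cut}/$\lcw$ and \CMClique), invoke a generic ``replace each vertex by a constant-size gadget, pad degrees, tune $W$ and $D$'' template, and then openly say that making the arithmetic work out with only $O(k)$ labels ``is where essentially all the difficulty lies.'' That is precisely the content a reduction to \textsc{Maximum Regular Induced Subgraph} must supply, and you supply none of it: no gadget, no choice of $D$, no argument that a $D$-regular induced subgraph of the right size forces the intended selections, and no check that the construction sits inside a linear $O(k)$-expression. Reducing from \textsc{Max Cut} in particular looks unpromising: a bipartition is a global object with no per-vertex ``choose one'' structure, and vertex degrees in the source instance are unbounded and non-uniform, which makes forcing $D$-regularity by local replacement awkward. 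The paper instead reduces from \textsc{Chained Positive CNF-SAT}, where the one-true-variable-per-block constraint gives a natural selection to encode; it builds explicit variable-choice gadgets (pairs $v,v'$), clause gadgets, variable-reading vertices whose degree in the solution reads off the index of the selected variable, and degree-filling and degree-constraint gadgets to top every chosen vertex up to a common $D$. The correctness and the $\lcw \le 2k + O(1)$ bound then follow by direct inspection. Your proposal needs an equally concrete construction before it can count as a hardness proof.
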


\subparagraph*{Parameterized by pathwidth.}
We consider the \textsc{Capacitated Red-Blue Dominating Set} and
{\sc Capacitated Dominating Set} problems. Below, we give the formal statement of the
problems, where we have the width of the path decomposition as parameter.
One of the reasons of interest in these problems is that they model facility location problems: the red vertices
model possible facilities that can serve a bounded number of clients which are modelled by the blue vertices.

\defparaproblem{\textsc{Capacitated Red-Blue Dominating Set}}{A bipartite graph $G=(R,B,E)$, a path decomposition of $G$ of width $\ell$, a capacity function $c:R \to \N$, and an integer $k$.}{$\ell$.}{Is there a subset $S$ of $R$, and an assignment of blue vertices $f:B \to S$ such that $\{w,f(w)\}\in E$ for all $w\in R$ and $|f^{-1}(v)| \leq c(v)$ for all $v$ in $S$?}

\defparaproblem{\textsc{Capacitated Dominating Set}}{A graph $G=(V,E)$, a path decomposition of $G$ of width $\ell$, a capacity function $c:V \to \N$, and an integer $k$.}{$\ell$.}{Is there a subset $S$ of $V$, and an assignment of the vertices $f:V \to S$ such that $\{w,f(w)\}\in E$ or $w=f(w)$ for all $w\in V$ and $|f^{-1}(v)| \leq c(v)$ for all $v$ in $S$?}

In 2008, Dom et al.~\cite{DomLSV08} showed that \textsc{Capacitated Dominating Set} is
$W[1]$-hard, with the treewidth and solution size $k$ as combined parameter.
\textsc{Capacitated Dominating Set} was shown to be $W[1]$-hard for planar graphs,
with the solution size as parameter by Bodlaender et al.~\cite{BodlaenderLP09}.
Fomin et al.~\cite{FominGLS14} give bounds for the fine grained complexity of
\textsc{Capacitated Red-Blue Dominating Set}, for graphs with a small feedback vertex
set; their results imply that the problem is $W[1]$-hard with feedback vertex set as parameter.
The proof of the following theorem can be found in \cref{sec:rbds}.
\begin{restatable}{theorem}{cds}
\label{thm:cds}
{\sc Capacitated Red-Blue Dominating Set} and  {\sc Capacitated Dominating Set} parameterized by pathwidth are XNLP-complete.
\end{restatable}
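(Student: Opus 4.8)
The plan is to prove membership and hardness separately, and to obtain XNLP-completeness of both problems at once by establishing membership for both, proving hardness of \textsc{Capacitated Red-Blue Dominating Set}, and then transferring hardness to \textsc{Capacitated Dominating Set} via a simple pendant-vertex reduction. For membership I would use the standard recipe that turns a path-decomposition dynamic program into a nondeterministic algorithm running in $f(\ell)\log n$ space and $f(\ell)n^{O(1)}$ time. Sweeping the bags $X_1,\dots,X_t$ of the given width-$\ell$ path decomposition from left to right, the algorithm guesses which vertices are placed in $S$ when they first appear, and guesses the $f$-image of each vertex $w$ at the (unique, by the edge $\{w,f(w)\}$ or by $w=f(w)$) moment when $w$ and $f(w)$ occur together in a bag. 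The only state it must carry across a bag $X_i$ is, for each $v\in X_i$: whether $v\in S$; if so, how many vertices have already been assigned to $v$, a number in $\{0,\dots,n\}$; and whether $v$ has itself already received an $f$-value. Together with a running count of $|S|$ this is $O((\ell+1)\log n)$ bits, and the transition between consecutive bags (initialising an entering vertex, matching edges inside the bag, and checking, when a vertex is forgotten, that it has been assigned and that no capacity has been exceeded) is computable in logarithmic work space. Accepting exactly when $X_t$ is reached with $|S|\le k$ shows both problems lie in XNLP; the argument is identical for the red-blue and the general version.

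For the hardness of \textsc{Capacitated Red-Blue Dominating Set} I would follow the template of the classical $W[1]$-hardness reductions for capacitated domination (Dom et al.~\cite{DomLSV08}, Fomin et al.~\cite{FominGLS14}), but start from a chained, linearly-structured XNLP-hard source; the natural candidate is \CMClique, whose instances are a sequence of columns $V_1,\dots,V_r$, each partitioned into $k$ color classes, with edges only inside a column or between consecutive columns, asking to pick one vertex of each color in each column so that the picks inside each column form a clique and the picks in consecutive columns are compatible. For every column $i$ and color $c$ I would install a \emph{selection gadget} whose capacities force exactly one vertex of that color class to be chosen and make the identity of that vertex readable by the adjacent gadgets; then, for every pair of colors within a column and for the relevant pairs across consecutive columns, a \emph{verification gadget} that is coverable within its capacity precisely when the two chosen vertices are adjacent in $G$. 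Since every verification gadget touches only one column or two consecutive columns, the gadgets can be laid out column after column, yielding a path decomposition whose width is bounded by a function of $k$ only — this is exactly what the linearity of the source buys us. Finally, hardness of \textsc{Capacitated Dominating Set} follows by attaching to every red vertex a private degree-one vertex of capacity $2$ and giving blue vertices capacity $0$ (adjusting the target and capacities appropriately), so that minimal admissible solutions are in correspondence with those of the red-blue instance while the pathwidth increases by at most a constant.

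The delicate part is entirely in the hardness direction: choosing the capacity values so that the selection and verification gadgets are genuinely \emph{forced} — no assignment may ``cheat'' by leaving a vertex undominated or by exceeding a capacity while still looking locally consistent — and, simultaneously, wiring the gadgets so that none of them spans more than two consecutive columns, which is what keeps the pathwidth of the constructed graph bounded by $f(k)$ independently of the chain length $r$ and of $n$. Once the gadget analysis and the explicit width-$f(k)$ path decomposition are in place, correctness is a routine forward-and-backward case check, and together with the membership argument this yields XNLP-completeness of both problems.
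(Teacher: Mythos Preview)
Your membership argument is essentially the same as the paper's and is fine.

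For hardness, however, you take a genuinely different route. You propose to chain the Dom et al.\ style selection/verification gadgets starting from \CMClique, whereas the paper reduces from \textsc{Circulating Orientation} (shown XNLP-complete parameterized by pathwidth in~\cite{BodlaenderCW22}). That source problem already carries a path decomposition of small width and has unary edge weights, so the paper's reduction is almost mechanical: each vertex $v$ of the orientation instance becomes a red vertex with a private blue pendant and capacity $1+\tfrac{1}{2}\sum_{e\ni v}w(e)$, and each edge $e=\{v,x\}$ of weight $\alpha$ becomes a small fan gadget with two red hubs $z_e,z''_e$ (each with $\alpha$ blue pendants) sharing one blue vertex $z'_e$; choosing exactly one of $z_e,z''_e$ encodes the orientation of $e$, and the capacity of $v$ is then tight precisely when the incoming weight at $v$ is half the total. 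The target is $|V|+|E|$, and the path decomposition of the new graph is obtained from a nice path decomposition of $G$ by threading the gadget vertices through the introduce-edge bags, increasing the width by~$2$. No selection or verification gadgets, no column bookkeeping, no delicate capacity balancing across gadgets.

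Your approach is plausible in outline, but you yourself flag its weak point: the forcing argument for the selection and verification gadgets is left entirely open, and that is where such reductions usually live or die. By contrast, the \textsc{Circulating Orientation} route offloads all the ``forcing'' to a single counting argument (tight capacities at every $v$ imply every edge's fan picks exactly one hub, which in turn encodes an orientation that must be a circulation). What your route would buy, if carried through, is independence from the flow-based XNLP-hard problems of~\cite{BodlaenderCW22}; what the paper's route buys is a two-line gadget and a width increase of~$2$.

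For the transfer to \textsc{Capacitated Dominating Set}, the paper does not attach pendants to the red vertices as you suggest; instead it sets all blue capacities to~$0$, adds one new vertex $x$ adjacent to every red vertex (with capacity large enough to cover all of them) and a pendant $x'$ on $x$, and increases the target by one. This globally handles domination of the red side with a single extra solution vertex; your per-red-vertex pendant scheme would need an argument that optimal solutions do not simply pick the pendants instead.
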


\subparagraph*{Parameterized by logarithmic linear clique-width.}
Bodlaender et al.~\cite{XNLP-comp} introduced the parameter \textit{logarithmic pathwidth} as $\pw/\log n$ for an $n$-vertex graph of pathwidth $\pw$. This allows the pathwidth to be linear in the logarithm of the number of vertices of
the graph. Here we introduce the \emph{logarithmic linear clique-width} as $\lcw/\log n$ for graphs on $n$ vertices with linear clique-width $\lcw$.

We provide new XNLP-complete problems for the parameter logarithmic pathwidth, and show that these problems and the previously known XNLP-complete problems for this parameter \cite{XNLP-comp} are also complete for the parameter logarithmic linear clique-width. Our results are summarised below.

The motivation to study the logarithmic linear clique-width or logarithmic 
pathwidth comes from the observation that many FPT algorithms with
linear cliquewidth or  pathwidth as parameter have a single exponential
time dependency on the parameter. Thus, if linear cliquewidth or
pathwidth is logarithmic in the size of the graph, these algorithms turn into
XP algorithms.

\begin{restatable}{theorem}{logwidth}
\label{thm:logwidth}
    When parameterized by logarithmic pathwidth or logarithmic linear clique-width, \textsc{Independent Set}, \textsc{Dominating Set}, $q$-\textsc{List-Coloring} for $q>2$, and \textsc{Odd Cycle Transversal} are XNLP-complete, and \textsc{Feedback Vertex Set} is XNLP-hard.  
\end{restatable}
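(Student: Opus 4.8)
The plan is to prove XNLP-membership for \ISet, \DomSet, \qLCol\ (with $q\ge 3$ fixed) and \textsc{Odd Cycle Transversal} by the ``guess-and-verify along the decomposition'' template sketched in the introduction, and to prove XNLP-hardness of all five problems (including \FVSet) by parameterized logspace reductions from an XNLP-complete problem with a built-in linear structure.

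\textbf{Membership.} For each of the four problems and each of the two width measures I would take the standard dynamic programming algorithm over, respectively, a given path decomposition of width $w=\pw$ or a given linear $w$-expression with $w=\lcw$. In each case a single table entry fits in $O(w)$ bits (up to a $q$-dependent constant in the colouring case): over a path decomposition an entry records, per vertex of the current bag, its solution membership (\ISet), its domination status (\DomSet), its colour (\qLCol), or its side/deletion (\textsc{OCT}); over a linear $w$-expression an entry records, per label class, the analogous constant-size summary (``some selected vertex carries this label''; ``which colours occur among vertices of this label''; ``this label contains a non-deleted side-$0$ / side-$1$ vertex''; etc.), together with an $O(\log n)$-bit counter for the objective. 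When the parameter $k$ equals $\lceil w/\log n\rceil$ we have $w=\Theta(k\log n)$, so each entry occupies $f(k)\log n$ bits; a nondeterministic machine then walks along the decomposition, guessing at each node the table entry and checking consistency with the previous one in $O(\log n)$ extra space. This places the problem in XNLP for both the logarithmic pathwidth and the logarithmic linear clique-width parameterization (the former is partly already in~\cite{XNLP-comp}; the point here is the uniform treatment and the linear-clique-width variant). The one problem I would \emph{not} place in XNLP is \FVSet: its natural table over a path decomposition must also record a partition of the bag into ``forest components'', costing $\Theta(w\log w)=\Theta(k\log n\,(\log k+\log\log n))$ bits, and the stray $\log\log n$ factor is not allowed; hence for \FVSet only hardness is claimed.

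\textbf{Hardness.} I would reduce from \qLCol\ for $q\ge 3$ parameterized by logarithmic pathwidth, which we may take as XNLP-complete: it follows by a binary-encoding reduction from \textsc{List Colouring} parameterized by pathwidth (XNLP-complete by~\cite{XNLP-comp}), replacing each vertex $v$ by $O(\log n)$ binary ``colour-bit'' vertices, each list constraint $c(v)\in L(v)$ by a gadget that scans the $\le n$ admissible bit-patterns one at a time (so a bag holds the $O(\log n)$ bit-vertices plus $O(1)$ bookkeeping vertices), and each edge $uv$ by a ``the two bit-strings differ'' gadget; this turns pathwidth $k$ into $O(k\log n)$ and leaves the instance of $\mathrm{poly}(n)$ size, so the logarithmic pathwidth is $O(k)$. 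One could equivalently start from the chained problem \CMClique. From this \qLCol\ instance the hardness of \ISet, \DomSet, \textsc{Odd Cycle Transversal} and \FVSet\ for logarithmic pathwidth follows by \emph{pathwidth-cheap} gadget reductions — the clique-per-colour gadget for \ISet\ (target $W$ equal to the number of vertices), its domination analogue for \DomSet, the standard bipartite-penalty gadget for \textsc{OCT}, and the usual \ISet/\textsc{Vertex Cover}-to-\FVSet\ gadget — each of which replaces a vertex by $O(1)$ vertices and hence multiplies pathwidth by $O(1)$. Finally, since linear clique-width is at most pathwidth $+2$, every graph produced has logarithmic linear clique-width $O(k)$ as well, and we may emit a linear clique-width expression of that width alongside the path decomposition; thus the same reductions witness XNLP-hardness for the logarithmic linear clique-width parameterization.

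\textbf{Main obstacle.} The crux throughout is the width accounting: after the single unavoidable $O(\log n)$ blow-up from binary encoding, \emph{every} gadget — in particular the list-membership gadget, which must realise an arbitrary predicate on $O(\log n)$ bits (note an OBDD-style gadget can be $\mathrm{poly}(n)$ wide, so the scanning construction is essential) — has to keep pathwidth growth to a constant factor, since any super-logarithmic overhead anywhere would push the logarithmic parameter past $f(k)$. One also has to check that all these reductions are pl-reductions, i.e.\ computable in $f(k)+O(\log n)$ space with a $\mathrm{poly}(k)$ output parameter (they are, being local replacements driven by one scan of the input), and that the path decomposition and the linear $k$-expression of the output can be produced within the same budget.
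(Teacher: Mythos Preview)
Your membership arguments are correct and essentially match the paper's (a constant-size certificate per label class or bag vertex, together with your correct observation that \FVSet\ is excluded because the forest-component partition costs an extra $\log w$ factor).

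For hardness you take a different route. The paper reduces every problem directly from \textsc{Chained Positive CNF-SAT}: for \qLCol\ it encodes each variable choice in base~$q$ on $O(\log n)$ vertices and reads it via the ``color obstruction'' connectors of Lokshtanov--Marx--Saurabh~\cite{LokshtanovMS18known}; for \textsc{Odd Cycle Transversal} and \FVSet\ it designs dedicated ``arrow'' and ``negation'' gadgets so that a single reduced instance is simultaneously tight for both problems. Your alternative for \textsc{OCT} and \FVSet\ --- reduce from \ISet/$\logpw$, already XNLP-complete by~\cite{XNLP-comp}, via a Vertex-Cover-style ``replace each edge by $k'{+}1$ triangles'' construction --- does work and is in fact simpler, but you should state the gadget explicitly: ``standard bipartite-penalty gadget'' is not a recognised construction, and there is no off-the-shelf reduction from \qLCol\ to \textsc{OCT}. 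Routing \ISet\ and \DomSet\ through \qLCol\ is also unnecessary, since their $\logpw$-hardness is already known.

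The genuine gap is the hardness argument for \qLCol\ itself. Your plan is to compress an arbitrary \textsc{List Coloring} instance by binary-encoding colours on $O(\log n)$ bit-vertices, then realising each list-membership predicate (a DNF with up to $n$ terms over $O(\log n)$ bits) and each ``two bit-strings differ'' predicate by $q$-list-colouring gadgets adding only $O(1)$ pathwidth on top of the bit-vertices. That is precisely the technical crux, and you supply none of the gadgets: you need composable AND/OR gates \emph{with output vertices} in $3$-list-colouring, whereas the textbook $3$-SAT$\to 3$-colouring clause gadget only \emph{enforces} an OR and does not produce a reusable output that can be chained into your accumulated-OR scan. The paper's connector (whose key property is \cref{claim:connector-prop}) solves exactly the analogous task --- a single vertex may receive colour~$1$ iff the $O(\log n)$ encoding vertices carry a prescribed pattern, using $O(1)$ extra bag vertices --- and without an equivalent construction your scanning reduction is incomplete.
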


Lokshtanov et al.~\cite{LokshtanovMS18known} established (tight) lower bounds for these problems for the parameter pathwidth under the Strong Exponential Time Hypothesis. Several of our gadgets are based on those used for these lower bounds by \cite{LokshtanovMS18known}. 
We give the problem definitions and the proof of the theorem in \cref{sec:loglcw}.

\subparagraph*{Parameterized by linear mim-width.}
We prove that several fundamental graph problems are XNLP-complete
when parameterized by the mim-width of a given linear order of the input graph.
W[1]-hardness for \textsc{Independent Set} and \textsc{Dominating Set} 
in this parameterization was shown by Fomin et al.~\cite{FGR20},
and for \textsc{Feedback Vertex Set} by Jaffke et al.~\cite{JKT20}.
For \textsc{$q$-Coloring}, $W[1]$-hardness was not known before our work.
We would like to point out that our XNLP-hardness proof uses a gadget that requires five colors to construct,
and it would be interesting to see if this can be improved to three colors.
The following theorem is proved in \cref{sec:mim}.

\begin{restatable}{theorem}{mimwidth}
\label{thm:mim}
    The following problems are XNLP-complete parameterized by the mim-width of a given linear order of the vertices of the input graph:
    \begin{enumerate}
        \item\label{thm:mim:IS} \textsc{Independent Set}
        \item\label{thm:mim:DS} \textsc{Dominating Set}
        \item\label{thm:mim:qCol} \textsc{$q$-Coloring} for any fixed $q \ge 5$
        \item\label{thm:mim:FVS} \textsc{Feedback Vertex Set}
    \end{enumerate}
\end{restatable}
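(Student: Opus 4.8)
The plan is to establish both membership in XNLP and XNLP-hardness for all four problems; the hardness direction will proceed by parameterized reductions that can be carried out in logarithmic space (pl-reductions), starting from the chained multicoloured problems \CMISet and \CMClique.

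\textbf{Membership.} For membership I would take the known XP dynamic programming over a linear order of bounded mim-width and make it nondeterministic, in the spirit described in the introduction. Fix the given order $v_1,\dots,v_n$ and write $A_i=\{v_1,\dots,v_i\}$. For every constant $d$, the number of $d$-neighbour equivalence classes of subsets of $A_i$ (with respect to $V\setminus A_i$) is $n^{O(d\cdot\mimw)}$, and every such class has a representative consisting of $O(d\cdot\mimw)$ vertices. For \ISet and \DomSet the state after processing $v_i$ consists of the equivalence class of the trace on $A_i$ of the partial solution together with an $O(\log n)$-bit counter of its size (and, for \DomSet, a second bounded-$d$ equivalence class recording which already-chosen vertices are still undominated); for \FVSet one additionally stores a partition of a representative recording the components of the induced forest built so far; for \qCol the state is a $q$-tuple of $1$-neighbour equivalence classes, one per colour. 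In each case a state is encoded in $f(\mimw)\log n$ bits once abstract class indices are replaced by small representatives, so the machine can sweep $i=1,\dots,n$, guessing the state after $v_i$ and checking in logarithmic space that it is consistent with the state after $v_{i-1}$ and with the neighbourhood of $v_i$; this check only recomputes $d$-neighbour equivalence of sets of $O(d\cdot\mimw)$ vertices by scanning $V\setminus A_i$ once, and updates $O(\log n)$-bit counters. Replacing the ``enumerate all states'' step of the XP algorithm by a single nondeterministic guess turns its $n^{f(\mimw)}$ running time into $f(\mimw)n^{O(1)}$, so we obtain an $f(\mimw)\log n$ space, $f(\mimw)n^{O(1)}$ time nondeterministic algorithm.

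\textbf{Hardness.} For hardness I would reduce from \CMISet and \CMClique, the path-like analogues of \textsc{Multicolored Independent Set} and \textsc{Multicolored Clique}: one is given a sequence of $r$ groups, each of $k$ colour classes, and must pick a multicoloured independent set (resp.\ clique) of size $k$ in each group so that the picks of consecutive groups are compatible; these problems are XNLP-complete. The $W[1]$-hardness reductions of Fomin et al.~\cite{FGR20} (for \ISet and \DomSet) and of Jaffke et al.~\cite{JKT20} (for \FVSet) turn an instance with $k$ colour classes into a graph together with a vertex order of mim-width $g(k)$ for some computable $g$. I would apply the corresponding gadget to each of the $r$ groups, place the $r$ gadgets consecutively in the vertex order, and connect consecutive gadgets by a compatibility-checking gadget. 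Since any cut of the resulting order separates the graph into two parts whose crossing edges are confined to $O(1)$ consecutive gadgets, an induced matching across the cut lies inside $O(1)$ gadgets and thus has size $O(g(k))$; hence the order has mim-width $O(g(k))$, and since the construction is local and oblivious it is computable in logarithmic space. For \qCol with fixed $q\ge5$, where no $W[1]$-hardness was previously known, I would instead build a new reduction from \CMClique: the index of the vertex chosen in colour class $j$ is encoded as a binary string along a path-like gadget, and the remaining colours serve as synchronisation rails, so that the proper-colouring constraints both force exactly one index per class and leave room for constant-size comparator gadgets that test compatibility between the cliques of consecutive groups; the gadgets are once more concatenated along a vertex order whose mim-width is bounded by a function of $k$.

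\textbf{Main obstacle.} The crux is the \qCol reduction. The number of colours $q$ is a fixed constant, whereas the information that must be propagated along the chain --- the index of the vertex chosen in each colour class --- carries $\Theta(\log n)$ bits, so the gadget must serialise this index over a constant-size colour alphabet, perform the consecutive-compatibility test, and do all of this while keeping the natural vertex order of bounded mim-width; packing this into $q=5$ colours is the delicate part, and it remains open whether $q=3$ suffices. A secondary, more routine point is verifying that the \FVSet transitions can really be performed in logarithmic working space: the forest-component partition of the representative occupies $O(\mimw\log n)$ bits and is updated only by merging a bounded number of blocks (rejecting if this would close a cycle) when $v_i$ is processed, but one has to check that the representative and the partition can be kept mutually consistent under these updates.
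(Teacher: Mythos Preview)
Your membership argument is essentially the paper's: sweep the linear order, keep a bounded-size representative of the relevant neighbourhood-equivalence class(es) plus $O(\log n)$-bit counters, and guess the next state. The paper makes the same point, including the FVS case, so this part is fine.

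For hardness of \ISet, \DomSet, and \FVSet your plan is morally right but less direct than the paper. Rather than black-boxing the $W[1]$-hardness reductions of \cite{FGR20,JKT20} group-by-group and then adding unspecified ``compatibility gadgets'' between consecutive groups, the paper observes that the single \emph{\splitcomp} operation from \cite{FGR20} already handles both intra-group and inter-group edge constraints uniformly: starting from a \CMClique (resp.\ \CMISet) instance, apply the \splitcomp between every pair of colour classes $V(i_1,j_1),V(i_2,j_2)$ with $i_1,i_2\in\{h,h+1\}$, call the result the \emph{frame graph}, and then add a few problem-specific vertices (make each part a clique for \ISet; add a private neighbour per colour class for \DomSet; add two private neighbours per part plus one universal vertex for \FVSet). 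The frame graph has the clean property that it contains a system of representatives that is independent iff the \CMClique instance is positive, and the mim-width bound follows from a generic lemma combining a cutwidth bound on the quotient $G'/\calP$ with the fact that \splitcomp yields $\cutmim\le 2$ between any two parts. Your ``chain $r$ copies and add compatibility gadgets'' would unwind to exactly this once you write it out, since the compatibility constraint between groups $h$ and $h{+}1$ is the same edge-existence constraint as inside a group; the paper just does it in one shot.

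Where your plan diverges substantially is \qCol. You propose to \emph{serialise} the chosen index in each colour class over a path using a constant alphabet and then compare bit-strings with gadgets. This is the idea used for $q$-\textsc{Coloring} parameterized by \emph{logarithmic} pathwidth, but it is both unnecessary and risky here: with $\Theta(\log n)$ bit-carrying vertices per class and bitwise comparators, keeping the linear mim-width bounded by a function of $k$ alone (not $k\log n$) is far from automatic. The paper's reduction is much simpler and reuses the frame graph verbatim: instead of turning each part $P\in\calP$ into a clique, turn it into a \emph{path} with two extra endpoints, give vertices in the $V(i,j)$ parts list $\{1,2,3\}$ and vertices in the $R(\cdot)$ parts list $\{3,4,5\}$, and fix the endpoint lists by parity so that any proper list-colouring must use colour $3$ exactly once on each path. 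Colour class $3$ is then precisely a system of representatives of $\calP$ that is independent in the frame graph, and \cref{lem:mim:frame:is} gives equivalence with \CMClique. The passage from $5$-list-colouring to $5$-colouring is the standard ``add a $5$-clique'' trick, and $q>5$ is handled by adding universal vertices. So the ``crux'' you identify dissolves: no binary encoding, no comparators, and the mim-width bound is inherited from the frame graph since each path has mim-width $1$.
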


\subparagraph*{Bipartite bandwidth.}
We consider the following bipartite variant of the
\textsc{Bandwidth} problem.

\defparaproblem{\textsc{Bipartite Bandwidth}}{A bipartite graph $G=(X,Y,E)$ and an integer $k$.}{$k$.}{Are there orderings $\alpha:X \to [n]$ and $\beta:Y\to [m]$ such that for each $uv \in E$, $|\alpha(u)-\beta(v)| \leq k$ ?}

A possible application of this problem is as follows. Let a matrix $M$ be given. Create a vertex $x_i\in X$ for each row $i$ and a vertex $y_j\in Y$ for each column $j$, and let $x_i$ be adjacent to $y_j$ if and only if $M_{i,j}\neq 0$. This graph has bipartite bandwidth at most $k$ if and only if the rows and columns of $M$ can be permuted (individually) in such a way that all non-zero entries are within $k$ distance from the main diagonal. The following result is proved in \cref{sec:bw}.

\begin{restatable}{theorem}{bandwidth}
\label{thm:bandwidth}
{\sc Bipartite Bandwidth} is XNLP-complete for trees.
\end{restatable}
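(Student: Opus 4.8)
\emph{Plan for membership.} First I would show that \textsc{Bipartite Bandwidth} lies in XNLP, and in fact without any restriction on the input graph. Given $G=(X,Y,E)$ and the target $k$, the nondeterministic algorithm sweeps positions $t=1,2,\dots,\max(|X|,|Y|)$ and at step $t$ guesses the vertex of $X$ placed at $\alpha$-position $t$ (if $t\le|X|$) and the vertex of $Y$ placed at $\beta$-position $t$ (if $t\le|Y|$). Since every edge $uv$ needs $|\alpha(u)-\beta(v)|\le k$, a vertex placed at position $t-k-1$ can acquire no further neighbours after step $t$, so it can be \emph{finalised} there, checking that all its neighbours were seen within the last $2k+1$ positions on the opposite side and at distance at most $k$. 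Hence it suffices to keep, on each side, the window of the last $2k+1$ placed vertices together with a per-vertex counter of neighbours seen so far and a running count of placed vertices, which is $O(k\log n)$ bits; bijectivity of $\alpha$ and $\beta$ would be maintained exactly as in the XNLP-membership argument for \textsc{Bandwidth} in~\cite{XNLP-comp}. The running time is polynomial, giving membership in XNLP.

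\emph{Plan for hardness.} For XNLP-hardness I would give a pl-reduction (a parameterized reduction that is moreover computable in $O(\log n)$ space for each fixed parameter value, as required for XNLP) from a chained multicolored problem with a linear structure, for concreteness \CMISet (or \CMClique): an instance consisting of $r$ ``blocks'', each carrying $O(k)$ units of information, with interactions only between consecutive blocks, and parameter $k$. From such an instance I would build a tree $T$ and a bound $w=f(k)$ as follows. The backbone of $T$ is a long path, the \emph{spine}, split into $r$ consecutive segments $S_1,\dots,S_r$, each of length $\Theta(w)$, with $S_i$ dedicated to block $i$. Attached to each segment are pendant subtrees of three kinds: (i) heavy \emph{ballast} subtrees, whose sole role is to rigidify the layout; (ii) a \emph{selection gadget} for $S_i$ whose internal structure and size force, in any bandwidth-$\le w$ layout, exactly one of a prescribed family of ``modes'', the modes being in bijection with the candidate choices for block $i$; and (iii) \emph{connector} subtrees joining $S_i$ to $S_{i+1}$, whose forced layout transmits the mode of $S_i$ into a constraint on the mode of $S_{i+1}$, realising the block-to-block compatibility condition. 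The bound $w$ depends only on $k$ and the fixed arities of the gadgets, while $r$ and the number of candidates per block grow with the instance; the tree has size polynomial in the input and is produced block by block in logarithmic space.

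\emph{Plan for correctness.} The correctness proof I envisage has two halves. A \emph{canonical-layout lemma} would state that in every bandwidth-$\le w$ layout of $T$ the spine is laid out almost monotonically, so that each $S_i$ occupies, up to a bounded shift, a fixed interval of both position lines; this I would force by combining the length of the spine with the sizes of the ballast subtrees through the standard volume/boundary counting for bandwidth, leaving no room to route the spine otherwise. Conditioned on this, a local analysis of the windows around each $S_i$ would show that the selection and connector gadgets fit within budget precisely when one picks a mode for $S_i$ (a choice for block $i$) and consecutive modes are compatible, yielding a bandwidth-$w$ layout of $T$ from any solution of the source instance and conversely. Together these give that $T$ has bipartite bandwidth at most $w$ iff the source instance is a yes-instance.

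\emph{Main obstacle.} The delicate part is making all of this go through while $T$ is a \emph{tree}: with no short cycles available to create rigidity, the canonical-layout lemma must be extracted entirely from subtree \emph{sizes}, so one needs enough ballast, in the right places, that every within-budget layout is forced into the intended linear shape — yet not so much ballast that the selection and connector gadgets lose the narrow slack they need to encode and propagate $O(k)$ bits within width $f(k)$. Balancing these competing size constraints, and in particular ruling out ``cheating'' layouts that stay within budget while encoding a spurious solution, is where I expect the bulk of the technical effort to lie; it mirrors, in a parameter-bounded form, the known intricacy of the classical proof that \textsc{Bandwidth} is \textup{NP}-hard already on caterpillar trees.
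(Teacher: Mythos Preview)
Your membership argument is essentially the same as the paper's and is correct.

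Your hardness plan, however, takes a much harder road than necessary. The paper does \emph{not} reduce from \CMISet or \CMClique; it reduces from \textsc{Bandwidth on Caterpillars}, which was already shown XNLP-complete in~\cite{XNLP-comp}. Starting from a caterpillar $G$ with target bandwidth $k$, the paper builds a tree $H$ by (i) subdividing every edge into a path with $2k$ edges (the even length keeps all original vertices on the same side of the bipartition), (ii) attaching $2k^4$ pendant leaves to each original vertex $x_v$ to force large spacing between consecutive $x_v$'s in any low-bandwidth layout, and (iii) taking two copies of the resulting tree joined by a single edge so that the bipartite layout $(\alpha,\beta)$ can simply be two aligned copies of an ordinary bandwidth layout. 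The new target is $L=k^4+2k^2$. The correctness argument is then three short counting claims, with no selection gadgets, no connectors, and no canonical-layout lemma of the kind you describe.

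What this buys is precisely that the ``main obstacle'' you identify disappears: all the work of rigidifying a tree layout and encoding choices with bounded width was already done in proving \textsc{Bandwidth on Caterpillars} XNLP-complete, and the paper simply piggy-backs on it. Your plan is not wrong in spirit, but it amounts to re-deriving that earlier result and the bipartite adaptation simultaneously; the selection and connector gadgets you sketch are exactly the hard part of the caterpillar proof, and you leave them as black boxes. In particular, explaining how a tree gadget of width $f(k)$ realises, in a controlled way, one ``mode'' out of polynomially many and how connectors enforce arbitrary pairwise compatibilities between consecutive blocks is nontrivial and is where your proposal currently has no content. The paper's route avoids all of that by changing the source problem.
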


\section{Preliminaries}

The required background on the computational problems studied in this paper are given in their respective sections. The notions relevant to the entire paper are defined below.

We write $[n]=\{1,\dots,n\}$ and $[a,b]$ for the set of integers $x$ with $a\leq x\leq b$.
All logarithms in this paper have base $2$. We use $\N$ for the set of the natural numbers $\{0, 1, 2, \ldots \}$, and $\Z^+$ denotes the set of the positive natural numbers $\{1,2, \ldots\}$.
We write $N(S)$ and $N[S]=N(S)\cup S$ for the open and closed neighborhood of $S$.

\subsection{Definition of the class XNLP}
\label{subsec:defXNLP}
In this paper, we study parameterized decision problems, which are subsets of $\Sigma^\ast \times \N$, for a finite alphabet $\Sigma$. We assume the reader to be familiar with notions from parameterized complexity, such as XP, $W[1]$,
$W[2]$, \ldots, $W[P]$ (see e.g.  \cite{DowneyF99}).

The class XNLP (denoted $N[f~ \text{poly}, f \log]$ by \cite{ElberfeldST15}) consists of the parameterized decision problems that can be solved by a non-deterministic algorithm that simultaneously uses at most $f(k) n^c$ time and at most 
$f(k) \log n$ space, on an input $(x,k)$, where $x$ can be denoted with $n$ bits, $f$ a computable function, and $c$ a constant. We assume that functions $f$ of the parameter in time and resource bounds are computable --- this is called {\em strongly uniform} by
Downey and Fellows~\cite{DowneyF99}.

More information about the complexity class XNLP can be found in \cite{XNLP-comp}.

\subsection{Reductions}
\label{subsec:reductions}
In the remainder of the paper, unless stated otherwise, completeness for XNLP is with respect
to pl-reductions, which are defined below. The definitions
are based upon the formulations in \cite{ElberfeldST15}. 
\begin{itemize}
    \item A {\em parameterized reduction} from a parameterized problem $Q_1 \subseteq \Sigma_1^\ast \times \N$ to a parameterized problem $Q_2 \subseteq \Sigma_2^\ast \times \N$ is a function
    $f : \Sigma_1^\ast \times \N \rightarrow \Sigma_2^\ast \times \N$, such that the following holds.
    \begin{enumerate}
        \item For all $(x,k) \in \Sigma_1^{\ast} \times \N$, $(x,k)\in Q_1$ if and only if $f((x,k)) \in Q_2$.
        \item There is a computable function $g$, such that for all $(x,k) \in \Sigma_1^\ast \times \N$, if $f((x,k)) = (y,k')$, then $k' \leq g(k)$.
    \end{enumerate}
    \item A {\em parameterized logspace reduction} or {\em pl-reduction} is a parameterized reduction
    for which there is an algorithm that computes $f((x,k))$ in space $O(g(k) + \log n)$, with $g$ a computable
    function and $n=|x|$ the number of bits to denote $x$. 
\end{itemize}

\subsection{Pathwidth, linear clique-width, and linear mim-width}
A \textit{path decomposition} of a graph $G=(V,E)$ is a sequence $(X_1, X_2, \ldots, X_r)$ of subsets of $V$  with 
the following properties.
\begin{enumerate}
    \item $\bigcup_{1\leq i\leq r} X_i = V$.
    \item For all $\{v,w\}\in E$, there is an $i\in I$ with $v,w\in X_i$.
    \item For all $1\leq i_0<i_1<i_2 \leq r$, $X_{i_0}\cap X_{i_2} \subseteq X_{i_1}$.
\end{enumerate}
The {\em width} of a path decomposition $(X_1, X_2, \ldots, X_r)$ equals $\max_{1\leq i\leq r} |X_i|-1$, and the {\em pathwidth} $\pw$ of a graph $G$ is the minimum width of a path decomposition of $G$.

A \emph{$k$-labeled} graph is a graph $G=(V,E)$ together with a labeling function $\Gamma : V(G) \to [k]$.
A \emph{$k$-expression} constructs a $k$-labeled graph by the means of the following operations:
\begin{enumerate}
    \item \emph{Vertex creation}: $i(v)$ is the $k$-labeled graph consisting of a single vertex $v$ which is assigned label $i$.
    \item \emph{Disjoint union}: $H \oplus G$ is the disjoint union of $k$-labeled graphs $H$ and $G$
    \item \emph{Join}: $\eta_{i \times j}(G)$ is the $k$-labeled graph obtained by adding all possible edges between vertices with label $i$ and vertices with label $j$ to $G$.
    \item \emph{Renaming label}: $\rho_{i \to j}(G)$ is the $k$-labeled graph obtained by assigning label $j$ to all vertices labelled $i$ in $G$.
\end{enumerate}
A \emph{linear} $k$-expression is a $k$-expression with the additional condition that one of the arguments of the disjoint union operation needs to be a graph consisting of a single vertex.
The \emph{clique-width} $\cw(G)$(resp. \emph{linear clique-width} $\lcw(G)$) of a graph $G$ is the minimal $k$ such that $G$ can be constructed by a $k$-expression (resp. linear $k$-expression) with any labeling.

For a graph $G = (V, E)$ and $A, B \subseteq V$ with $A \cap B = \emptyset$, 
we let $G[A, B]$ be the bipartite subgraph of $G$ with vertices $A \cup B$ and 
edges $\{ab \mid ab \in E, a \in A, b \in B\}$.
We let $\cutmim_G(A, B)$ be the size of a maximum induced matching in $G[A, B]$ and 
$\mimval_G(A) = \cutmim_G(A, V \setminus A)$.
Here, an induced matching $M \subseteq E$ is a matching such that 
there are no additional edges between the endpoints of $M$ in the graph in question.
The \emph{mim-width} of a linear order $v_1, \ldots, v_n$ of $V$ is the maximum, over all $i$, of 
$\mimval_G(\{v_1, \ldots, v_i\})$.
The \emph{linear mim-width} of $G$ is the minimum mim-width over all linear orders of $V$.

\subsection{Chained variants of \textsc{Satisfiability} and \textsc{Multicolored Clique}}

In \cite{XNLP-comp}, the following problems were introduced, and shown to be XNLP-complete.

\defparaproblem{{\sc Chained Positive CNF-SAT}}{$r$ sets of Boolean variables $X_1, X_2, \ldots X_r$, each of size $q$; an integer $k\in \N$; Boolean formula $\phi$, which is in conjunctive normal form and an expression on $2q$ variables, using only positive literals; for each $i$,
a partition of $X_i$ into $X_{i,1}, \ldots, X_{i,k}$ such that $\forall j,j' \in [k],|X_{i,j}|=|X_{i,j'}|$. }{$k$.}{Is it possible to satisfy the formula 
\[ \bigwedge_{1\leq i\leq r-1} \phi(X_i,X_{i+1}) \]
by setting from each set $X_{i,j}$ exactly $1$ variable to true and all others to false?}

\defparaproblem
    {\textsc{Chained Multicolored Clique}}
    {Graph $G$, partition of $V(G)$ into $V_1, \ldots, V_r$, such that for each edge $uv \in E(G)$ with $u \in V_i$ and $v \in V_j$, $|i-j| \le 1$, function $f \colon V(G) \to [k]$.}
    {$k$.}
    {Is there a set $W \subseteq V(G)$ such that for all $i \in [r-1]$, $W \cap (V_i \cup V_{i+1})$ is a clique, and for each $i \in [r]$ and $j \in [k]$, there is a vertex $v \in W \cap V_i$ with $f(v) = j$?}
    
The \textsc{Chained Multicolored Independent Set} problem is defined analogously, with the only difference that the solution $W$ is required to be an independent set.
\begin{theorem}[Bodlaender et al.~\cite{XNLP-comp}]\label{thm:chained:MCC:MCI}
\textsc{Chained Positive CNF-SAT},  \textsc{Chained Multicolored Clique} and \textsc{Chained Multicolored Independent Set} are XNLP-complete.
\end{theorem}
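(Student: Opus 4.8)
The plan is to establish membership in XNLP for all three problems by one uniform ``streaming'' argument, and to establish XNLP-hardness by reducing from a canonical XNLP-complete problem, namely the acceptance problem for nondeterministic Turing machines that use $f(k)\log n$ work space and $f(k)n^c$ time (the cellular-automaton acceptance problem of Elberfeld et al.~\cite{ElberfeldST15} would serve equally well); the entire point of the ``chained'' formulation is to mirror the passage from one time step to the next. For \emph{membership}, note first that in each problem a solution may be assumed to select \emph{exactly} one variable from each block, resp.\ exactly one vertex of each colour from each part $V_i$: deleting all but one chosen variable per block (resp.\ vertex per colour) from a solution leaves a solution, since a satisfying assignment stays satisfying when some true literals are retained and the others made false, and any subset of a clique (resp.\ of an independent set) is again a clique (resp.\ an independent set). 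A nondeterministic algorithm then sweeps $i = 1, \dots, r$ keeping in memory only the selection for the current index $i$ --- that is $k$ pointers of $O(\log n)$ bits, so $O(k\log n)$ bits in total; at step $i$ it guesses the selection for $i+1$ and checks the local constraint (that $\phi(X_i,X_{i+1})$ is satisfied, resp.\ that the at most $2k$ selected vertices in $V_i\cup V_{i+1}$ induce a clique, resp.\ an independent set) by a logspace scan of the read-only input. This runs in polynomial time and $O(k\log n)$ space, so all three problems are in XNLP.

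For hardness of \textsc{Chained Positive CNF-SAT}, take an instance of the Turing machine problem with machine $M$ using $s = f(k)\log n$ cells and let link $X_i$ encode the configuration of $M$ after $i-1$ steps: one block stores the pair (head position, state), and the remaining blocks hold the tape, packed $\Theta(\log n)$ cells per block, so that each block has a polynomially bounded value range and only $O(f(k))$ blocks are used. The single formula $\phi$ --- the same in every link, reflecting the time-invariance of $M$ --- asserts that the configuration coded by $X_{i+1}$ is a valid one-step successor of the one coded by $X_i$. That $\phi$ can be made \emph{positive} rests on the observation that, under the ``exactly one true per block'' semantics, a positive clause can forbid any single joint value of a bounded number of blocks: e.g.\ ``$A\neq a \vee B\neq b$'' is the positive clause consisting of all variables of block $A$ other than the one coding $a$ together with all variables of block $B$ other than the one coding $b$. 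Hence every bounded-arity constraint over block-values is a conjunction of polynomially many positive clauses; and with the tape packed as above, each new block depends on only $O(1)$ old blocks (its own old content and the head/state block), so the successor relation is exactly such a bounded-arity constraint. The boundary is handled by the \emph{per-link} partitions the problem allows: for $X_1$ we use the partition into singletons that forces the initial configuration encoding the input, and --- after fitting $M$ with a clean-up phase ending in a unique accepting configuration which then loops forever --- we force that configuration in $X_r$; equal block sizes within a link are obtained by padding each block with copies of admissible values. The output is uniform and produced by a pl-reduction with parameter $O(f(k))$.

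For hardness of \CMClique\ and \CMISet\ we repeat this simulation, now letting the colours of part $V_i$ be the $O(f(k))$ blocks and the colour-$j$ vertices of $V_i$ be the values of block $j$; leaving two same-coloured vertices of a part non-adjacent forces a clique to pick exactly one vertex per colour, and forcing $V_1,V_r$ to single vertices per colour handles the boundary. The one extra point is that a clique encodes only \emph{pairwise} compatibility, whereas the successor relation has a few arity-$3$ constraints (a new tape block depends on the old block of the same index and on the head/state block); we add $O(f(k))$ auxiliary colours, one per tape block, whose value records the relevant pair of old blocks, after which every constraint is binary and realisable by edges. Finally \CMISet\ reduces from \CMClique: a set is a clique inside a window $V_i\cup V_{i+1}$ if and only if it is an independent set in the graph obtained by complementing all edges inside every window, so --- having arranged, as above, that same-coloured vertices of a part are non-adjacent and hence become adjacent, so that an independent set still selects exactly one vertex per colour --- complementing windows turns a \CMClique\ instance into an equivalent \CMISet\ instance.

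The main obstacle is to reconcile a parameter bounded by a function of $k$ with a transition relation expressed in a very restricted syntax. The naive encodings fail: putting the whole $f(k)\log n$-bit tape into one block would need $n^{\Theta(f(k))}$ values, while converting a general positive CNF into a $2$-CSP (to fit the clique model) would spend one fresh block per clause and blow up the number of blocks. The $\Theta(\log n)$-cell packing resolves the first issue and makes the transition relation bounded-arity, and the pair-valued auxiliary colours resolve the second; a further delicate point that must be checked is that the block size be chosen polynomially bounded (so the reduction stays pl-computable) yet large enough that a whole configuration still fits into $O(f(k))$ blocks --- a choice such as a fixed polynomial in $\max(n,2^{f(k)})$ does both. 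The remaining ingredients --- the padding, the clean-up phase, the forced first and last links, and verifying pl-computability of the highly uniform output --- are routine.
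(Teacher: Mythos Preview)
This theorem is not proved in the present paper at all: it is quoted from \cite{XNLP-comp} and used as a black box. Consequently there is no ``paper's own proof'' to compare your attempt against. That said, your outline follows the natural route one expects \cite{XNLP-comp} to take --- membership by streaming over the chain with $O(k\log n)$ bits of state, and hardness by encoding a logspace-bounded nondeterministic computation as a sequence of configurations with a local successor relation --- so at the level of strategy you are on target.

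There is, however, a genuine gap in your handling of the boundary conditions for \textsc{Chained Positive CNF-SAT}. You propose to force the initial (resp.\ final) configuration by using ``the partition into singletons'' for $X_1$ (resp.\ $X_r$). But the number of blocks is the fixed parameter $k$ for \emph{every} link; partitioning $X_1$ into singletons would require $k=q$, destroying the parameter bound. Your ``padding with copies of admissible values'' does not repair this: the $q$ variables of a link are atomic and distinct, and a partition into $k$ equal blocks cannot by itself restrict which variable of a block is selected. The standard and workable fix is to add one extra block per link carrying a timestamp in $[r]$ (polynomially many values, so the block size stays polynomial and the number of blocks stays $O(f(k))$), and to put into $\phi$ positive clauses that (i) force the timestamp to increment across consecutive links and (ii) force the initial configuration when the timestamp equals $1$ and the accepting one when it equals $r$. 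All of these are bounded-arity constraints over block values, hence expressible as positive CNF exactly as you argue elsewhere.

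A second, smaller gap concerns your reduction \CMClique\ $\to$ \CMISet\ by ``complementing all edges inside every window $V_i\cup V_{i+1}$''. For $1<i<r$, a pair of vertices inside $V_i$ lies in two consecutive windows; complementing window by window flips such pairs twice and leaves them unchanged, so same-coloured vertices in an interior part remain non-adjacent, contradicting the property you rely on in the next sentence. You need to specify a single global complementation within the ``chained'' edge set (pairs within a window) rather than an iterated per-window flip; once stated that way, the argument goes through.
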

\section{Problems parameterized by linear clique-width}\label{sec:lcw}
In this section we prove XNLP-completeness for several problems 
parameterized by linear clique-width.
In \cref{sec:maxcut}, we consider the \textsc{Max Cut} problem,
in \cref{sec:col} the \textsc{Coloring} problem, and
in \cref{sec:maxris}, the \textsc{Maximum Induced Regular Subgraph} problem.

\subsection{Max Cut}
\label{sec:maxcut}
In this section, we consider the \textsc{Max Cut} problem, with the linear clique-width
as parameter, and show it to be XNLP-complete.
Our result is based upon the XNLP-hardness result for a problem, called
\textsc{Circulating Orientation}, with pathwidth as parameter. Borrowing from
terminology from flows in graphs, we say that a directed graph $G=(V,A)$ with for each
arc $a\in A$ a weight $w(a)\in {\bf N}$, is a \emph{circulation}, if for each
vertex $v$, the total weight of all incoming arcs at $v$ equals the total weight of
all arcs outgoing from $v$. We reduce from the following problem.

\defparaproblem{\textsc{Circulating Orientation}}{An undirected graph $G=(V,E)$ with
a path decomposition of $G$ of width $\ell$, an edge weight function $w: E \rightarrow \N$, given in unary notation.}{$\ell$.}{Is there an orientation of $G$ that is a circulation?}

\begin{theorem}[Bodlaender et al.~\cite{BodlaenderCW22}]
{\sc Circulating Orientation} is XNLP-complete.
\end{theorem}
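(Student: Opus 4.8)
The plan for membership in XNLP is a standard left-to-right sweep of the given path decomposition. Since the weights are encoded in unary, the total weight $W=\sum_{e\in E}w(e)$ is bounded by the input length and hence polynomial in $n$; thus any partial balance at a vertex lies in $[-W,W]$ and can be stored in $O(\log n)$ bits. The nondeterministic algorithm would process the bags $X_1,\dots,X_r$ in order, maintaining for each vertex $v$ currently in the bag a running value $b(v)=$ (weight of arcs already oriented into $v$) $-$ (weight of arcs already oriented out of $v$), where each edge $\{u,v\}$ is oriented nondeterministically at the first bag containing both its endpoints (which exists and precedes the moment either endpoint is forgotten, by the properties of a path decomposition). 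When a vertex is forgotten, the algorithm verifies $b(v)=0$, and it accepts iff all such checks pass. At any time it stores at most $\ell+1$ balances, so the space is $O(\ell\log n)$ and the time is polynomial, placing the problem in XNLP.

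\textbf{Hardness: primitives.} For hardness I would give a pl-reduction from \textsc{Chained Positive CNF-SAT}, which is XNLP-complete by \cref{thm:chained:MCC:MCI}. The orientation of an edge is a natural Boolean variable, and the construction hinges on how vertex-balance constraints encode logic. Two primitives drive it. First, a degree-two vertex whose two incident edges carry equal weight is forced by $b(\cdot)=0$ to have exactly one incident edge oriented inward and one outward; chaining such vertices turns an orientation into a ``flow direction'' that propagates a truth value along a path, and attaching several equal-weight edges balanced against a single heavier edge yields copy and fan-out gadgets. Second, at a suitably weighted hub vertex the equation (in-weight) $=$ (out-weight) becomes a linear constraint over the orientations of the incident edges; by choosing the weights appropriately one can make $b(\cdot)=0$ equivalent to ``exactly one of the selection edges of the group $X_{i,j}$ points inward'' and, via auxiliary hubs, to satisfaction of the clauses of $\phi$.

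\textbf{Hardness: global layout and pathwidth.} The global layout would mirror the chain structure of the instance: one block per index $i\in[r]$, containing the selection gadgets for the variables of $X_i$ together with the gadgets realizing the clauses of $\phi(X_i,X_{i+1})$, with the blocks placed consecutively. Since each block interacts only with its neighbours through the shared variable sets $X_i$ and $X_{i+1}$, a path decomposition is obtained by sliding a window over consecutive blocks, and its width is bounded by a function of $k$ (together with the fixed number of variables of $\phi$), independent of $r$. The weights used in the gadgets are small integers, so providing them in unary keeps the total weight polynomial, and the entire construction is computable in logarithmic space, giving a genuine pl-reduction.

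\textbf{Main obstacle.} The hard part will be the gadget design rather than the bookkeeping. Balance constraints are symmetric and additive, so it is easy to realize ``exactly half of the incident weight points inward'' but delicate to force genuinely Boolean behaviour, such as ``exactly one selected'' or an \textsc{Or} of literals, using only such equalities while simultaneously (a) keeping every gadget of bounded pathwidth, (b) keeping all weights polynomial so the unary encoding remains legitimate, and (c) ruling out unintended balanced orientations that do not correspond to any consistent assignment. I expect the bulk of the effort to lie in choosing the hub weights so that the only balanced orientations are exactly those encoding valid satisfying assignments, and in verifying the pathwidth bound gadget by gadget.
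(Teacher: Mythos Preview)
The paper does not prove this theorem; it is quoted verbatim from \cite{BodlaenderCW22} and used as a black box (as the source problem for the reductions to \textsc{Max Cut} and to \textsc{Capacitated Red-Blue Dominating Set}). There is therefore nothing in this paper to compare your proposal against.

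Assessing your proposal on its own: the membership argument is the standard sweep and is correct. The hardness part, however, is only a plan, and it contains a concrete slip. You claim the resulting path decomposition has width ``bounded by a function of $k$ (together with the fixed number of variables of $\phi$)''. But $\phi$ is a formula on $2q$ variables, where $q=|X_i|$ is part of the input and is \emph{not} bounded by any function of the parameter $k$; only the number $k$ of groups $X_{i,1},\dots,X_{i,k}$ within each $X_i$ is the parameter. So getting pathwidth $f(k)$ requires the gadgetry inside each block to be sweepable with $f(k)$ vertices even though a block involves $q$ variables and arbitrarily many clauses; this is typically achieved by encoding the choice within each $X_{i,j}$ at a single hub and processing clauses one by one, but your phrasing suggests you have not accounted for this. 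Beyond that, you correctly identify the real difficulty---forcing genuinely Boolean behaviour (exactly-one selection, clause \textsc{Or}) via purely additive balance constraints while excluding spurious balanced orientations---but you do not resolve it. As written this is an outline, not a proof.
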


\maxcut*
\begin{proof}
We first show membership in XNLP.
The main idea is to turn the existing dynamic programming that solves the problem given a $k$-expression of
an $n$-vertex graph of linear clique-width $k$ into a non-deterministic algorithm, by guessing an element from a table
instead of building full tables. 
For each vertex creation, we guess on which side of the partition the vertex is. We maintain the following certificate: for each label, the number of vertices on each side of the bipartition, and the number of edges of the current expression that were in the cut. Since there are at most $k$ labels and the size of the cut is bounded by the number of edges, this certificate uses only $O(k\log n)$ bits.

To show hardness for XNLP, we reduce from {\sc Circulating Orientation} with pathwidth as parameter.
Suppose we have an instance for {\sc Circulating Orientation}: an undirected
graph $G$ with edge weight function $w$.
For each vertex $v$, write $D(v)$ as the total weight of all edges incident to $v$.

We build a new, undirected graph $H=(V_H,E_H)$ as follows. For each vertex $v\in V$, each
edge $e$ with $v$ as one of its endpoints, and each integer $i \in [1,w(e)]$, we create a
vertex $x_{v,e,i}$. Two distinct vertices $x_{v,e,i}$ and $x_{w,e',j}$ are adjacent if and only
if $v=w$ or $e=e'$. In other words: for each vertex $v\in V$, we have a clique with $D(v)$ vertices, which
consists of all vertices of the form $x_{v,\cdot,\cdot}$, that we call {\em the clique of $v$}.
For each edge $e = \{v,w\} \in E$, we have a clique with $2 w(e)$ vertices,
namely all vertices of the form $x_{v,e,\cdot}$ and $x_{w,e,\cdot}$. See Figure~\ref{figure:edgecut} for a
partial example.

\begin{figure}
    \centering
    \includegraphics[scale=2]{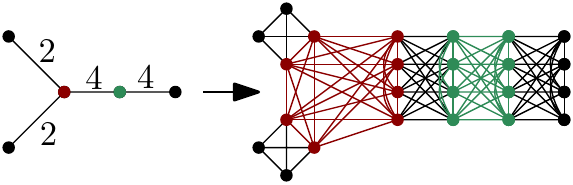}
    \caption{Example for the construction of the hardness proof of {\sc Max Cut} (fragment).}
    \label{figure:edgecut}
\end{figure}

\begin{claim}
$G$ has a circulating orientation if and only if $H$ has a bipartition that cuts
$ \sum_{e\in E} w(e)^2 + \sum_{v\in V} D(v)^2/4$ edges.
\end{claim}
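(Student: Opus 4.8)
The plan is to analyze, for an arbitrary bipartition $(A_1, A_2)$ of $V_H$, how many edges of $H$ it cuts, by decomposing the count into a contribution from the vertex-cliques and a contribution from the edge-cliques, and then argue that each contribution is maximized exactly when the bipartition "respects" a valid orientation. Concretely, for a vertex $v$ with clique $C_v$ on $D(v)$ vertices, if $a$ of them lie in $A_1$ then that clique contributes $a(D(v)-a)$ cut edges, which is at most $\lfloor D(v)^2/4 \rfloor$ and — crucially when $D(v)$ is even, which it is since $D(v)=\sum_{e\ni v} w(e)$ and... wait, $D(v)$ need not be even, but note that in a circulation the in-weight equals the out-weight so $D(v)$ \emph{is} even for every vertex that admits a circulating orientation; I would handle parity by observing that if some $D(v)$ is odd then $G$ has no circulating orientation and also the target $\sum_v D(v)^2/4$ is not an integer bound the honest way, so I should instead state the target with the understanding that the instance is a yes-instance only when all $D(v)$ are even, or simply phrase the cut target using $\lfloor D(v)^2/4\rfloor$ consistently. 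For an edge $e=\{v,w\}$ with clique on $2w(e)$ vertices, the best possible split is $w(e)$ on each side, contributing $w(e)^2$; any other split is strictly worse.

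The key structural observation I would prove is: a bipartition achieves the target $\sum_e w(e)^2 + \sum_v D(v)^2/4$ if and only if (i) for every edge $e=\{v,w\}$, exactly $w(e)$ of the $2w(e)$ vertices $x_{v,e,\cdot}, x_{w,e,\cdot}$ lie in each part, and (ii) for every vertex $v$, exactly $D(v)/2$ vertices of its clique lie in each part. I would then show that condition (i) together with the construction forces, for each edge $e=\{v,w\}$, that the vertices $x_{v,e,i}$ all lie in one part and the vertices $x_{w,e,j}$ all lie in the other part — because $x_{v,e,i}$ and $x_{v,e,j}$ are adjacent (same vertex $v$) and also $x_{v,e,i}$ and $x_{w,e,j}$ are adjacent (same edge $e$), so the $2w(e)$ vertices of the edge-clique form a clique, whose only balanced cut is the one that puts each "half" $x_{v,e,\cdot}$ / $x_{w,e,\cdot}$ wholly on one side. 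This gives an orientation of $G$: direct $e$ from $v$ to $w$ if the $x_{v,e,\cdot}$ end sits in $A_1$, say. Then condition (ii) says that within the clique of $v$, exactly half the vertices sit in $A_1$; since each incident edge $e$ contributes a block of $w(e)$ vertices of $C_v$ that is monochromatic (all in $A_1$ or all in $A_2$) and lands entirely in $A_1$ precisely when $e$ is oriented out of $v$ (by our convention), condition (ii) is equivalent to: the total weight of edges oriented out of $v$ equals $D(v)/2$, i.e.\ equals the weight oriented into $v$ — which is exactly the circulation condition. Conversely, any circulating orientation yields a bipartition satisfying (i) and (ii), hence meeting the target.

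I would also double-check the easy direction of each inequality — that $a(D(v)-a)\le D(v)^2/4$ with equality iff $a=D(v)/2$, and that in a clique on $2m$ vertices a split into $b$ and $2m-b$ cuts $b(2m-b)\le m^2$ edges with equality iff $b=m$ — and observe these together give the global upper bound $\sum_e w(e)^2 + \sum_v D(v)^2/4$ on \emph{any} cut of $H$, so that "achieves the target" genuinely means "maximizes, with all local optima tight simultaneously." The main obstacle I anticipate is the bookkeeping in step (ii): one must be careful that the vertex-clique $C_v$ and the various edge-cliques through $v$ share exactly the vertices $x_{v,e,\cdot}$, that these shared blocks partition $C_v$, and that the "monochromatic block" property transferred from (i) lines up with the orientation convention so that balancedness of $C_v$'s cut is literally the flow-conservation equation at $v$; a small figure (as referenced) and explicit notation for the block decomposition of $C_v$ should make this transparent. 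The pathwidth bound on $H$ needed elsewhere in the membership/hardness argument is not part of this claim, so I would not address it here.
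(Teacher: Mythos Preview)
Your decomposition has a genuine flaw: the vertex-cliques $C_v$ and the edge-cliques $K_e$ do \emph{not} partition the edge set of $H$. For every pair $(v,e)$ with $v$ an endpoint of $e$, the $\binom{w(e)}{2}$ edges inside the block $\{x_{v,e,1},\dots,x_{v,e,w(e)}\}$ lie in both $C_v$ and $K_e$. Consequently your ``if and only if'' characterisation via (i) edge-cliques balanced and (ii) vertex-cliques balanced is false. A concrete counterexample: take $G$ a $4$-cycle with all edge weights $2$, and put every vertex $x_{\cdot,\cdot,1}$ into $A_1$ and every $x_{\cdot,\cdot,2}$ into $A_2$. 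Every vertex-clique and every edge-clique is split $2$--$2$, so (i) and (ii) hold, yet the actual cut has size $24$, not the target $32$. The related claim that ``the only balanced cut of the $2w(e)$-clique puts each half $x_{v,e,\cdot}$, $x_{w,e,\cdot}$ wholly on one side'' is simply wrong for $w(e)>1$: a clique on $2m$ vertices has many balanced bipartitions.

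The paper sidesteps this by partitioning the edges of $H$ into Type~1 edges (between $x_{v,e,i}$ and $x_{v,e',j}$, same vertex) and Type~2 edges (between $x_{v,e,i}$ and $x_{w,e,j}$ with $e=\{v,w\}$, different vertex). These two types are disjoint and exhaustive. The key point is that there are \emph{exactly} $w(e)^2$ Type~2 edges per edge $e$, so the contribution $\sum_e w(e)^2$ is achieved only if \emph{all} Type~2 edges are cut --- which immediately forces each block $x_{v,e,\cdot}$ to be monochromatic and yields the orientation. The Type~1 bound $\sum_v D(v)^2/4$ then gives the circulation condition exactly as you describe. Your plan can be repaired by switching to this edge partition; the rest of your argument (balancedness of $C_v$ encoding flow conservation at $v$) then goes through.
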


\begin{claimproof}
Suppose $G$ has a circulating orientation. For each edge $e=\{v,w\}$, if the orientation directs
$v$ to $w$, then add all vertices of the form $x_{v,e,i}$ to $Z_1$ and all vertices of the form
$x_{w,e,i}$ to $Z_2$ ($i\in [1,w(e)]$); otherwise, add all vertices of the form $x_{v,e,i}$ to $Z_2$ and all vertices of the form
$x_{w,e,i}$ to $Z_1$ ($i\in [1,w(e)]$). 

Since we started from a circulating orientation, for each vertex $v$ there are $D(v)/2 \times D(v)/2$ edges of the form $\{x_{v,\cdot,\cdot},x_{v,\cdot,\cdot}\}$ crossing the bipartition. Moreover, there are $w(e)\times w(e)$ edges of the form $\{x_{v,e,\cdot},x_{w,e,\cdot}\}$ crossing the bipartition for each edge $e=\{v,w\}$. We conclude that the bipartition cuts the required number of edges.

\smallskip

Now, suppose we have a partition $Z_1$, $Z_2$ of $V_H$ with $ \sum_{e\in E} w(e)^2 + \sum_{v\in V} D(v)^2/4$ edges between $Z_1$ and $Z_2$.
We distinguish two types of edges in $E_H \cap (Z_1 \times Z_2)$. A Type 1 edge is an edge between
two vertices $x_{v,e,i}$ and $x_{v,e',j}$ (i.e., it is in the clique of a vertex $v$). A Type 2 edge is an edge between two vertices
$x_{v,e,i}$ and $e_{w,e,j}$ for some edge $e= \{v,w\}$. Note that each edge in $H$ is of Type 1 or Type 2 and  that $H$ has precisely $ \sum_{e\in E} w(e)^2$ Type 2 edges.

For each vertex $v\in V$, we consider how many Type 1 edges (those in the clique of $v$) are in $Z_1\times Z_2$.
If we have $\alpha$ vertices in the clique of $v$ that belong to $Z_1$, then $D(v)-\alpha$ vertices
in the clique of $v$ belong to $Z_2$, and thus, in this clique, we cut $\alpha \cdot (D(v)-\alpha)
\leq D(v)^2/4$ edges; the maximum possible is reached when $\alpha=D(v)/2$.

It follows that the number of Type 1 edges that are cut is at most $\sum_{v\in V} D(v)^2/4$.
So, we must cut all Type 2 edges, i.e., for each edge $e=\{v,w\}$, all edges
of the form $\{x_{v,e,i}, x_{w,e,j}\}$ are between a vertex in $Z_1$ and a vertex in $Z_2$. It follows
that we either have that all vertices of the form $x_{v,e,i}$ are in $Z_1$ and all vertices of the
form $x_{w,e,i}$ are in $Z_2$ --- in which case we direct the edge $e$ from $v$ to $w$; or
all vertices of the form $x_{v,e,i}$ are in $Z_2$ and all vertices of the
form $x_{w,e,i}$ are in $Z_1$, and now we direct the edge from $w$ to $v$.

For each vertex $v\in V$, we must have exactly $D(v)/2$ vertices from the clique of $v$ in $Z_1$ and
equally many vertices in $Z_2$; otherwise, we cannot reach the required number of cut edges.
Now, the total weight of all edges that we directed out of $v$ precisely equals the number of
vertices in the clique of $v$ in $Z_1$, and similarly, the total weight of all edges that we directed towards of $v$ precisely equals the number of
vertices in the clique of $v$ in $Z_2$. Both numbers equal $D(v)/2$. As this holds for each $v\in V$,
the orientation defined above is a circulation.
\end{claimproof}

Finally, we show that we can construct a linear clique expression for $H$
given a path decomposition of $G$; the number of colors we use for the clique width construction 
equals the width of the path decomposition plus 4. The construction uses ideas for constructing
clique width constructions for line graphs of graphs of bounded treewidth;
see \cite{GurskiW07a}.

Suppose we have a nice path decomposition, which uses introduce vertex, introduce edge, and forget nodes.
We use $k+1$ active colors --- each active color will correspond to one vertex in the current bag.
We also have an inactive color, which we will denote by the letter $o$. We also use two
temporary colors, which we call $\alpha$ and $\beta$.

We sequentially visit the bags of the path decomposition, in order. Bags correspond to a number
of steps of the construction of $H$, as described below.

If we introduce a vertex, we select a currently unused active color, and say this is the color of that
vertex, and assume it to be used. 

If we introduce an edge $e=\{v,w\}$, we add the vertices $x_{v,e,i}$ one by one, each with the
color $\alpha$. Then, we add the vertices $x_{w,e,i}$ one by one, each with the color $\beta$.
Now, we add all edges between vertices of color $\alpha$ and $\beta$. Now, recolor all
vertices of color $\alpha$ by the color of $v$. Then, recolor the vertices of color $\beta$ by the
color of $w$.

If we forget a vertex $v$, we first add edges between all vertices of the color of $v$ --- at this 
point, these are all vertices in the clique of $v$, thus effectively ensuring that this set of vertices
indeed is a clique. Then, recolor the vertices with the color of $v$ with the inactive color $o$.
Consider the color of $v$ now unused.

One can verify that this indeed constructs precisely $H$, and that the construction can be done with
$f(k)\log n$ additional space. 
\end{proof}

\subsection{Coloring}\label{sec:col}
In this section we consider the \textsc{Coloring} problem parameterized by linear clique-width.
To prove XNLP-hardness, we reduce from the following problem,
which was shown to be XNLP-complete by Bodlaender et al.~\cite{BodlaenderCW22}.

\defparaproblem%
    {\textsc{Minimum Maximum Outdegree}}
    {Undirected weighted graph $G = (V, E, w)$ with weight function $w\colon E \to \Z^+$ given in unary notation,
        integer $r$, a path decomposition of $G$ of width $\ell$.}
    {$\ell$.}
    {Is there an orientation of $G$ such that for each $v \in V$, the total weight of all edges directed out of $v$ is at most $r$?}
%
%
%
%
\begin{lemma}\label{lem:col:cw:membersip}
    \textsc{Coloring} parameterized by linear clique-width is in XNLP.
\end{lemma}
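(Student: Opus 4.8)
The plan is to turn the standard dynamic-programming algorithm for \textsc{Coloring} on graphs given by a linear clique-width expression (the one behind the $n^{O(2^k)}$-time XP algorithm mentioned above) into a non-deterministic algorithm that, instead of building a full table, keeps only the index of a single table entry in memory at a time. Write $q$ for the number of colors in the instance and $k$ for the number of labels of the given linear $k$-expression. We may assume $q \le n$, since otherwise the instance is trivially positive, and we may assume the expression is presented as a sequence of operations $\mathrm{op}_1,\dots,\mathrm{op}_m$ (vertex creations $i(v)$, joins $\eta_{i\times j}$, relabelings $\rho_{i\to j}$) to be applied in this order; this is a routine reformulation of linear clique-width. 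The key observation is that, because colors are interchangeable and all future operations depend only on labels, the only information about a proper partial coloring $\chi$ relevant to extending it is, for each label set $P\subseteq[k]$, the number $n_P\in\{0,\dots,n\}$ of color classes of $\chi$ whose set of occurring labels is exactly $P$. I call $(n_P)_{P\subseteq[k]}$ the \emph{profile vector} of $\chi$; it can be stored in $O(2^k\log n)$ bits, i.e.\ $f(k)\log n$ bits.

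The non-deterministic algorithm processes $\mathrm{op}_1,\dots,\mathrm{op}_m$ in order, maintaining the profile vector (initially all-zero). At a vertex creation $i(v)$ it guesses the color class of $v$: either it opens a fresh color class, incrementing $n_{\{i\}}$; or it picks a set $P$ with $n_P\ge 1$, decrements $n_P$, and increments $n_{P\cup\{i\}}$. Both choices are always legal, since the newly created $v$ has no incident edges (and two vertices that ever share a label are never adjacent, so even joining $v$ to a class whose profile contains $i$ is fine). At a join $\eta_{i\times j}$ the algorithm \emph{rejects} if $n_P\ge 1$ for some $P$ with $\{i,j\}\subseteq P$, since then the edges added by the join would be monochromatic; otherwise the profile vector is unchanged. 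At a relabeling $\rho_{i\to j}$ it deterministically updates the vector according to the map $P\mapsto(P\setminus\{i\})\cup\{j\}$ for $P\ni i$, aggregating counts that collapse, via one pass over the $2^k$ subsets. Finally it accepts iff it never rejected and $\sum_{P\subseteq[k]}n_P\le q$. Each step runs in time $2^{O(k)}\mathrm{poly}\log n$, so the total running time is $f(k)\,\mathrm{poly}(n)$; the working space is the profile vector plus a pointer into the input, i.e.\ $O(2^k\log n)$; and $O(k)$ non-deterministic bits are used per step. Hence the algorithm witnesses membership in XNLP.

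Correctness follows from the invariant that along any run $n_P$ equals the number of color classes with label set $P$ in the partial coloring induced by the guesses; this holds at the start and is preserved by each transition. If $G$ has a proper $q$-coloring, then the run that assigns each $v$ its color there (opening a fresh class precisely when a color appears for the first time) never rejects and finishes with $\sum_P n_P\le q$. Conversely, from an accepting run one builds a proper $q$-coloring by letting each ``fresh'' guess introduce a new color and each ``add to profile $P$'' guess place $v$ into an arbitrary already-existing class of profile $P$ (which exists because $n_P\ge1$ was checked); the join-checks guarantee that every edge of $G$ joins two distinct color classes, and $\sum_P n_P\le q$ colors are used. I do not foresee a real obstacle: the only point requiring care is to use the profile-count vector as the table index --- tracking, say, the set of colors occurring in each label class would need $\Theta(n)$ bits and break the space bound --- and to check that the join-test is exactly what is needed to preserve properness.
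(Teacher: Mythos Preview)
The proposal is correct and takes essentially the same approach as the paper: both maintain, for each subset $P\subseteq[k]$ of labels, the number of colour classes whose current label-set is exactly $P$, nondeterministically assign each newly created vertex either to a fresh class or to an existing one (updating the counts accordingly), reject at a join $\eta_{i\times j}$ if some class has both $i$ and $j$ in its label-set, and accept if the total number of classes is at most $q$. Your write-up is in fact a bit more explicit about the correctness invariant than the paper's, but the underlying algorithm is the same.
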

\begin{proof}
The proof follows the lines of the XP algorithm of \cite{KoblerR03}.
We keep the following certificate in memory: for each nonempty subset $S$ of the $k$ labels, we store the number of colors that appear exactly in this subset of labels. This requires $2^k \log n$ bits.
\begin{itemize}
    \item For operation $\eta_{i \times j}$, we reject if there exists $S \supseteq \{i,j\}$ with at least one color appearing in exactly $S$.
    \item For operation $\rho_{i \to j}$, we simply update the values in our certificate using the inclusion exclusion principle.
    \item For the disjoint union with an isolated vertex $v$ of label $i$, we nondeterministically choose if its color is already present label $i$ or not. In case it is, we do not update the certificate but reject if $i$ was empty (i.e. there is no colour that is exactly present in any subset of labels including $i$). In case it is not, we nondeterministically choose a subset $S$ of labels that does not include $i$ with a corresponding nonzero entry in the certificate, decrement its entry (if it is not the empty set), and then increment the entry of $S \cup \{i\}$.
\end{itemize}

We start from one of the deepest leaves of the expression and apply the previous computations until we reach its root. We then accept if the total over all entries of the certificate is at most the target value.
The computation is done with $O(2^k\log n)$ bits, in FPT time.
\end{proof}


The following gadget will be central in our hardness proof.


%
\begin{lemma}\label{lem:col:cw:gadget}
    For each positive integer $\alpha$, there is a graph $H_\alpha$ with the following properties:
    \begin{enumerate}
        \item\label{cw:gadget:cw}
        There is a linear $10$-expression constructing $H_\alpha$, constructible in linear time and logarithmic space (in $\alpha$),
        that upon completion only uses three labels; with the corresponding vertex sets being $X$, $Y$, and $Z$.
        \item\label{cw:gadget:colorings}
        There is a $2\alpha$-coloring of $H_\alpha$ that uses $\alpha$ colors on $X$ and $2\alpha$ colors on $Y$; 
        and a $2\alpha$-coloring of $H_\alpha$ that uses $2\alpha$ colors on $X$ and $\alpha$ colors on $Y$; we call such colorings \emph{intended}.
        \item\label{cw:gadget:intended}
        Each coloring of $H_\alpha$ with at most $2\alpha$ colors is intended.
    \end{enumerate}
\end{lemma}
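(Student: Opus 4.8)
The plan is to build $H_\alpha$ out of three "color-pool" blocks, one realizing the label set $X$, one realizing $Y$, and one realizing $Z$, connected by a gadget enforcing that $X$ and $Y$ together see all $2\alpha$ colors while any single one of them may see only $\alpha$. The natural building block is a large complete bipartite graph (or a clique) whose linear clique-width is small: a clique $K_{2\alpha}$ has linear clique-width $3$ and forces its vertices to use $2\alpha$ distinct colors, so it is the obvious anchor for the "$Z$ uses $2\alpha$ colors" behaviour. To make $X$ and $Y$ flexible, I would instead attach to $Z$ two half-blocks so that the union $X\cup Z$ is forced to use $2\alpha$ colors and $Y\cup Z$ is forced to use $2\alpha$ colors, but $X$ and $Y$ are not directly adjacent; then in any $2\alpha$-coloring the $\alpha$ colors "freed up" on $X$ relative to $Z$ and the $\alpha$ colors freed up on $Y$ must be complementary within the palette, which is exactly the see-saw behaviour in item~\ref{cw:gadget:colorings}.

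**The concrete construction.**

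Concretely, I would take $Z$ to be a clique on $2\alpha$ vertices, and let $X$ and $Y$ each be an independent set of $\alpha$ vertices (or whatever size the later reduction needs — the lemma as stated only constrains the number of \emph{colors} on $X,Y$, so I would keep them as independent sets of size $\alpha$). To force $X$ to use $\alpha$ colors distinct from $\alpha$ of the $Z$-colors, add a perfect matching-like join structure; more robustly, make $X\cup Z'$ a clique where $Z'$ is a fresh set of $\alpha$ vertices that is completely joined to $Z$ — iterating this pattern lets me push the "forced palette" around. The linear $10$-expression is built left to right: create the $Z$-clique using $3$ labels (one working label, one "already joined" label, one scratch), then bring in the $X$-part using a couple more labels to perform the partial joins, then the $Y$-part similarly, reusing labels; $10$ labels is a comfortable budget and the expression is clearly computable in linear time and logarithmic space in $\alpha$ since it is just $O(\alpha)$ \texttt{vertex creation}/\texttt{join}/\texttt{rename} operations with counters of size $O(\log\alpha)$. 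At the end I relabel so exactly three labels survive, carrying $X$, $Y$, $Z$ respectively, giving item~\ref{cw:gadget:cw}.

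**Proving the coloring properties.**

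For item~\ref{cw:gadget:colorings} I exhibit the two intended colorings explicitly: fix a palette $[2\alpha]$; color $Z$ bijectively; on $X$ reuse the colors $\{1,\dots,\alpha\}$ (possible because $X$ is only adjacent to the half of $Z$ colored $\{\alpha+1,\dots,2\alpha\}$ in the construction), and on $Y$ use all of $[2\alpha]$, or symmetrically swap the roles of $X$ and $Y$. For item~\ref{cw:gadget:intended}, the key point is a counting/pigeonhole argument: the clique $Z$ forces all $2\alpha$ colors to appear on $Z$, the join structure forces $X$ to avoid one designated half of those colors and $Y$ to avoid the complementary half, and since there are exactly $2\alpha$ colors available, avoiding $\alpha$ of them on $X$ means $X$ sees at most $\alpha$ colors — and symmetrically for $Y$ — so in every $2\alpha$-coloring one of $X,Y$ sees $\le\alpha$ colors; a short argument then shows the other sees exactly $2\alpha$ (it must, to cover what $Z$ demands through the joins), matching the definition of intended. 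The main obstacle is getting the join structure \emph{tight} enough that item~\ref{cw:gadget:intended} really holds — i.e., ruling out "mixed" colorings where $X$ and $Y$ each see, say, $\alpha + \alpha/2$ colors with overlapping deficiencies — while simultaneously keeping the clique-width at $10$ and the expression linear-time constructible; I expect this to require a carefully staged sequence of joins (plausibly building $Z$ as two halves $Z_1,Z_2$ each joined to the other and to exactly one of $X,Y$) and a correspondingly careful bookkeeping argument, rather than any deep idea.
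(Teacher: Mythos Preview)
Your construction has a genuine gap: it cannot satisfy item~\ref{cw:gadget:colorings}, let alone item~\ref{cw:gadget:intended}. You take $X$ and $Y$ to be independent sets of size~$\alpha$; but then $X$ can use at most~$\alpha$ distinct colors, so no $2\alpha$-coloring of $H_\alpha$ can possibly use $2\alpha$ colors on~$X$. Even if you enlarge $X$ and $Y$, the ``split $Z$ into halves $Z_1,Z_2$ and join $X$ to $Z_2$, $Y$ to $Z_1$'' idea produces a \emph{symmetric} constraint: in every $2\alpha$-coloring, $X$ is confined to the $\alpha$ colors appearing on $Z_1$ and $Y$ to the $\alpha$ colors appearing on $Z_2$, so both always use at most~$\alpha$ colors. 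There is no choice point in this structure that would let a coloring pick which of $X,Y$ is the ``small'' side, so the see-saw you want never arises. Your closing remark that ``the other sees exactly $2\alpha$ (it must, to cover what $Z$ demands through the joins)'' is also unsupported: $Z$ is already a clique using all $2\alpha$ colors by itself, and nothing in your joins forces $X$ or $Y$ to realise any particular color.

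What the paper actually does is quite different, and the difference is the whole point. The graph $H_\alpha$ contains a clique $Z^\ast$ on $2\alpha$ vertices, two paths $P^X,P^Y$ of length $2\alpha-1$, and two independent sets $S^X,S^Y$ of size~$\alpha$; the sets $X$ and $Y$ each consist of $2\alpha$ vertices (the odd-indexed path vertices together with the corresponding independent set). Each vertex outside $Z^\ast$ is made adjacent to all but a carefully chosen one or two vertices of $Z^\ast$, effectively giving it a list of size~$1$ or~$2$. The crucial ingredient you are missing is a \emph{propagation mechanism}: for odd $i$, the vertices $p^X_i$ and $p^Y_i$ are adjacent and share the same two-element list, so exactly one of them gets the ``low'' color and the other the ``high'' color, and this choice then cascades along both paths via the size-$2$ lists. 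The result is that there are exactly two canonical $2\alpha$-colorings, and in each of them one of $X,Y$ sees precisely the colors $[\alpha]$ while the other sees all of~$[2\alpha]$. No amount of bipartite joins to halves of a clique will reproduce this binary-choice-plus-propagation behaviour; you need the path-with-lists idea (or something equivalent) to get item~\ref{cw:gadget:intended}.
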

\begin{proof}
    The graph $H_\alpha$ is constructed as follows.
    First, we add a clique $Z^* = \{z_1, \ldots, z_{2\alpha}\}$ on $2\alpha$ vertices to $H_\alpha$.
    Then, we add two paths on $2\alpha-1$ vertices, $P^X$ and $P^Y$, and we denote their vertices in the order in which they appear on the path by $p^X_1, \ldots, p^X_{2\alpha-1}$ and by $p^Y_1, \ldots, p^Y_{2\alpha-1}$, respectively.
    We furthermore add two independent sets $S^X = \{s^X_1, \ldots, x^X_\alpha\}$ and $S^Y = \{s^Y_1, \ldots, s^Y_\alpha\}$ on $\alpha$ vertices each, to $H_\alpha$.
    The partition $(X, Y, Z)$ of $V(H_\alpha)$ required by the statement of the lemma will be given by
    $X = S^X \cup \{p^X_i \mid i \bmod 2 = 1\}$, 
    $Y = S^Y \cup \{p^Y_i \mid i \bmod 2 = 1\}$, and
    $Z = V(H_\alpha) \setminus (X \cup Y)$.
    
    For all odd $i \in [2\alpha-1]$, 
    we connect the vertices $p^X_i$ and $p^Y_i$ with an edge.
    To specify additional adjacencies in $H_\alpha$, we define a function $f \colon V(H_\alpha) \setminus Z^* \to 2^{[2\alpha]}$ as follows,
    for all $U \in \{X, Y\}$:
    \begin{align*}
        \mbox{For all odd } i \in [2\alpha-1]: 
            &~f(p^U_i) = \{i', \alpha + i'\}, \mbox{ where } i' = \lceil i/2 \rceil. \\
        \mbox{For all even } i \in [2\alpha-2]: 
            &~f(p^U_i) = \{i', \alpha + i' + 1\}, \mbox{ where } i' = i/2. \\
        \mbox{For all } i \in [\alpha]:
            &~f(s^U_i) = \{i\}.
    \end{align*}
    To finish the construction of $H_\alpha$, for all $v \in V(H_\alpha) \setminus Z^*$, and all $j \in [2\alpha]$, 
    we add the edge $v z_j$ to $H_\alpha$ if and only if $j \notin f(v)$.
    We illustrate this construction in \cref{fig:col:cw:gadget}.
    \begin{figure}
        \centering
        \includegraphics[width=.95\textwidth]{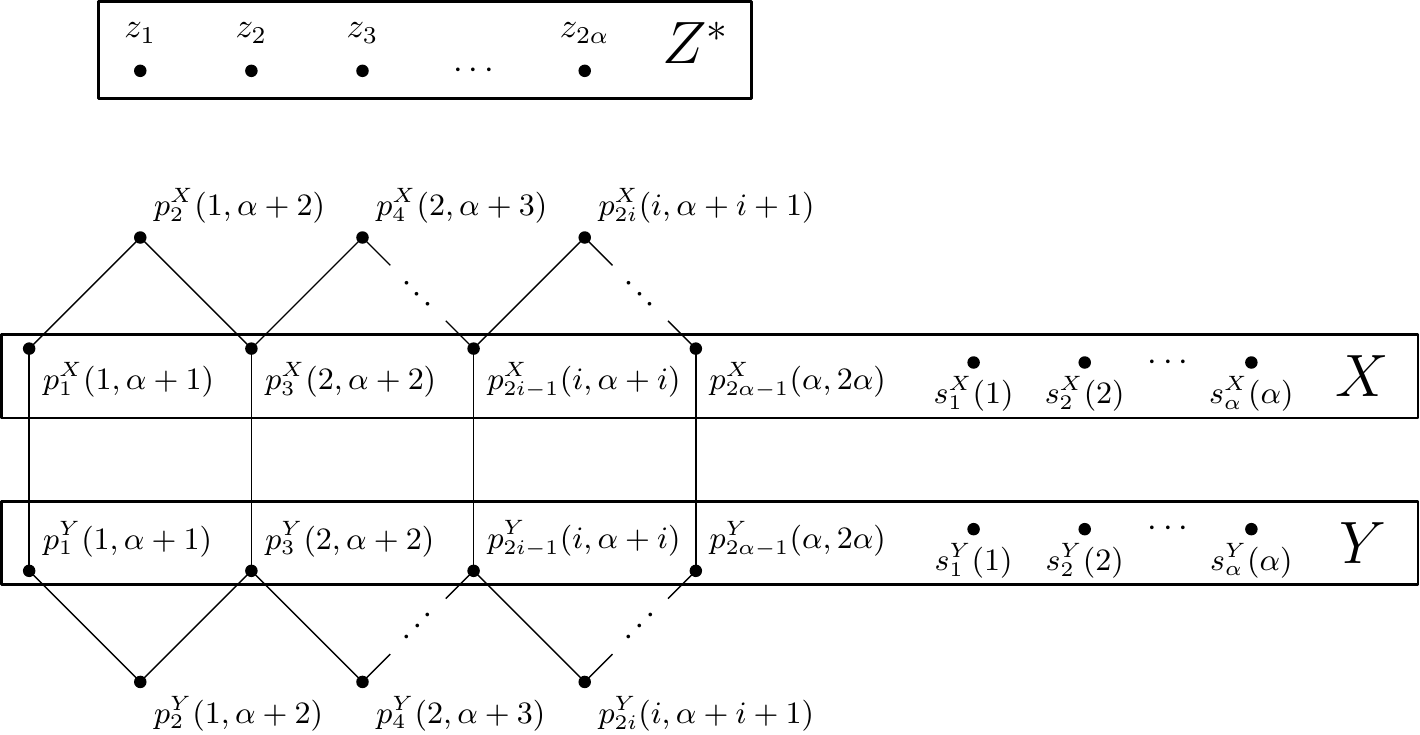}
        \caption{Illustration of the graph $H_\alpha$ from \cref{lem:col:cw:gadget}. All vertices neither in $X$ nor in $Y$ are in $Z$. The set $Z^*$ induces a clique and edges between $Z^*$ and the remainder of the graph are omitted. For all $u \in V(H_\alpha) \setminus Z^*$, there is an edge between $z_j$ and $u$ if $j$ is not on the list shown after vertex $u$.}
        \label{fig:col:cw:gadget}
    \end{figure}
    
    In each proper $2\alpha$-coloring of $H_\alpha$, we may assume that each vertex $z_i$, where $i \in [2\alpha]$, received color $i$.
    We call such $2\alpha$-colorings \emph{canonical}.
    The following observation is immediate from the above construction.
    \begin{observation}\label{obs:cw:canonical}
        In each proper canonical $2\alpha$-coloring of $H_\alpha$,
        each $v \in V(H_\alpha) \setminus Z^*$ receives a color in $f(v)$.
    \end{observation}
    
    We are now able to show that $H_\alpha$ satisfies \cref{cw:gadget:colorings,cw:gadget:intended} of the statement.
    \begin{claim}
        There are precisely two proper canonical $2\alpha$-colorings of $H_\alpha$, 
        one in which colors $[\alpha]$ appear on $X$ and colors $[2\alpha]$ appear on $Y$, and one in which colors $[2\alpha]$ appear on $X$ and colors $[\alpha]$ appear on $Y$.
    \end{claim}
    \begin{claimproof}
        We exhibit the first type of coloring. Throughout, we make use of \cref{obs:cw:canonical} which asserts that the values of the function $f$ can be viewed as lists specifying permissible colors for each vertex not in $Z^*$.
        
        First, we color $p^X_1$ with color $1$, which forces color $\alpha + 1$ on its neighbor $p^X_2$, which forces color $2$ on $p^X_3$. Generally, for odd $i$, $p^X_i$ is forced to receive color $\lceil i/2 \rceil$, and for even $i$, $p^X_i$ is forced to receive color $\alpha + i/2$.
        For each odd $i$, there is an edge from $p^X_i$ to $p^Y_i$, which forces color $\alpha + \lceil i/2 \rceil$ on vertex $p^X_i$.
        This in turn forces color $i/2$ on each vertex $p^Y_i$ for even $i$.
        Lastly, for all $i \in [\alpha]$, 
        the vertices $s^X_i$ and $s^Y_i$ receive their only permissible color, $i$.
        
        The second coloring can be obtained symmetrically, starting by assigning vertex $p^Y_1$ color~$1$.
        Since in each proper canonical $2\alpha$-coloring of $H_\alpha$, either $p^X_1$ or $p^Y_1$ has to receive color $1$, the claim follows.
    \end{claimproof}
    
    It remains to give the linear $10$-expression constructing $H_\alpha$ which satisfies the properties of \cref{cw:gadget:cw}.
    We show a linear $10$-expression constructing a series of graphs $H^i_{\alpha}$, for $i \in [\alpha-1]$, where $H^i_{\alpha}$ is the subgraph of $H_\alpha$ induced on the vertices 
        \[\{z_1, \ldots, z_{i}, 
        z_{\alpha+1}, \ldots z_{\alpha+i}, 
        p^X_1, \ldots, p^X_{2i}, 
        p^Y_1, \ldots, p^Y_{2i}, 
        s^X_1, \ldots, s^X_{i}, 
        s^Y_1, \ldots, s^Y_{i}\}.\]
    
    We use the label set $\{A_0, A_1, A_2, B_0, B_1, B_2, C_0, C_0', C_0'', C_1\}$, and the labeling resulting from the $10$-expression of each $H^i_\alpha$ has the following properties.
    
    \begin{itemize}
        \item The vertices in $V(H^{i}_\alpha) \cap Z^*$ have label $C_0$. 
        \item The vertices in $V(H^{i}_\alpha) \cap Z \setminus V(P^Y) \setminus \{p^X_{2_i}\}$ have label $C_0'$. 
        \item The vertices in $V(H^{i}_\alpha) \cap Z \setminus V(P^X) \setminus \{p^Y_{2i}\}$ have label $C_0''$.
        \item The vertices in $V(H^{i}_\alpha) \cap X$ have label $A_0$.
        \item The vertices in $V(H^{i}_\alpha) \cap Y$ have label $B_0$.
        \item The vertex $p^X_{2i}$ has label $A_2$ and the vertex $p^Y_{2i}$ has label $B_2$.
    \end{itemize}
    
    We now show how how to construct $H^i_\alpha$ from $H^{i-1}_\alpha$ by continuing the linear expression that created $H^{i-1}_\alpha$.
    Note that to construct $H^1_\alpha$, some (parts of some) steps below can be omitted.
    
    \begin{enumerate}
        \item\label{cw:H:alpha:1} 
            Introduce the vertex $p^X_{2i-1}$ with label $A_1$ and $p^Y_{2i-1}$ with label $B_1$.
        \item\label{cw:H:alpha:2}
            Make $A_1$ adjacent to $A_2$ and $B_1$ and make $B_1$ adjacent to $B_2$.
        \item\label{cw:H:alpha:4} 
            Make $A_1$ and $B_1$ adjacent to $C_0$.
        \item\label{cw:H:alpha:5}
            Introduce the vertex $z_{\alpha+i}$ with label $C_1$. 
        \item\label{cw:H:alpha:5:b} 
            Make $C_1$ adjacent to $C_0$, $C_0'$, $C_0''$, $A_0$, and $B_0$.
        \item\label{cw:H:alpha:3} 
            Rename $C_1$ to $C_0$,
            $A_2$ to $C_0'$, and 
            $B_2$ to $C_0''$.
        \item\label{cw:H:alpha:6} 
            Introduce the vertex $z_i$ with label $C_1$. 
        \item\label{cw:H:alpha:6:b}
            Make $C_1$ adjacent to $C_0$, $C_0'$, $C_0''$, $A_0$, and $B_0$.
        \item\label{cw:H:alpha:7} 
            Introduce the vertex $s^X_i$ with label $A_2$ and $s^Y_i$ with label $B_2$.
        \item\label{cw:H:alpha:8} 
            Make $C_0$ adjacent to $A_2$ and $B_2$.
        \item\label{cw:H:alpha:9} 
            Rename $A_2$ to $A_0$ and $B_2$ to $B_0$.
        \item Introduce the vertex $p^X_{2i}$ with label $A_2$ and $p^Y_{2i}$ with label $B_2$.
        \item Make $A_2$ and $B_2$ adjacent to $C_0$.
        \item Make $A_1$ adjacent to $A_2$ and $B_1$ adjacent to $B_2$.
        \item\label{cw:H:alpha:13} 
            Rename $C_1$ to $C_0$, $A_1$ to $A_0$, and $B_1$ to $B_0$.
    \end{enumerate}
    
    To obtain $H_\alpha$ from $H^{\alpha-1}_\alpha$, we first perform \cref{cw:H:alpha:1,cw:H:alpha:2,cw:H:alpha:3,cw:H:alpha:4,cw:H:alpha:5,cw:H:alpha:5:b,cw:H:alpha:6,cw:H:alpha:6:b,cw:H:alpha:7,cw:H:alpha:8,cw:H:alpha:9,cw:H:alpha:13} for $i = \alpha$.
    We finish by relabeling $C_0'$ and $C_0''$ to $C_0$.
    Note that this indeed constructs $H_\alpha$, and that all vertices in $X$ have label $A_0$, all vertices in $Y$ have label $B_0$, and all vertices in $Z$ have label $C_0$.
\end{proof}

We build $H_{\alpha,B}$ from $H_\alpha$ by adding a clique with $B-2\alpha$ vertices, and making these
vertices adjacent to all vertices in $H_\alpha$.

\coloringcw*
\begin{proof}
Membership in XNLP was shown in \cref{lem:col:cw:membersip}.
We prove hardness via a reduction from {\sc Minimum Maximum Outdegree} with pathwidth as parameter.
This problem was shown to be XNLP-complete in \cite{BodlaenderCW22}, and $W[1]$-hard by Szeider~\cite{Szeider11}.


Suppose, we are given a graph $G=(V,E)$ with for each edge a weight $w(e)$, given in unary,
with target value $r$.
Denote for each $v\in V$, the total weight of edges incident to $v$ by
$d(v) = \sum_{\{v,w\}\in E} w(\{v,w\})$. Denote the total weight of all edges by 
$T = \sum_{e\in E} w(e)$.
Build a graph $G'$ as follows. Replace each edge $e$ by the gadget $H_{w(e),B}$.
For an edge $\{v,w\}$, write $V_{v,e}$ for the set $X$ of the gadget and $V_{w,e}$ for the set
$Y$ of the gadget.

To the disjoint union of the gadgets, we add the following edges and vertices. For each vertex
$v$, add edges between vertices in $V_{v,e}$ and $V_{v,e'}$ for all $e\neq e'$, and take
a clique $C_v$ with $2T -d(v) -r$ vertices, and add edges from all vertices
in $C_v$ to all vertices in $V_{v,e}$. Write $V_v$ for the union of all sets $V_{v,e}$ and $C_v$.

\begin{claim}
$G$ has an orientation with maximum weighted outdegree $r$, if and only if $G'$ can be colored with $2T$ colors.
\end{claim}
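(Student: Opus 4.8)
The plan is to show that the two color counts an intended coloring puts on the two sides $X$ and $Y$ of each gadget encode the direction chosen for the corresponding edge, and that the clique $C_v$ together with the complete joins inside $V_v$ turn the global budget of $2T$ colors into the outdegree bound at $v$. Recall that $H_{w(e),B}$ (here $B = 2T$) is $H_{w(e)}$ together with a clique on $B-2w(e)$ vertices joined to all of $H_{w(e)}$. First I record how a proper $2T$-coloring restricts to one gadget: since those $2T-2w(e)$ clique vertices are adjacent to all of $H_{w(e)}$, they occupy $2T-2w(e)$ distinct colors absent from $H_{w(e)}$, so $H_{w(e)}$ uses at most, and hence (as it contains the clique $Z^*$ of size $2w(e)$) exactly, $2w(e)$ colors; by \cref{cw:gadget:intended} of \cref{lem:col:cw:gadget} this coloring of $H_{w(e)}$ is intended, and by \cref{cw:gadget:colorings} exactly one of $V_{v,e}=X$ and $V_{w,e}=Y$ then carries all $2w(e)$ of these colors while the other carries $w(e)$ of them. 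I will say that a $2T$-coloring of $G'$ \emph{selects} the orientation of $G$ that directs each edge $e$ out of the endpoint whose side carries $2w(e)$ colors; for a vertex $v$ and an orientation, $\mathrm{out}(v)$ denotes the total weight of the edges directed out of $v$.

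For the implication ``$G'$ is $2T$-colorable $\Rightarrow$ $G$ has an orientation of maximum weighted outdegree at most $r$'': take a proper $2T$-coloring of $G'$ and the orientation it selects. Fix $v \in V$. In $G'$ the sets $\{V_{v,e}\}_{e\ni v}$ and $C_v$ are pairwise completely joined and $|C_v| = 2T - d(v) - r$, so the color sets of the $V_{v,e}$ are pairwise disjoint and avoid the $2T-d(v)-r$ colors of $C_v$; writing $c(V_{v,e})$ for the number of colors on $V_{v,e}$, this gives $\sum_{e\ni v} c(V_{v,e}) \le d(v)+r$. By the previous paragraph $c(V_{v,e})$ equals $2w(e)$ if $e$ is directed out of $v$ and $w(e)$ otherwise, so the left-hand side equals $d(v)+\mathrm{out}(v)$, whence $\mathrm{out}(v) \le r$. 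Thus the selected orientation is a yes-certificate for \textsc{Minimum Maximum Outdegree}. (This uses $2T-d(v)-r \ge 0$, which holds after the routine preprocessing that replaces the instance by a trivial yes-instance whenever $r \ge \max_v d(v)$: then $r < \max_v d(v) \le T$.)

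For the converse, given an orientation with $\mathrm{out}(v)\le r$ for all $v$, I would build the coloring explicitly. Identify the $2T$ colors with the pairs $(e,i)$, $e\in E$, $i\in[2w(e)]$ --- there are $\sum_e 2w(e) = 2T$ of these --- and set $A_e = \{(e,i) : i \in [2w(e)]\}$. For a gadget on $e=\{v,w\}$ directed out of $v$, color $H_{w(e)}$ by the intended coloring of \cref{cw:gadget:colorings} that places $2w(e)$ colors on $X = V_{v,e}$ and $w(e)$ colors on $Y = V_{w,e}$, transported along a bijection $[2w(e)]\to A_e$ chosen so that $X$ uses $A_e$ and $Y$ uses $\{(e,i) : i \in [w(e)]\}$; color the $2T-2w(e)$ added clique vertices of that gadget injectively with the colors of $[2T]\setminus A_e$. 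Finally, the colors occurring on $\bigcup_{e\ni v} V_{v,e}$ lie in the disjoint union $\bigcup_{e\ni v} A_e$, and there are $d(v)+\mathrm{out}(v)$ of them (an edge out of $v$ contributing $2w(e)$, an edge into $v$ contributing $w(e)$), which is at most $d(v)+r$; so at least $2T-d(v)-r = |C_v|$ colors remain free, and I give these to $C_v$ injectively. A short case check confirms every edge of $G'$ is properly colored: inside a gadget by construction; between $V_{v,e}$ and $V_{v,e'}$ with $e\ne e'$ since their colors lie in the disjoint blocks $A_e$ and $A_{e'}$; between $C_v$ and the $V_{v,e}$ by the choice just made; and the added clique of a gadget is adjacent only within that gadget.

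The only genuinely non-local point --- and the step I expect to be the main obstacle --- is this last construction: each gadget's $2w(e)$ active colors are shared by its two endpoints, so they must be chosen so as never to collide at a vertex lying on several edges. Giving each gadget its own block $A_e$ of size $2w(e)$ and letting the tail of $e$ use all of $A_e$ while the head uses a fixed half of it makes every color set seen at a vertex a subset of a distinct block, which removes the collision; the remaining inequalities $2T \ge 2w(e)$ and $2T-d(v)-r \ge 0$ then reduce to the trivial-instance preprocessing above. This establishes the claimed equivalence; bounding the linear clique-width of $G'$ by $O(\ell)$, where $\ell = \pw(G)$, is done separately, along the lines of the \textsc{Max Cut} construction, by processing a nice path decomposition of $G$.
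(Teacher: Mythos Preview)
Your argument is correct and follows the same approach as the paper: both directions assign each edge $e$ a private block of $2w(e)$ colors (your $A_e$, the paper's $\Gamma_e$) and read off the outdegree bound from the disjointness of color sets on the pairwise-joined family $\{V_{v,e}\}_{e\ni v}\cup\{C_v\}$. You are somewhat more explicit than the paper---you spell out the coloring of the auxiliary $(2T-2w(e))$-clique inside each gadget and the preprocessing ensuring $|C_v|=2T-d(v)-r\ge 0$---but the core ideas are identical.
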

\begin{claimproof}
Suppose we have an orientation of $G$ with maximum weighted outdegree $r$. 
To each edge $e$, we associate a set $\Gamma_e$ of $2w(e)$ colors, such that these sets are disjoint;
we can do this as the total number of colors is sufficiently large.
Now, for each edge $e=\{v,w\}\in E$, we use the colors of $\Gamma_e$ to color the vertices
in the gadget of $e$. If the edge $e$ is
directed from $v$ to $w$, then we color the gadget in such a way that $2w(e)$ colors are used
for $V_{v,e}$, and $w(e)$ colors are used from $V_{w,e}$; and if
the edge is
directed from $w$ to $v$, then we color the gadget in such a way that $w(e)$ colors are used
for $V_{v,e}$, and $2w(e)$ colors are used from $V_{w,e}$.

For each vertex $v$, the total number of colors used for vertices in sets $V_{v,e}$ over all edges $e$ incident to $v$ equals $d(v)$ plus
the total weight of edges directed out of $v$; by assumption, the latter term is at most $r$.
We now can color each vertex in $C_v$ by a color not used in the sets $V_{v,e}$, as we have
$2T-d(v)-r$ colors left.

\smallskip

Suppose we have a coloring of $G'$ with $2T$ colors. Consider an edge $e=\{v,w\}$. The gadget property
tells that we either use at least $w(e)$ colors for vertices in $V_{v,e}$ and at least $2w(e)$ colors for vertices
in $V_{w,e}$, or
 at least $2w(e)$ colors for vertices in $V_{v,e}$ and at least $w(e)$ colors for vertices
in $V_{w,e}$. In the former case, direct $e$ from $w$ to $v$, and in the latter case, direct
$e$ from $v$ to $w$.

We claim that this orientation has an outdegree that is at most $r$. Consider a vertex $v$.
Note that each of the sets $C_v$, and $V_{v,e}$ for all edges $e$ incident to $v$ uses a different
set of colors. The number of colors used over all sets $V_{v,e}$ is at least 
$d(v)$ plus the total weight of all edges directed out of $v$. For the clique $C_v$, we use $2T-d(v)-r$ colors.
Thus, the total weight of
all edges directed out of $v$ is at most $2T - (2T-d(v)-r) - d(v) = r$.
%
\end{claimproof}

\begin{claim}
$G'$ has linear clique-width at most $k + O(1)$, where $k$ is the pathwidth of $G$. We can construct the corresponding expression using $f(k) + O(\log(n))$ space.
\end{claim}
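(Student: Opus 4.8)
The plan is to adapt the standard recipe for turning a path decomposition of a graph into a linear clique-width expression of an associated ``blown-up'' graph, exactly in the spirit of the construction at the end of the \textsc{Max Cut} proof in \cref{sec:maxcut} and of the constructions for line graphs of bounded-treewidth graphs from \cite{GurskiW07a}. First I would replace the given path decomposition of $G$ by a nice one $(X_1,\dots,X_r)$ of the same width $\ell$, using introduce-vertex, introduce-edge, and forget nodes; this conversion is computable in logarithmic space. The label set of the expression to be built consists of $\ell+1$ \emph{active} labels -- one of which is reserved, at any point in the construction, for each vertex currently in the bag -- together with a constant number of \emph{auxiliary} labels: the ten labels of the linear expression of \cref{lem:col:cw:gadget}, one or two extra labels used while attaching the $B-2\alpha$ universal clique vertices of $H_{\alpha,B}$ and while building the cliques $C_v$, and one distinguished \emph{inactive} label $o$ used for parking. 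This yields $\ell+O(1)=k+O(1)$ labels in total.

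The construction maintains the following invariant after processing the prefix of the decomposition ending at a node: the graph built so far is exactly the subgraph of $G'$ induced by the gadgets and clique-vertices introduced so far; for every vertex $v$ currently in the bag, label $\lambda_v$ contains precisely the already-constructed part of $V_v$, namely $C_v$ together with the sets $V_{v,e}$ of the edges $e\ni v$ that have been processed so far; and every other constructed vertex -- the $Z$-part and the universal clique vertices of every completed gadget, and the whole set $V_v$ of every already-forgotten $v$ -- carries the inactive label $o$. A node is processed as follows. At an introduce-vertex node for $v$, pick an active label $\lambda_v$ that is currently empty (one exists since the previous bag had at most $\ell$ vertices) and build the clique $C_v$ under $\lambda_v$ by repeatedly introducing a vertex under a temporary label, joining it to $\lambda_v$, and renaming it to $\lambda_v$. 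At an introduce-edge node for $e=\{v,w\}$, run the linear $10$-expression of \cref{lem:col:cw:gadget} with $\alpha=w(e)$ (and attach the $B-2\alpha$ universal vertices) to build $H_{w(e),B}$ using only the auxiliary labels, so that on completion $V_{v,e}=X$ and $V_{w,e}=Y$ occupy two auxiliary labels and the rest of the gadget occupies the remaining auxiliary labels; then join the label of $X$ to $\lambda_v$ and the label of $Y$ to $\lambda_w$ -- this adds exactly the edges from $V_{v,e}$ to $C_v\cup\bigcup_{e'}V_{v,e'}$ and from $V_{w,e}$ to $C_w\cup\bigcup_{e'}V_{w,e'}$ -- and finally rename the label of $X$ to $\lambda_v$, the label of $Y$ to $\lambda_w$, and all remaining gadget labels to $o$. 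At a forget node for $v$, simply rename $\lambda_v$ to $o$; since the decomposition is nice, all edges of $G$ at $v$ have by then been processed, so $V_v$ is already the intended complete multipartite graph and will receive no further edges, and $\lambda_v$ becomes free for reuse.

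Correctness then reduces to checking that each step preserves the invariant and adds exactly the edges of $G'$. The two key points are that a label $\lambda_v$ is only ever joined to the $X$- or $Y$-label of a freshly built gadget (never directly to another active label or to $o$), so $C_v$ and the sets $V_{v,e}$ end up joined to precisely the vertices prescribed by the definition of $G'$; and that the material parked in $o$ is safe to ignore, because by construction of $G'$ the $Z$-vertices and universal clique vertices of a gadget have no neighbours outside their own gadget, and a forgotten $V_v$ is already complete. The space bound follows because the whole expression is emitted by a single left-to-right streaming pass over the logspace-constructible nice path decomposition: at any moment we only store a pointer into the decomposition, a description of the current bag, the partial map between the $\ell+1$ active labels and the bag vertices (which, by recording slot indices into the bag rather than vertex names, takes $f(\ell)$ bits), and a counter driving the construction of a clique $C_v$ or a gadget sub-expression; this is $f(\ell)+O(\log n)$ bits, just as in the \textsc{Max Cut} construction.

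I expect the main obstacle to be purely bookkeeping rather than conceptual: one has to interleave the ten auxiliary labels of the gadget expression with the active labels without creating spurious joins, keep track of which active label is free, and in particular make sure that the universal clique vertices of $H_{w(e),B}$ have already been moved out of the $X$- and $Y$-labels before those labels are merged into $\lambda_v$ and $\lambda_w$ (so that $C_v,C_w$ and the other $V_{v,e'},V_{w,e'}$ do not accidentally become adjacent to them). No genuinely new idea beyond the \textsc{Max Cut}/line-graph construction is required.
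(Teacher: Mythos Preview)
Your proposal is correct and follows essentially the same approach as the paper: convert to a nice path decomposition with introduce-vertex, introduce-edge, and forget nodes; reserve $\ell+1$ active labels (one per current bag vertex), a constant number of auxiliary labels for building the gadget $H_{\alpha,B}$, and one inactive label; build $C_v$ at introduce-vertex, build the gadget and merge its $X$- and $Y$-labels into $\lambda_v,\lambda_w$ at introduce-edge, and rename $\lambda_v$ to the inactive label at forget. Your invariant and space-bound discussion are more explicit than the paper's, but the construction is the same.
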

\begin{claimproof}
Suppose we have a path decomposition of $G$ of width $k$. Transform it to a nice path decomposition, with
introduce vertex, forget, and introduce edge nodes.

We use the following labels: 10 labels for building gadgets, a label $\gamma_j$ for $j\in [1,k+1]$
a label $\delta$ for vertices that will not receive new neighbors.

We visit the bags of the path decomposition in order. For each vertex in the current bag, we have
a unique number in $[1,k+1]$. We describe how to build the expression for $G'$.

Suppose we have an introduce vertex node $X_i$, that introduces vertex $v$. Let $j$ be the smallest
integer in $[1,k+1]$ not assigned to a vertex in $X_{i-1}$, with $j=1$ when $i=1$. Assign $j$ to $v$.
Build the clique $C_v$, giving this clique label $\gamma_j$.

Suppose we have a forget node, that forgets vertex $v$. Suppose $j$ is assigned to $v$. Now,
recolor $\gamma_j$ to $\delta$.

Suppose we have an introduce edge node, that introduces the edge $\{v,w\}$ with weight $\alpha$.
Build the gadget $H_{\alpha,B}$ with the gadget labels. Now, say that the vertices in $X$ have
color $\epsilon_1$, and the vertices in $Y$ have label $\epsilon_2$. Relabel all other gadget
label to $\delta$ (these vertices do not get additional neighbors).
The vertices in $X$, with gadget label $\epsilon_1$ will form $V_{v,e}$, and
the vertices in $Y$, with gadget label $\epsilon_2$ will form $V_{w,e}$. 
Suppose the index of $v$ is $j_1 \in [1,k+1]$, and the index of $w$ is $j_2\in [1,k+1]$.
Add the edges between label class $\epsilon_1$ and label class $\gamma_{j_1}$.
Add the edges between label class $\epsilon_2$ and label class $\gamma_{j_2}$.
Now, relabel $\epsilon_1$ to $\gamma_{j_1}$, and relabel $\epsilon_2$ to $\gamma_{j_2}$.

One can verify that the resulting graph is $G'$. We used $k+O(1)$ labels, so the linear clique-width
of $G'$ is $k+O(1)$.
%
\end{claimproof}

This finishes the proof of the theorem.
\end{proof}
\subsection{Maximum Regular Induced Subgraph}
\label{sec:maxris}

In this section, we consider the \textsc{Maximum Regular Induced Subgraph}
problem and show that this problem is XNLP-complete with linear
clique-width as parameter.

\regularsubgraph*

\begin{proof}
We first show membership in XNLP.
We keep in memory a certificate for each label class corresponding to the number of vertices in the label class that are chosen in the induced subgraph, along with their current degree. 
If the current degree of vertices in the same label class is not uniform then we can already reject, and we therefore only need to store a single degree per label class. 
The choice of a vertex for our induced subgraph is nondeterministic.
We accept if at the end of the linear clique-width expression, the degree of all non empty label classes is the target degree $D$ and the total amount of chosen vertices is at least $W$.

We prove hardness via a reduction from \textsc{Chained Positive CNF-SAT}.
Given an instance $\phi,p,X_1,\dots,X_r,k,(X_{i,j})_{i \in [r],j \in [k]}$ of \textsc{Chained Positive CNF-SAT}, let $C_1,\dots,C_m$ be the clauses in the Boolean formula $\phi$ and let $C[i]$ be the number of clauses that use variables of $X_i$.
We set $D=\max\left\{2\left(1 + \max_{i,j} |X_{i,j}| \right), 2\left\lceil C[i]/2 \right\rceil \right\}$ and choose $N$ to be a large enough integer, e.g. $N=6krD^3 + AD^3$ where $A$ is the total number of literals over all clauses of $\phi$.

We need the following gadgets to create a graph $G$.

\proofsubparagraph{Degree filling gadgets} The following gadget will be used to increase the degree of the variable choice gadgets. Suppose a pair of vertices $v,v'$ has been given. We add five vertices $w_0,w_1,w_2,w_3,w_4$, a $(D-2)$-clique and a $(D-1)$-clique. 
All vertices in the $(D-2)$-clique are adjacent to $w_0, w_1$ and $w_2$. All vertices in the $(D-1)$-clique are adjacent to $w_3$ and $w_4$. Let $w_0$ be adjacent to $v,v'$, $w_1$ be adjacent to both $w_2$ and $w_3$, and $w_2$ be adjacent to $w_4$. See \cref{fig:deg-fill}.

\proofsubparagraph{Variable choice gadget} For each $X_{i,j}$, we add $|X_{i,j}|-1$ pairs of vertices $v,v'$. We add $D$ degree filling gadgets to each pair of vertices, resulting in a  gadget we denote by $\widehat{X_{i,j}}$.

\proofsubparagraph{Clause gadget} For each clause $C$, we add $2N$ vertices $z_1,\dots,z_{2N}$ and $2N$ independent sets $I_1,\dots,I_{2N}$ each of size $\frac{D}{2}-1$. For $i \in [N]$, we make $z_{2i-1}$ and $z_{2i}$ adjacent to each other, and adjacent to $I_{2i-1}$ and $I_{2i}$. The independent sets are placed in a cycle: for $i \in [1,2N-1]$, $I_{i}$ is adjacent to $I_{i+1}$, and $I_{2N}$ is adjacent to $I_{1}$. 

We denote $z_1,z_2$ by $v^C,u^C$. For each $i \in [2,N]$, we add two vertices $y,y'$ and a $(D-1)$-clique, make $y$ and $y'$ adjacent to the clique, and add edges $yz_{2i-1},y'z_{2i}$. See \cref{fig:clause-gadget}. 
Note that all vertices have degree $D$ except for $v^C$ and $u^C$ which have degree $D-1$.

\proofsubparagraph{Degree constraint gadget} For a single vertex $v$, to allow increasing its degree by 2, we add two vertices $a,b$ adjacent to $v$, to each other and to a $(D-1)$-clique. 

To allow increasing the degree of $v$ by $2\ell$ with $\ell>1$, we add $2\ell$ vertices $a_1,\dots,a_{2\ell}$ and $2\ell$ independent sets $I_1,\dots,I_{2\ell}$ each of size $\frac{D}{2}-1$. For $i \in [\ell]$, $a_{2i-1}$ and $a_{2i}$ are adjacent to each other and are both adjacent to $v$, $I_{2i-1}$ and $I_{2i}$. The independent sets are placed in a cycle: $I_1$ is adjacent to $I_{2\ell}$ and for $i \in [1,2\ell-1]$, $I_{i}$ is adjacent to $I_{i+1}$.

\tikzset{ node/.style = {circle,fill,minimum size=5pt, inner sep=0pt, outer sep=0pt}}
\begin{figure}
    \begin{tikzpicture}
    
        \node[node,label=above:$v$] (v) at (0,0) {};
        \node[node,label=above:$v'$] (v') at (0,1) {};
        \node[node,label=above:$w_0$] (w0) at (1,0.5) {};
        \node[circle,draw] (C1) at (2.5,0.5) {$D-2$};
        \node[node,label=below:$w_1$] (w1) at (4,0) {};
        \node[node,label=above:$w_2$] (w2) at (4,1) {};
        \node[node,label=below:$w_3$] (w3) at (5,0) {};
        \node[node,label=above:$w_4$] (w4) at (5,1) {};
        \node[circle,draw] (C2) at (6.5,0.5) {$D-1$};

        \draw (w0) edge (v) edge (v') edge (C1);
        \draw (w1) edge (C1) edge (w3) edge (w2);
        \draw (w2) edge (C1) edge (w4);
        \draw (C2) edge (w3) edge (w4);

    \end{tikzpicture}   
    \centering
    \caption{Degree filling gadget}
    \label{fig:deg-fill}
\end{figure}
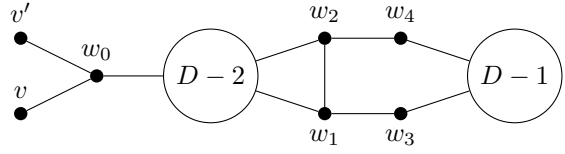

\begin{figure}
    \begin{tikzpicture}
        \node[node,label=above:$v^C$] (vc) at (100:3) {};
        \node[node,label=above:$u^C$] (uc) at (80:3) {};
        \node[circle,draw] (I1) at (105:2) {$I_1$};
        \node[circle,draw] (I2) at (75:2) {$I_2$};

        \node[node] (z3) at (30:3) {};
        \node[node] (z4) at (10:3) {};
        \node[circle,draw] (I3) at (35:2) {$I_1$};
        \node[circle,draw] (I4) at (5:2) {$I_2$};
        \node[node] (y3) at (25:3.5) {};
        \node[node] (y4) at (15:3.5) {};
        \node[circle,draw] (C1) at (20:4.5) {\small $D-1$};

        \node[node] (z2n) at (150:3) {};
        \node[node] (z2n1) at (170:3) {};
        \node[circle,draw] (I2n) at (145:2) {$I_1$};
        \node[circle,draw] (I2n1) at (175:2) {$I_2$};
        \node[node] (y2n) at (155:3.5) {};
        \node[node] (y2n1) at (165:3.5) {};
        \node[circle,draw] (C2) at (160:4.5) {\small $D-1$};

        \node (ld) at (200:2.2) {$\vdots$};
        \node (rd) at (-20:2.2) {$\vdots$};

        \draw (ld) -- (I2n1) -- (I2n) -- (I1) -- (I2) -- (I3) -- (I4) -- (rd);
        \draw (vc) edge (I1) edge (I2) edge (uc);
        \draw (uc) edge (I1) edge (I2);
        \draw (z3) edge (I3) edge (I4) edge (z4);
        \draw (z4) edge (I3) edge (I4);
        \draw (z3) -- (y3) -- (C1) -- (y4) -- (z4);
        \draw (z2n1) edge (I2n1) edge (I2n) edge (z2n);
        \draw (z2n) edge (I2n1) edge (I2n);
        \draw (z2n1) -- (y2n1) -- (C2) -- (y2n) -- (z2n);

    \end{tikzpicture}
    \centering
    \caption{Clause gadget}
    \label{fig:clause-gadget}
\end{figure}

\proofsubparagraph{Variable reading gadget} For each literal $a \in X_{i,j}$ that appears in a clause $C$, we add a vertex $v_a^C$. This is adjacent to the vertices $v_C$ and $u_C$ from the clause gadget and to all pairs of vertices $v,v'$ of the variable gadget $\widehat{X_{i,j}}$. If literal $a$ corresponds to the variable of index $p \in [0,|X_{i,j}|-1]$ in $X_{i,j}$, then we add a degree constraint gadget to allow increasing the degree of $v_a^C$ by $D-2(p+1)$ which is even and at least 2 by definition of $D$. 

\smallskip

Together, the gadgets above constitute graph $G$. We set $W'=(N(2D+1)-D-1)$ (the size of a clause gadget) and set the required size of the $D$-regular induced subgraph that we are looking for to $W=mW'$. We have chosen $N$ sufficiently large such that all clause gadgets need to have at least one vertex included in the subgraph (as $N$ is larger than the size of all non-clause gadgets combined).

\begin{claim}
    $\lcw(G) \leq 2k + \mathcal{O}(1)$
\end{claim}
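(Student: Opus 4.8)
The plan is to exploit the chained structure of the instance: produce a linear $(2k+\mathcal{O}(1))$-expression for $G$ by a single left-to-right sweep over the blocks $X_1,\dots,X_r$, keeping at any moment only the label classes of (at most) two consecutive blocks alive. I would reserve two families of \emph{block labels}, $\gamma_1,\dots,\gamma_k$ and $\gamma_1',\dots,\gamma_k'$, one \emph{dead label} $\delta$ for vertices that will receive no further neighbours, and a constant number of \emph{scratch labels} used only to assemble individual gadgets. Odd blocks $X_i$ will use the labels $\gamma_j$, one for each part $X_{i,j}$, and even blocks will use the $\gamma_j'$.

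The construction order is: first build the variable-choice gadgets of $X_1$ (on the $\gamma_j$); then for $i=1,\dots,r-1$, (a) build the variable-choice gadgets of $X_{i+1}$ on the currently free block-label family, (b) build, one clause at a time, every clause gadget of $\phi(X_i,X_{i+1})$ together with its variable-reading vertices and their degree-constraint gadgets, and (c) recolour all block labels of $X_i$ to $\delta$. The factor $2k$ is genuinely needed: a clause of $\phi(X_i,X_{i+1})$ may mention variables of both $X_i$ and $X_{i+1}$, so its reading vertices must be joined to gadgets $\widehat{X_{i,j}}$ and $\widehat{X_{i+1,j'}}$ while the clause's special vertices $v_C,u_C$ are still live, which forces both block-label families to coexist. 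Conversely, block $X_i$ needs to be live only while $\phi(X_{i-1},X_i)$ and $\phi(X_i,X_{i+1})$ are processed; these two stages are consecutive and never make block $X_{i-1}$ coexist with block $X_{i+1}$, so at most $2k$ block labels are alive at once.

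It then remains to verify that every gadget is assembled with $\mathcal{O}(1)$ scratch labels. For $\widehat{X_{i,j}}$ I would handle its $|X_{i,j}|-1$ pairs $v,v'$ one at a time: give the current pair a fresh scratch label, attach its $D$ degree-filling gadgets (joining only the vertex $w_0$ of each filler to the current pair's scratch label, so distinct pairs never see each other's fillers; the remainder of a filler, including its $(D-2)$- and $(D-1)$-clique, is internal and recoloured to $\delta$ when done), and finally merge the pair's scratch label into $\gamma_j$ (resp.\ $\gamma_j'$). Thus all pairs of $\widehat{X_{i,j}}$ end up carrying the single label $\gamma_j$, so joining a reading vertex to $\gamma_j$ makes it adjacent to all pairs, exactly as the construction demands. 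The clause gadgets and degree-constraint gadgets are cycles of (possibly very large) independent sets decorated with cliques and degree-$2$ attachments; such a cyclic structure has bounded linear clique-width --- build it as a path of blocks with a constant number of labels while keeping the label of the first block alive to close the cycle at the end --- and the sizes of the independent sets and cliques are irrelevant since each is a single label class. While building a clause $C$ I would keep $v_C,u_C$ on a dedicated scratch label until all of $C$'s reading vertices have been attached, then recolour everything but $v_C,u_C$ to $\delta$; each $v_a^C$ is introduced on a scratch label, joined to the $v_C,u_C$ label and to the appropriate block label, equipped with its degree-constraint gadget, and recoloured to $\delta$.

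Finally, the expression is emitted in a single pass over the instance, remembering only which block label is assigned to which part of the current two blocks plus a handful of counters bounded by $n$, so it is produced in $f(k)+\mathcal{O}(\log n)$ space; since this is a valid linear $(2k+\mathcal{O}(1))$-expression, $\lcw(G)\le 2k+\mathcal{O}(1)$. I expect the main obstacle to be not any single computation but the bookkeeping: checking that the degree-filling gadgets of distinct pairs do not interfere, that the cyclic clause and degree-constraint gadgets close up with only $\mathcal{O}(1)$ labels, and --- most importantly --- that no step ever requires a third block's labels, i.e.\ that the window of live block labels really has width exactly two.
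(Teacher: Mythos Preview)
Your proposal is correct and follows essentially the same approach as the paper: reserve $2k$ block labels (one per $\widehat{X_{i,j}}$ for two consecutive values of $i$), one inactive label, and $\mathcal{O}(1)$ scratch labels for assembling gadgets; process blocks left-to-right, build the clause gadgets of $\phi(X_i,X_{i+1})$ while both block-label families are alive, and recolour $X_i$'s labels to inactive once all its clauses are done. The paper's own proof is terser --- it simply asserts that each gadget needs only constantly many fresh labels and that only the pairs $v,v'$ need to retain the block label --- whereas you spell out the pair-by-pair assembly of $\widehat{X_{i,j}}$, the handling of the cyclic clause and degree-constraint gadgets, and the log-space bookkeeping, all of which are accurate elaborations of the same strategy.
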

\begin{claimproof}
    We have an \emph{inactive} label for vertices that are already adjacent to all of their neighbourhood.
    All gadgets can be constructed using a constant amount of fresh labels, and we never construct two gadgets of the same type simultaneously. Each clause only depends on variables from $X_i$ and $X_{i+1}$ for some $i$, from the structure of the considered formula $\phi$. We construct gadgets for increasing $i$. There are $2k$ variable choice gadgets $\widehat{X_{i,j}}$ and $\widehat{X_{i+1,j}}$ (with $j\in [k]$) and we reserve a separate label for each such gadget. We create the variable gadgets, and put all vertices from the degree filling gadgets to inactive (so only the pairs of vertices $v,v'$ keep the label of the gadget). This way, we can keep attaching variable reading gadgets to them, clause by clause.
    Once all clauses containing a literal of $X_{i,j}$ have had their gadget constructed, we can relabel the vertices of $\widehat{X_{i,j}}$ to the inactive label.
\end{claimproof}

\begin{claim}
    If the SAT instance is satisfiable, then there is a $D$-regular induced subgraph of $G$ of size at least $W$.
\end{claim}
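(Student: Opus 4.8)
The plan is to turn a satisfying assignment directly into the vertex set $S$ of a $D$-regular induced subgraph, reading off from the assignment which optional parts of $G$ to include. Fix a satisfying assignment: for every $i\in[r]$ and $j\in[k]$ it sets exactly one variable of $X_{i,j}$ to true; call it $a_{i,j}$ and let $p_{i,j}\in[0,|X_{i,j}|-1]$ be its index. Define $S$ as follows. First, put every clause gadget entirely into $S$. Next, for each clause $C$ choose one literal $a\in C$ that the assignment sets to true (such a literal exists because $C$ is satisfied), and add to $S$ the reading vertex $v_a^C$ together with its degree constraint gadget (added wholly). Then, in each variable choice gadget $\widehat{X_{i,j}}$, include the first $p_{i,j}$ of its $|X_{i,j}|-1$ vertex pairs $(v,v')$, each pair wholly; and, for every such included pair, include exactly $D-s_{i,j}$ of its $D$ attached degree filling gadgets (each wholly), where $s_{i,j}$ is the number of clauses $C$ with $v_{a_{i,j}}^C\in S$. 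Nothing else is placed in $S$.

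The bulk of the proof is then a degree count showing that $G[S]$ is $D$-regular. Every internal vertex of a degree filling gadget or of a degree constraint gadget, and every vertex of a clause gadget other than the two special vertices $v^C,u^C$, already has degree $D$ inside its own gadget; since we always add these gadgets wholly and add no other edges incident to their internal vertices, such vertices have degree exactly $D$ in $G[S]$ — this is a routine check against the gadget definitions. The vertex $v^C$ (and likewise $u^C$) has degree $D-1$ inside the clause gadget and is adjacent, among the reading vertices, only to the ones placed in $S$, of which there is exactly one by construction, so its degree is $D$. A reading vertex $v_a^C\in S$ with $a=a_{i,j}$ of index $p_{i,j}$ has $2$ neighbours in $S$ among $\{v^C,u^C\}$, exactly $2p_{i,j}$ neighbours among the $p_{i,j}$ included pairs of $\widehat{X_{i,j}}$ (it is adjacent to both vertices of each pair), and $D-2(p_{i,j}+1)$ neighbours in its included degree constraint gadget, for a total of $D$. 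Finally, each vertex of an included pair of $\widehat{X_{i,j}}$ is adjacent in $S$ to the vertex $w_0$ of each of its $D-s_{i,j}$ included filling gadgets and to each of the $s_{i,j}$ included reading vertices $v_{a_{i,j}}^C$, and to nothing else, hence has degree $D$.

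It remains to check that $S$ is well-defined and large enough. We need $p_{i,j}\le|X_{i,j}|-1$, which holds by definition of the index; we need $D-2(p_{i,j}+1)\ge 0$ (in fact $\ge 2$), which holds since $2(p_{i,j}+1)\le 2|X_{i,j}|\le D$ by the choice of $D$; and we need $0\le D-s_{i,j}\le D$, where the lower bound holds because every clause counted by $s_{i,j}$ contains $a_{i,j}\in X_i$ and contributes at most once, so $s_{i,j}\le C[i]\le 2\lceil C[i]/2\rceil\le D$. Here we also use that $s_{i,j}$ is the same for every pair of $\widehat{X_{i,j}}$, because every reading vertex attached to that gadget is adjacent to all of its pairs, so a uniform number of filling gadgets per pair suffices. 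For the size bound, $S$ contains all $m$ clause gadgets, each on $W'=N(2D+1)-D-1$ vertices, so $|S|\ge mW'=W$. I expect the \emph{main obstacle} to be precisely this simultaneous bookkeeping: the number of included pairs of $\widehat{X_{i,j}}$ must encode the index $p_{i,j}$ through the degree constraint gadgets hung on the reading vertices, while the number of filling gadgets kept at each included pair must compensate for however many reading vertices land on the shared gadget $\widehat{X_{i,j}}$, and all of this must be consistent with keeping every other vertex at degree exactly $D$.
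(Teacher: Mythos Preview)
Your proof is correct and follows essentially the same approach as the paper: include all clause gadgets, one reading vertex with its degree constraint gadget per clause, $p_{i,j}$ pairs in each $\widehat{X_{i,j}}$, and then top up the pair vertices with degree filling gadgets. You are simply more explicit than the paper about the degree bookkeeping (in particular, naming $s_{i,j}$ and verifying $s_{i,j}\le C[i]\le D$), which the paper handles with the single line ``the current degree of the picked vertices in the variable choice gadgets must be less than $D$''.
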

\begin{claimproof}
If the variable of index $p$ in $X_{i,j}$ is set to true in the SAT instance, we pick $p$ pairs of vertices $v,v'$ in $\widehat{X_{i,j}}$ into the induced subgraph.
We pick all clause gadgets. For each clause, since it is satisfied, it has a satisfied literal $a$. We pick the vertex $v_a^C$ and its degree constraint gadget.
Now all picked vertices have degree $D$ except for the vertices of the variable choice gadgets.
By definition of $D$, the current degree of the picked vertices in the variable choice gadgets must be less than $D$, so we can complete their degree using the degree filling gadgets. 
From picking all clause gadgets, our induced subgraph has size at least $W$.
\end{claimproof}

\begin{claim}
    If there is a $D$-regular induced subgraph of $G$ of size at least $W$, then the SAT instance is satisfiable.
\end{claim}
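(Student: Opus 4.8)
The plan is to take a $D$-regular induced subgraph $S$ of $G$ with $|S|\ge W=mW'$ and to reconstruct a satisfying assignment for the \textsc{Chained Positive CNF-SAT} instance. First I would argue that $S$ must contain every clause gadget in full. The key local observation is that every vertex of a clause gadget other than its two ports $v^C,u^C$ has \emph{all} of its $G$-neighbours inside that gadget and has degree exactly $D$; hence, if such a vertex lies in $S$, then so do all of its neighbours. Since the clause gadget minus $\{v^C,u^C\}$ is connected and the set $I_1$ of that gadget is adjacent to both $v^C$ and $u^C$, propagation of this observation yields the dichotomy: each clause gadget either lies entirely in $S$, or meets $S$ only inside $\{v^C,u^C\}$. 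In the second case the gadget omits at least $W'-2$ of its vertices from $S$; but the $m$ (pairwise disjoint) clause gadgets contain exactly $mW'$ vertices, $|S|\ge mW'$, and by the choice of $N$ all non-clause gadgets together have fewer than $N\le W'-2$ vertices, so at most $W'-3$ clause-gadget vertices can fail to lie in $S$ in total, and the second case cannot occur. Hence all $m$ clause gadgets are contained in $S$.

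Next I would show that for each clause $C$ exactly one variable-reading vertex lies in $S$. Since the clause gadget of $C$ is entirely selected, $v^C$ has exactly $D-1$ neighbours inside it, all in $S$, and $v^C$ has degree $D$ in $G[S]$; so precisely one further neighbour of $v^C$ is in $S$. The only remaining neighbours of $v^C$ are the reading vertices $v_a^C$, each of which is adjacent to both $v^C$ and $u^C$, so exactly one of them, say $v_{a(C)}^C$, lies in $S$. Write $a(C)\in X_{i(C),j(C)}$ for its literal and $p(C)$ for the index of $a(C)$ in that part.

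For each part $X_{i,j}$ let $t_{i,j}$ be the number of pair-vertices of $\widehat{X_{i,j}}$ that lie in $S$. The degree-constraint gadget attached to a reading vertex is all-or-nothing: by the same propagation argument, including any of its vertices forces all of them together with its anchor. Therefore, whenever $v_{a(C)}^C\in S$ with $a(C)\in X_{i,j}$, using that $v^C,u^C\in S$ and that $v_{a(C)}^C$ is adjacent to every pair-vertex of $\widehat{X_{i,j}}$, we get $D=\deg_{G[S]}(v_{a(C)}^C)=2+t_{i,j}+g$ with $g\in\{0,\,D-2(p(C)+1)\}$. The case $g=0$ is impossible, since it would force $t_{i,j}=D-2\ge 2|X_{i,j}|>2(|X_{i,j}|-1)\ge t_{i,j}$ by the choice of $D$; hence $g=D-2(p(C)+1)$ and $t_{i,j}=2p(C)$. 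In particular $t_{i,j}$ is even and $p(C)=t_{i,j}/2$ depends only on $X_{i,j}$, so all clauses whose selected literal lies in $X_{i,j}$ select the same variable, the one of index $t_{i,j}/2$. Setting, for every part $X_{i,j}$, that variable to true and all others in the part to false is a legal selection (exactly one true variable per part), and every clause $C$ is satisfied by its true literal $a(C)$; this gives a satisfying assignment of $\bigwedge_{1\le i\le r-1}\phi(X_i,X_{i+1})$, which proves the claim.

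The hard part will be the first step. One must verify carefully that every non-port vertex of a clause gadget really has all of its $G$-neighbours within the gadget (which depends on the precise adjacencies built into the clause, variable-reading and degree-filling gadgets), that the clause gadget minus its two ports is connected, and that the chosen value of $N$ indeed makes the total size of all non-clause gadgets strictly smaller than $W'-2$; checking that $D$ is large enough for the contradiction $t_{i,j}=D-2>t_{i,j}$ is by comparison routine.
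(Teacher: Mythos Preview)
Your proof is correct and follows essentially the same approach as the paper: show each clause gadget must lie entirely in $S$ by a propagation/size-counting argument, deduce that one reading vertex $v_a^C$ is picked per clause, use the all-or-nothing behaviour of the degree-constraint gadget together with the degree equation at $v_a^C$ to pin down the number of selected pair-vertices in the corresponding variable-choice gadget, and read off the assignment. Your presentation is in fact more careful than the paper's in two places: you spell out the dichotomy for clause gadgets explicitly, and you justify why the degree-constraint gadget cannot be entirely omitted (the case $g=0$), which the paper only asserts. One cosmetic point: your final sentence fixes ``that variable'' only for parts $X_{i,j}$ that are hit by some selected literal; for the remaining parts you should say you pick an arbitrary variable (or, as the paper does, the one of index $\lfloor t_{i,j}/2\rfloor$), which is harmless since all literals are positive.
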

\begin{claimproof}
    Let $H$ denote the $D$-regular induced subgraph of $G$ of size at least $W$.
    Since $H$ is of size at least $W$ it must include a vertex of each clause gadget. By design of the clause gadgets, all of their vertices must be included in $H$ for it to be $D$-regular. Furthermore, for each clause $C$, the vertices $v^C,u^C$ must be adjacent to a vertex outside the clause gadget.
    By construction, this vertex can only be a vertex of type $v_a^C$. Since this vertex has degree $D$ in $H$, it must be adjacent to some vertex of the corresponding degree constraint gadget. By design of the degree constraint gadget, all of its vertices must be included in $H$ for it to be $D$-regular. Now since $v_a^C$ has degree $D$, it must be adjacent to exactly $2p$ vertices of the variable choice gadget.
    This means that the variable of index $p$ satisfies the clause $C$, but also that $H$ includes exactly $2p$ vertices of $\widehat{X_{i,j}}$ (not counting the degree filling gadgets).
    
    Now for each $X_{i,j}$, denoting by $r$ the number of vertices $v,v'$ of $\widehat{X_{i,j}}$ that are in $H$, we set the variable of index $\left\lfloor \frac{r}{2} \right\rfloor$ to true.
\end{claimproof}
The two claims above show that we have constructed an equivalent instance of {\sc Maximum Regular Induced Subgraph}. 
It is not hard to see that the transformation can be carried out in
$f(k)\log n$ space, and thus the theorem follows.
\end{proof}
\section{Problems parameterized by pathwidth}\label{sec:rbds}
In this section we consider problems parameterized by pathwidth,
and prove the following theorem.
\cds*
\begin{proof}
We first show membership in XNLP for {\sc Capacitated Red-Blue Dominating Set}. For each red vertex, we guess if it is in the dominating set, and for each edge from
a chosen red vertex to a blue neighbor, we guess if it is used for dominating. We do this while going through the path decomposition from
left to right. We need to keep track which blue vertices are already dominated, which red vertices are in the dominating set plus their remaining capacity, and the total number of vertices in the dominating set so far. We may assume that the remaining capacities are never larger than the number of blue vertices; therefore, we only need to store $O(\log n)$ bits per
vertex in the current bag. Membership in XNLP follows in a similarly for {\sc Capacitated Dominating Set}.

Hardness follows by a reduction from {\sc Circulating Orientation} (defined in Section \ref{sec:maxcut}). 
Suppose that we are given an input of {\sc Circulating Orientation}, say a graph $G=(V,E)$ with weight function $w: E\rightarrow \N$. We assume that these weights are given in unary. Note that in a solution, the total weight of edges directed towards a vertex $v$ and the total weight
of the edges directed out of $v$ should both equal $\sum_{\{v,x\}\in E} w(\{v,x\})/2$.

We build a graph as follows. For each vertex $v\in V$, we create a vertex $v$,  colored red, in $H$. We give $v$ a
private blue neighbor $v'$. The capacity of $v$ equals $1 + \sum_{\{v,x\}\in E} w(\{v,x\})/2$. We can assume this
capacity is integral, otherwise there is no solution to the instance $(G,w)$.

Each edge $e=\{v,x\}\in E$ is replaced by the following gadget. Suppose $w(\{v,x\})=\alpha\in \N$. We create
$2\alpha+ 3$ vertices, called $y_{e,1}, y_{e,2}, \ldots, y_{e,\alpha}$, $z_e$, $z'_e$, $z''_e$, $y'_{e,1}, \ldots, y'_{e,\alpha}$. 
The edge $e$ is replaced by the subgraph shown in Figure~\ref{figure:capredblue}. The vertices $z_e$ and $z''_e$ are red, and all other new vertices are blue. We give the new red vertices $z_e$ and $z'''_e$ a capacity that equals their degree.

\begin{figure}
    \centering
    \includegraphics[scale=2]{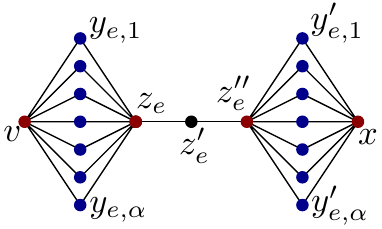}
    \caption{Edge gadget from the proof of Theorem \ref{thm:cds}.}
    \label{figure:capredblue}
\end{figure}

Let $H$ be the resulting red-blue colored graph, with $c(v)$ the capacity of a red
vertex $v\in V(H)$.

We claim that $H$ has a dominating set of size size $|V|+|E|$ for which each chosen red vertex dominates at most its capacity many blue vertices, if and only if $G$ has a circulating orientation.

Suppose first that we have a set $S$ of red vertices with $|S|\leq |V|+|E|$, and an assignment of blue vertices to neighbors in $S$, such that no red vertex has more than its capacity number of vertices assigned to it.

Each vertex that is a copy of a vertex from $V$ must belong to $S$, as they have a private blue neighbor.  For
each edge $e$, either $z_e$ or $z''_e$ must be in $S$, to dominate $z'_e$. This gives in total already $|V|+|E|$ 
vertices, so no edge can have both $z_e$ and $z''_e$ in $S$. 
For each edge $e = \{v,x\}$, if $z_e \in S$, then orient the edge from $v$ to $x$ in $G$; if
$z'''_e\in S$, then orient the edge from $x$ to $v$. 
Now, for each $v\in V$, the total weight of incoming edges of the orientation can be at most $c(v)-1$, since $v$ must also dominate its private neighbor. By definition, $c(v)-1 = \sum_{\{v,x\}\in E} w(\{v,x\})/2$. This means that for each vertex, the total weight of incoming edges is at most half the total
weight of incident edges; it follows that this total weight must be equal, because when there is a vertex
for which this weight is smaller, then there must be another vertex for which it is larger. So, we have an orientation
that is a circulation.

\smallskip 

Suppose now that we have a circulation that is an orientation. Add each original vertex $v\in V$ to $S$, and
for each edge $e=\{v,x\}$, place $z_e$ in $S$ when the edge is oriented from $v$ to $x$ and otherwise place
$z''_e$ in $S$. Red vertices on edge gadgets dominate all their neighbors; red original vertices dominate their private
neighbor and all not yet dominated blue neighbors. This gives a dominating set where each red vertex in $S$ dominates
precisely its capacity many neighbors, as desired.

\smallskip

Finally, we show that we can build a log-space transducer
that transforms a path decomposition of $G$ of width $\ell$ to one of $H$ with width at most $\ell+2$.
We first ensure that the path decomposition of $G$ is nice (which can be done via a log-space transducer). We pass through the bags from left to right. For a forget bag in the path decomposition of $G$, we take the same bag for $H$.
For an introduce bag $X_{i} = X_{i-1} \cup \{v\}$, we loop through the
vertices in $X_{i-1}$ one-by-one, say these are $x_1, \ldots, x_r$. For each $j\in [r]$, if $\{v,x_j\}\in E$, then we add the following bags (in order):
\begin{align*}
&X_i \cup \{z_e, y_{e,1}\},~ X_i \cup \{z_e, y_{e,2}\}, \ldots,
X_i \cup \{z_e, y_{e,w(e)}\},~X_i \cup \{z_e, z'_e\},~X_i \cup \{z'_e, z''_e\},\\
&X_i \cup \{z''_e, y'_{e,1}\},~ X_i \cup \{z''_e, y'_{e,2}\}, \ldots,
~X_i \cup \{z''_e, y'_{e,w(e)}\}.
\end{align*}
One can verify that this gives a path decomposition of $H$. The width has increased by at most $2$.

A standard transformation now shows that \textsc{Capacitated Dominating Set} is also XNLP-hard with pathwidth as parameter.
Given an instance $(G,w)$ of
{\sc Capacitated Red-Blue Dominating Set}, we build an equivalent instance of  {\sc Capacitated Dominating Set}. We give each blue vertex capacity zero. We add two new vertices $x$ and $x'$, with $x'$ of degree one and $x$ adjacent to all red vertices
and to $x'$. The capacity of $x$ is equal to the number of red vertices plus 2. We increase the
target size of the solution by one (and remove all colors). The pathwidth has gone up by at most one.
\end{proof}
We remark that a similar reduction can be used to show XNLP-hardness of \textsc{Capacitated Vertex Cover} (by removing the vertex $z_e'$, having parallel paths of length 3 instead of 2 in the gadget of Figure \ref{figure:capredblue} and giving each original vertex a new neighbor of degree one). 

\section{Problems parameterized by logarithmic linear clique-width}
\label{sec:loglcw}
In this section, we consider problems parameterized by logarithmic linear clique-width 
or logarithmic pathwidth, and prove the following theorem.

\logwidth*

The problem definitions for the parameter logarithmic pathwidth are given below; those for the parameter logarithmic cliquewidth can be obtained by replacing `path decomposition' by `linear clique-width expression'.

\defparaproblem{\textsc{Independent Set (IS)}}{A graph $G=(V,E)$, a path decomposition of $G$ of width $\ell$, and an integer $k$.}{$\left\lceil \ell/\log |V| \right\rceil$}{Is there a subset $S$ of $V$ such that $E \cap S\times S = \varnothing$ and $|S| \geq k$ ?}

\defparaproblem{\textsc{Dominating Set (DS)}}{A graph $G=(V,E)$, a path decomposition of $G$ of width $\ell$, and an integer $k$.}{$\left\lceil \ell/\log |V| \right\rceil$}{Is there a subset $S$ of $V$ such that $N[S]=V$ and $|S| \leq k$?}

\defparaproblem{\textsc{$q$-List-Coloring}}{A graph $G=(V,E)$, a path decomposition of $G$ of width $\ell$, and lists of available colors among $[q]$ for each vertex of $V$.}{$\left\lceil \ell/\log |V| \right\rceil$}{Is there a proper coloring of $G$ which assigns to each vertex of $G$ a color of its list ?}

\defparaproblem{\textsc{Odd Cycle Transversal (OCT)}}{A graph $G=(V,E)$, a path decomposition of $G$ of width $\ell$, and an integer $k$.}{$\left\lceil \ell/\log |V| \right\rceil$}{Is there a subset $S$ of $V$ such that $G-S$ is bipartite and $|S| \leq k$?}

Since $\lcw \leq \pw + 2$, a reduction showing hardness for the parameter (logarithmic) pathwidth also
shows hardness for the parameter (logarithmic) linear clique-width. Conversely, showing membership for the parameter (logarithmic) linear clique-width shows membership for the parameter (logarithmic) pathwidth. Since IS/$\logpw$ and DS/$\logpw$ are already known to be XNLP-complete, we can already conclude that IS/$\loglcw$ and DS/$\loglcw$ are XNLP-hard.
All membership claims in Theorem \ref{thm:logwidth} follow from the following lemma.
\begin{lemma}
    When parameterized by logarithmic linear clique-width, \textsc{Independent Set}, \textsc{Dominating Set}, $q$-\textsc{List-Coloring} for $q>2$, and \textsc{Odd Cycle Transversal} are in XNLP.  
\end{lemma}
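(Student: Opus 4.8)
The plan is to give a single unified argument that works for all four problems at once, since each of them admits a standard dynamic programming algorithm over a linear clique-width expression whose table entries can be encoded in $f(k)\log n$ bits. The key observation is that for a graph of linear clique-width $\ell$ on $n$ vertices, a linear $\ell$-expression using $\ell$ labels can be processed left to right, and at each point the relevant ``state'' of a partial solution only depends on how the at most $\ell$ label classes behave; with $\ell = k\log n$ (the logarithmic parameter), this state is encodable in roughly $2^{O(\ell/\log n)}\log n \cdot \mathrm{poly} = n^{O(k)}$ possibilities, hence $O(k)\cdot \mathrm{poly}(\log n)$-many bits — but crucially, we do \emph{not} store a full table, only a single nondeterministically guessed state. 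I would first state this reduction between the logarithmic linear clique-width and logarithmic pathwidth parameterizations (already noted: $\lcw \le \pw + 2$, so it suffices to prove membership for logarithmic linear clique-width, and one should also recall that a linear clique-width expression can be assumed to use exactly $\lcw(G)$ active labels and to be given or computed in logspace — or, since the path decomposition is given in the input, converted from it).

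The core of the proof is four short paragraphs, one per problem, each describing the certificate maintained while scanning the linear $\ell$-expression. For \textsc{Independent Set}: nondeterministically decide for each created vertex whether it enters $S$; the only thing we must remember is, for each label class, whether it currently contains a chosen vertex (one bit per label), and a running count of $|S|$ (which takes $O(\log n)$ bits); a join $\eta_{i\times j}$ is rejected if both label classes $i$ and $j$ contain a chosen vertex. For \textsc{Dominating Set}: for each label class we remember a flag in $\{0,1,2\}$ meaning ``all vertices of this class are already dominated'' / ``some vertex of this class is not yet dominated'' / ``this class is empty'' — but because a join can dominate a whole class at once and later vertices added to the class must also become dominated, the honest bookkeeping is to record, per label, whether the class currently contains an undominated vertex and whether it contains a chosen vertex; on a join we update; at the end we reject if any undominated vertex remains; $|S|$ is tracked as a counter. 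For \textsc{$q$-List-Coloring}: nondeterministically pick a colour (from its list) for each created vertex; the certificate records, for each label class and each colour $c\in[q]$, whether colour $c$ appears on some vertex of that class — that is $q\ell = q\cdot k\log n = O(k\log n)$ bits since $q$ is a fixed constant; a join $\eta_{i\times j}$ is rejected if some colour appears in both classes $i$ and $j$. For \textsc{Odd Cycle Transversal}: nondeterministically split each created vertex into ``deleted'', ``side $A$'', or ``side $B$''; per label class record which of the two sides are currently present among its non-deleted vertices; a join between classes $i$ and $j$ forces that no side-conflict arises (equivalently, the bipartition classes interact consistently) and is rejected otherwise; track $|S|$ as a counter. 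In each case the renaming operation $\rho_{i\to j}$ merges the bookkeeping of the two classes (unioning the per-class flags), and the disjoint union with an isolated vertex is the nondeterministic step.

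In every case the certificate has size $O(q\cdot \ell) + O(\log n) = O(k\log n)$ bits when $\ell = \lceil k\log n\rceil$ — matching the $f(k)\log n$ requirement with $f$ linear and parameter $k$ — and every update (join, rename, vertex creation, counter increment, and the final acceptance test) is a simple computation done in $O(k\log n)$ additional space and polynomial time. Since the expression has $\mathrm{poly}(n)$ operations, the whole nondeterministic computation runs in polynomial time and $f(k)\log n$ space, giving membership in XNLP. I would close by noting that, because $\lcw \le \pw + 2$ and the input path decomposition of width $\ell$ can be turned (in logspace) into a linear $(\ell+2)$-expression, the same algorithm witnesses membership for the logarithmic pathwidth parameterization as well.

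The main obstacle, and the part to be careful about, is the \textsc{Dominating Set} and \textsc{Odd Cycle Transversal} bookkeeping: one must verify that the finite per-label state really is a \emph{sound and complete} summary — i.e. that two partial solutions with the same certificate are interchangeable for the rest of the expression. For \textsc{Dominating Set} this is the subtle point that future joins may dominate vertices currently recorded as undominated, and that new vertices entering a class inherit the ``undominated'' status, so one cannot collapse a class to a single bit prematurely; for \textsc{Odd Cycle Transversal} one must ensure the recorded side-pattern per label is enough to detect all odd cycles created by future joins. Once the correctness of the per-label summaries is argued, the space and time bounds are routine.
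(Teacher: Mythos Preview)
Your proposal is correct and takes essentially the same approach as the paper: for each of the four problems, maintain a constant-size certificate per label class (presence of a solution vertex for \textsc{IS}; presence of a solution vertex and of an undominated vertex for \textsc{DS}; which colours appear for \textsc{$q$-List-Coloring}; which sides of the bipartition appear among non-deleted vertices for \textsc{OCT}), together with a running counter, and process the linear expression left to right with nondeterministic choices at vertex creation. Your write-up is in fact more detailed than the paper's (you spell out the join/rename updates and flag the soundness issue for \textsc{DS} and \textsc{OCT}), but the underlying argument is identical.
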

\begin{proof}
 The algorithms will keep in memory a certificate of constant size for 
 each of the label classes of the linear clique-width expression. The certificates are as follows.
    \begin{description}
        \item[IS] 
        For each label class, we record whether or not it contains a vertex in the independent set.
        \item[DS] 
        For each label class, we record whether or not it contains a vertex of the dominating set, and whether or not it contains a vertex that is not dominated.
        \item[$q$-\textsc{List-Coloring}] 
        For each label class and for each color, we record whether or not the label class contains a vertex with this color.
        \item[OCT] In addition to constructing an odd cycle transversal, we construct a bipartition of the remaining graph. 
        For each label class, for each side of the bipartition, we record whether or not there is a vertex in both the label class and the side of the bipartition.
    \end{description}
    If there are at most $k$ labels, then the above description uses $O(k\log n)$ space. Membership now follows from applying dynamic programming using the above fingerprints.
\end{proof}

It remains to prove the hardness results, which we do in the next two lemmas.

\begin{lemma}
$q$-\textsc{List-Coloring} and $q$-\textsc{Coloring} parameterized by logarithmic pathwidth are XNLP-hard.
\end{lemma}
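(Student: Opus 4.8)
The plan is to reduce from one of the chained XNLP-complete problems parameterized by a small integer $k$, most naturally \CMClique or \CMISet, and to produce a graph with a path decomposition whose width is $O(k \log n)$, so that the logarithmic pathwidth is $O(k)$. The crucial point of a logarithmic-width reduction is that we are allowed a width budget that is logarithmic in the size of the produced instance, which lets us encode each of the $k$ ``slots'' of a chained instance not as a single gadget vertex but as a $\log n$-bit binary counter. Concretely, for each position $i$ along the chain and each color $j \in [k]$, the choice of which vertex of color $j$ is selected in $V_i$ is a number in $[n]$; I would represent this choice by a block of $O(\log n)$ vertices forming a ``bit gadget'' in the coloring instance, and propagate this encoded choice from position $i$ to position $i+1$ through a verification gadget that checks the required clique/independent-set/compatibility constraints of $\phi(X_i, X_{i+1})$.

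The key steps, in order, are: (1) Fix the source problem — I would use \textsc{Chained Positive CNF-SAT}, since list-coloring constraints are naturally disjunctive and the ``exactly one true per block'' structure maps cleanly onto ``choose one color from a list''. (2) Build, for each $i \in [r]$ and each of the $k$ groups $X_{i,j}$, a ``pointer gadget'': a collection of $\lceil \log |X_{i,j}| \rceil$ vertices whose coloring encodes, in binary, the index of the variable set to true; each bit-vertex gets a two-element list so its color literally is a bit. (3) Build a ``copy/propagation'' gadget between consecutive positions that forces the encoding at position $i$ (on one side) to be read by the clause-checking apparatus for $\phi(X_i,X_{i+1})$; each clause of $\phi$ becomes a small sub-gadget whose proper colorability under the lists is equivalent to the clause being satisfied by the selected assignment. (4) Lay everything out left-to-right along the chain, so that a path decomposition only ever needs to see position $i$, position $i+1$, and a constant number of auxiliary vertices per bit; since there are $k$ groups and $O(\log n)$ bits each, the bag size is $O(k \log n)$, hence logarithmic pathwidth $O(k)$. (5) Verify the path decomposition can be emitted by a logspace transducer, and that the parameter bound $k' \le g(k)$ holds. (6) Finally derive the $q$-\textsc{Coloring} hardness from the $q$-\textsc{List-Coloring} hardness by the standard trick of attaching, to each vertex, a clique of forbidden-color vertices (using $q$ global color-representative cliques) that simulates its list; this increases pathwidth only additively by $O(q)$, which is a constant, so it preserves the logarithmic-width bound.

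The main obstacle I expect is step (3)–(4): designing the clause-checking gadget so that it simultaneously (a) correctly tests a positive CNF clause over the $2q$ variables of $X_i \cup X_{i+1}$ against the \emph{binary-encoded} selection, (b) uses only $O(1)$ additional vertices of width on top of the two pointer blocks, and (c) does not accidentally leak colors or create long-range dependencies that would blow up the pathwidth. Testing ``is the selected index equal to some fixed value'' against a binary encoding is straightforward with a constant-size list-coloring gadget per bit, but a clause is a disjunction over possibly many literals, each a different index in possibly different groups, and naively wiring an OR-gate over these comparisons could touch many bit-blocks at once. The resolution is to process the clause sequentially — chaining the OR through the clause literal by literal with a constant-width ``carry'' of the running disjunction value — mirroring the technique used for the SETH lower bounds of Lokshtanov, Marx and Saurabh~\cite{LokshtanovMS18known}, whose gadgets (as the paper notes) several of these constructions are based on; the remaining work is then bookkeeping to confirm the width stays $O(k\log n)$ and correctness of the forward and backward directions.
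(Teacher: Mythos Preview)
Your proposal is correct and follows essentially the same approach as the paper: reduce from \textsc{Chained Positive CNF-SAT}, encode the chosen variable in each $X_{i,j}$ by the coloring of $O(\log n)$ ``digit'' vertices, check each clause by a path-shaped gadget that processes literals sequentially (with a constant-width connector per literal, borrowed from Lokshtanov--Marx--Saurabh), and finish with the standard clique trick for $q$-\textsc{Coloring}. The only cosmetic difference is that the paper encodes indices in base~$q$ using full lists $[q]$ rather than in binary with two-element lists, which changes nothing structurally.
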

\begin{proof}
We reduce from \textsc{Chained Positive CNF-SAT} to \textsc{$q$-List-Coloring} parameterized by logarithmic pathwidth.

Given an instance $\phi,p,X_1,\dots,X_r,k,(X_{i,j})_{i \in [r],j \in [k]}$ of \textsc{Chained Positive CNF-SAT}, suppose that $C_1,\dots, C_m$ are the clauses in $\phi$.

For each $X_{i,j}$ we add dummy variables until $|X_{i,j}|=q^{t_j}$ for some integer $t_j$, increasing its
size by at most a multiplicative factor of $q$ . We enforce that the dummy variables will not appear in any solution by adding a clause containing all initial variables for each $X_{i,j}$.

\proofsubparagraph{Variable choice gadget}
For each $X_{i,j}$, we add vertices $x_{i,j}^1,\dots,x_{i,j}^{t_j}$ with lists $[q]$.
Together they encode the index of a variable in $X_{i,j}$ in base $q$. We denote this gadget by
$\widehat{X_{i,j}}$.

\proofsubparagraph{Clause gadget}
For a clause $C$, with literals $a_1,\dots,a_\ell$, we have a path of length $\ell+2$, with vertices
$p_0,p_1,\dots,p_\ell,p_{\ell+1}$. The vertex $p_0$ is forced to have color $2$, $p_{\ell+1}$ is forced to
have color $2$ if $\ell$ is even, and color $3$ if $\ell$ is odd. Vertices $p_1,\dots,p_\ell$ have
list $[3]$. Note that by construction at least one of them will have to be colored $1$.

\proofsubparagraph{Variable reading gadget}
For a literal $a$ with associated vertex $p$, such that $a \in X_{i,j}$, we read the chosen variable
of $X_{i,j}$ using a \emph{connector} which will allow the color of $p$ to be $1$ only if the
coloring encoding $a$ appears on $(x_{i,j}^1,\dots,x_{i,j}^{t_j})$.
This is done with \emph{color obstruction gadgets}. A \emph{color obstruction gadget} for
color $\alpha \in [q]$ on vertex $x$ can take two forms:
\begin{itemize}
\item If $\alpha = 1$, we have a vertex $w_y$, adjacent to $x$, for each $y \in [q] \setminus \{1\}$,
with list $\{1,y\}$. We add a vertex $w$ adjacent to the $w_y$ and to $p$, with list $[q]$.
\item If $\alpha \neq 1$, we have vertices $w_y$ and $w'_y$ for each $y \in [q]\setminus \{1\}$. We
make $w_y$ adjacent to $x$ and $w'_y$. We give to $w_y$ list $\{\alpha,1\}$, and to $w'_y$ list
$\{y,1\}$. Finally, we add a vertex $w$ adjacent to vertices $p$ and $w'_y$, with full list.
\end{itemize}

The connector of $p$ consists of color obstructions gadgets on each vertex of $\widehat{X_{i,j}}$,
for all colors but the one picked for the encoding of $a$.

\begin{claim}\label{claim:connector-prop}
Consider the connector to a vertex $p$ encoding a variable $z$ of $X_{i,j}$ and a coloring of
$\widehat{X_{i,j}}$.
\begin{enumerate}
    \item Any coloring on $\widehat{X_{i,j}}$ and any color $c \in \{2,3\}$ on $p$ can be extended
    to the connector.
    \item The coloring encoding $z$ on $\widehat{X_{i,j}}$ and color $1$ on $p$ can be
    extended to the connector.
    \item In any coloring of the connector, if $p$ is colored with $1$, then the coloring encoding
    $z$ appears on $\widehat{X_{i,j}}$.
\end{enumerate}
\end{claim}

See \cite[Lemma 11]{LokshtanovMS18known} for proof.

The graph $G$ consists of variable choice gadgets for each $X_{i,j}$, clause gadgets for each clause
of {$\bigwedge \limits_i \phi(X_i,X_{i+1})$}, with each literal vertex having its connector.

\begin{claim}
    If our \textsc{Chained Positive CNF-SAT} instance is satisfiable, then $G$ has a proper list-coloring.
\end{claim}

\begin{claimproof}
    We know how to choose exactly one variable $s_{i,j}$ per $X_{i,j}$ to satisfy the given formula.
    For each $\widehat{X_{i,j}}$, we color it with the encoding of $s_{i,j}$.
    For each clause, we pick an arbitrary satisfied literal and give the vertex representing it
    color $1$. Its connector can be colored by \cref{claim:connector-prop} because the encoding
    on the variable choice gadget must correspond to the satisfied literal.

    The rest of the clause gadget can then be colored greedily. By \cref{claim:connector-prop},
    the coloring can be extended to connectors of vertices not colored with $1$.
\end{claimproof}

\begin{claim}
    If $G$ has a proper list-coloring, the \textsc{Chained Positive CNF-SAT} instance is satisfiable.
\end{claim}

\begin{claimproof}
    We choose set the variables $s_{i,j}$ encoded by the coloring on $\widehat{X_{i,j}}$ to true.
    For each clause gadget, by construction, there must be a vertex with color $1$; denote by $z$
    its associated variable.
    By \cref{claim:connector-prop}, the coloring of its connected variable choice gadget
    $\widehat{X_{i,j}}$ must be the encoding of $z$. Hence the clause is satisfied, since the encoded variable is the one we set to true.
\end{claimproof}
Let $t=\max_j t_j$. We have $t \leq \log_q(n) + 1 = \mathcal{O}(\log(n))$, where $n=|V|$.

\begin{claim}
    $\pw(G) \leq 2kt + 3$.
\end{claim}

\begin{claimproof}
    For each $i \in [r-1]$, the bags corresponding to $\phi(X_i,X_{i+1})$ contain the vertices of
    $\widehat{X_{i,j}}$ and $\widehat{X_{i+1,j}}$ for $j \in [k]$ (at most $2kt$). Then each clause gadget
    together with its connectors can be sweeped with 4 vertices at a time.
\end{claimproof}

To reduce from \textsc{$q$-List-Coloring} to \textsc{$q$-Coloring}, we just add a $q$-clique,
associate each of its vertices to a color, and make them adjacent to all vertices that did not
contain this color in their list. This increases the pathwidth by at most $q$.
\end{proof}

\begin{lemma}
    \textsc{Odd Cycle Transversal} and \textsc{Feedback Vertex Set} parameterized by logarithmic pathwidth are XNLP-hard.    
\end{lemma}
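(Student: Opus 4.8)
Membership of \textsc{Odd Cycle Transversal} in XNLP was already established above, so only hardness remains, and we treat the two problems together. As in the preceding lemma we give pl-reductions from \textsc{Chained Positive CNF-SAT}, and the whole point is that for an instance with chained blocks $X_1,\dots,X_r$ and parameter $k$ the graph we build will have pathwidth $O(k\log n)$, so that $\lceil \pw/\log n\rceil = O(k)$ is bounded by a function of the parameter. We reuse the set-up of the preceding lemma: pad each group $X_{i,j}$ to size a power of two (adding a clause that forbids the dummy variables), so that the chosen true variable of $X_{i,j}$ is described by $t_j = O(\log n)$ bits, and we use the chained structure $\bigwedge_i \phi(X_i,X_{i+1})$ only to produce the long ``linear'' direction of the instance while keeping each link of width $O(k\log n)$.

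For \textsc{Odd Cycle Transversal} we follow the SETH lower-bound construction of Lokshtanov et al.~\cite{LokshtanovMS18known}. For each $X_{i,j}$ we build a \emph{variable-choice gadget} $\widehat{X_{i,j}}$ made of $t_j$ bit gadgets, where the two possible $2$-colourings of a bit gadget in $G-S$ encode the two values of a bit, so that the colouring of $\widehat{X_{i,j}}$ encodes the index of a variable of $X_{i,j}$. Each clause $C$ of $\bigwedge_i \phi(X_i,X_{i+1})$ receives a \emph{clause gadget} (a short ``chain of odd cycles'' that cannot be made bipartite unless some literal of $C$ is read as true), and each literal $a \in X_{i,j}$ occurring in $C$ is linked to $\widehat{X_{i,j}}$ by a \emph{connector}, built analogously to the colour-obstruction subgadgets of the preceding lemma but phrased in terms of the bipartition of $G-S$, which allows the literal vertex to take its ``true'' state exactly when the colouring of $\widehat{X_{i,j}}$ encodes $a$. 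Rigidity --- the fact that within the deletion budget one must keep all bit and clause vertices, so that the only remaining freedom is the $2$-colouring, i.e.\ the assignment --- is enforced as in \cite{LokshtanovMS18known}: the construction also contains a family of pairwise vertex-disjoint odd cycles, one ``designated'' vertex is marked in each, and the deletion budget $k'$ is set equal to the number of these cycles, so that every solution of size $\le k'$ deletes precisely the designated vertices and $G-S$ is bipartite if and only if the encoded assignment satisfies the formula.

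For \textsc{Feedback Vertex Set} we run the same scheme with ``bipartite/$2$-colouring'' replaced by ``forest'' and ``odd cycle'' replaced by ``cycle'': a bit gadget encodes a choice through which of two internally disjoint paths stays acyclic, a clause gadget is a cycle that closes exactly when the clause is unsatisfied, and rigidity is again obtained from a tight budget against a family of vertex-disjoint cycles, following the analogous construction in (the spirit of) \cite{LokshtanovMS18known}. In both reductions the pathwidth bound is exactly as in the preceding lemma: the bags responsible for the link between blocks $i$ and $i+1$ carry the $O(kt)=O(k\log n)$ bit vertices of $\widehat{X_{i,\cdot}}$ and $\widehat{X_{i+1,\cdot}}$, and each clause gadget together with all its connectors is then swept using only $O(1)$ extra vertices; hence $\pw(G)=O(k\log n)$ and $\lceil \pw(G)/\log|V(G)|\rceil = O(k)$. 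Finally $G$, the budget, and a path decomposition of width $O(k\log n)$ are all produced by a transducer that reads the input left to right in logarithmic space, so these are pl-reductions, and the correspondence above gives the stated equivalences.

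\textbf{Main obstacle.} Unlike \textsc{List-Coloring}, \textsc{Odd Cycle Transversal} and \textsc{Feedback Vertex Set} have no intrinsic way of forbidding the deletion of a vertex, so the delicate point is the rigidity argument: one must make the deletion budget exactly tight against a family of vertex-disjoint odd cycles (resp.\ cycles) so that all bit and clause vertices are forced to survive and the encoded assignment is the only remaining choice, while at the same time keeping the clause and connector gadgets narrow enough ($O(1)$ sweeping width) that the total pathwidth stays $O(k\log n)$. Arranging both of these simultaneously, and checking that the resulting reduction runs in logarithmic space, is where essentially all of the technical work lies.
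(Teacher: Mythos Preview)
Your high-level strategy (reduce from \textsc{Chained Positive CNF-SAT}, encode the chosen variable of $X_{i,j}$ by $O(\log n)$ bit gadgets, attach clause and connector gadgets, obtain pathwidth $O(k\log n)$) matches the paper. The encoding mechanism, however, is genuinely different. You encode a bit for OCT in the \emph{$2$-colouring of $G-S$} of a bit gadget and argue rigidity from a tight budget against vertex-disjoint odd cycles, closely following the SETH construction of \cite{LokshtanovMS18known}; for FVS you sketch a second, separate reduction where a bit is encoded by ``which of two paths stays acyclic''. The paper instead encodes the choice \emph{directly in the deleted vertex}: each $X_{i,j}$ is padded to a power of $3$ and represented by $t_j$ triangles, and the unique deleted vertex of each triangle gives one ternary digit. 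Clause gadgets are odd cycles, and the information is propagated to them via explicit ``arrow'' and ``negation'' subgadgets, with the budget $W=2\alpha+\nu+r\sum_j t_j$ exactly matching one vertex per triangle, two per arrow, and one per negation.

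The payoff of the paper's encoding is that the \emph{same} instance serves for both problems: every gadget is built so that once its odd cycles are hit the gadget becomes acyclic (no even cycles survive), hence on this particular graph minimum odd cycle transversals and minimum feedback vertex sets coincide, and FVS-hardness follows in one line from OCT-hardness. Your route requires designing and verifying a second FVS-specific construction; the ``which path stays acyclic'' idea is plausible but, unlike the OCT part where you can lean on \cite{LokshtanovMS18known} verbatim, this part is not backed by an existing gadget set and would need to be spelled out (in particular, the connector that lets a literal break the clause cycle exactly when the bit gadgets encode that literal, while staying within $O(1)$ sweeping width). So your plan is workable but strictly more laborious than the paper's single ternary-triangle construction.
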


\begin{proof}
We reduce from \textsc{Chained Positive CNF-SAT} to \textsc{Odd Cycle Transversal} parameterized by logarithmic pathwidth.

Given an instance $\phi,p,X_1,\dots,X_r,k,(X_{i,j})_{i \in [r],j \in [k]}$ of \textsc{Chained Positive CNF-SAT}, we denote $\phi=\bigwedge \limits_{i \in [m]} C_i$.

For each $X_{i,j}$, we add dummy variables until $|X_{i,j}|=3^{t_j}$ for some integer $t_j$, increasing its size by at most $3$. They are discarded by adding a clause containing all initial variables for each $X_{i,j}$.

\proofsubparagraph{Variable choice gadget} For each $X_{i,j}$, we add triangles $T_{i,j}^1,\dots,T_{i,j}^{t_j}$, where each triangle has vertices $v_0,v_1,v_2$. The deleted vertices of the triangles will encode the index of a variable in $X_{i,j}$ in base 3. We denote this gadget by $\widehat{X_{i,j}}$.

\proofsubparagraph{Clause gadget} For a clause $C$, with literals $a_1,\dots,a_\ell$, we have a cycle of odd length $\ell'=2\left\lfloor \frac{\ell}{2}\right\rfloor + 1\in \{\ell,\ell+1\}$ with vertices $c_1,\dots,c_{\ell'}$.

\proofsubparagraph{Arrow gadget} An arrow from $u$ to $v$ (see Figure \ref{fig:arrow}) consists of additional vertices $a_1,a_2,a_3,a_4$, $b_1,b_2,b_3$ and edges $ua_1,ub_1,a_1b_1,b_1a_2,b_1b_2,a_2b_2,b_2a_3,b_2b_3,a_3b_3,b_3a_4,b_3v,a_4v$. Note that if $u$ is not in an optimal odd cycle transversal, then $b_1$ and $b_3$ must be. The arrow is then said to be ``passive''. However, if $u$ is in the odd cycle transversal, then we can also have $b_2$ and $v$. The arrow is then said to be ``active''. Intuitively, we got $v$ for free from having $u$.

\begin{figure}
    \centering
    \begin{tikzpicture}
        \foreach \i in {1,...,4} { \node[circle,fill,label=above:$b_\i$] (b\i) at (\i-0.5,1) {};}
        \foreach \i in {1,3} { \node[circle,fill,blue,label=below:$a_\i$] (a\i) at (\i,0) {};}
        \node[circle,fill,label=below:$u$] (a0) at (0,0) {};
        \node[circle,fill,red,label=below:$a_2$] (a2) at (2,0) {};
        \node[circle,fill,label=below:$v$,red] (a4) at (4,0) {};
        \foreach \i in {0,...,3} {
            \draw (a\i) -- (b\number\numexpr\i+1\relax); 
            \draw (a\i) -- (a\number\numexpr\i+1\relax);
            \draw (a\number\numexpr\i+1\relax) -- (b\number\numexpr\i+1\relax);
        }
    \end{tikzpicture}
    \caption{The arrow $A(u,v)$. Vertices in blue form the passive OCT of $A(u,v)$, and those in red form the active OCT of $A(u,v) \setminus \{u\}$.}
    \label{fig:arrow}
\end{figure}
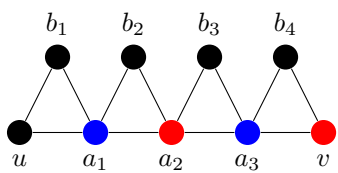

\proofsubparagraph{Negation gadget} A negation from $u$ to $v$ consists of additional vertices $a_1,a_2$ and edges $uv,ua_1,ua_2,va_1,va_2$. Note that we only need to pick one of $u$ and $v$ in our odd cycle transversal to hit the odd cycles of this gadget, and we have to pick at least one.

\proofsubparagraph{Variable reading gadget} Consider a variable $x \in X_{i,j}$ with index $b_1\dots b_{t_j}$, appearing in some clause $C$ as literal $a_z$.
We add vertices $w_1$ and $w_2$, make them adjacent to each other and, for each $a \in [t_j]$, to a vertex $u_a$ which is the endpoint of an arrow from vertex $v_{b_a}$ in triangle $T_{i,j}^a$.
Finally, we have a negation gadget from $w_1$ to the vertex $c_z$ in the clause gadget of $C$.

\smallskip

These gadgets constitute a graph $G$ representing $\phi$. Let $\alpha$ be the number of arrows in our construction, and $\nu$ be the number of negations. We set $W= 2\alpha + \nu + r\sum_{j \in [k]} t_j$.

\begin{claim}
    If the SAT instance can be satisfied then there is an odd cycle transversal of size $W$.
\end{claim}

\begin{claimproof}
    For the triangles in the variable choice gadgets, we pick the vertices corresponding to the encoding of the variable that is assigned. This amounts to $r\sum_{j \in [k]} t_j$ vertices.
    
    For each arrow, we choose the active form when possible. This amounts to $2\alpha$ vertices.
    
    By now, when a variable is assigned, then the variable reading gadgets corresponding to this variable have all of their $u_a$ vertices already picked. Hence, all the cycles of this gadget are hit except for the negation gadget part. We then pick the vertex of the negation gadget that is shared with the clause gadget.
    
    For a variable reading gadget of a variable that was not assigned we can just pick $w_1$ to hit all of the cycles of this gadget.
    
    Regarding the clause gadgets, we know that our assignment satisfies all of the clauses so there should always be at least one literal that was assigned. Therefore, we must have picked a vertex of the clause gadget.
    
    We picked $W$ vertices to form our odd cycle transversal.
\end{claimproof}

\begin{claim}
    If there is an odd cycle transversal of size $W$ then the SAT instance can be satisfied.
\end{claim}

\begin{claimproof}
    Consider an odd cycle transversal $T$ of size $W$.
    To hit the cycles of the variable choice gadgets, we need at least $r\sum_{j \in [k]} t_j$ vertices.
    To hit the arrows, we need at least $2\alpha$ vertices.
    To hit the negation gadgets, we need at least $\nu$ vertices.
    Since this amounts to $W$, we conclude that there is exactly one vertex per triangle in the variable choice gadgets, two vertices per arrow and one vertex per negation gadget.
    
    We consider the assignment corresponding to the encoding stored in the variable choice gadgets. For each $X_{i,j}$, let $b_a$ be the picked vertex of $T_{i,j}^a$. Then in our assignment we set the variable of $X_{i,j}$ with index $b_1\dots b_{t_j}$ to true.
    
    Since $T$ is an odd cycle transversal, we know that for each clause, its corresponding cycle is hit. This means that for one of the negation gadgets, the chosen vertex is the clause gadget vertex. Since the odd cycles of the corresponding variable reading gadget are hit, it must be that all the arrow endpoints were picked, meaning that all these arrows are active, so the vertices that were picked in the variable choice gadget must be encoding the variable.
    We can conclude that our assignment satisfies the clause.
\end{claimproof}

Let $t=\max_j t_j$. We have $t \leq \log_3(n) + 1 = \mathcal{O}(\log(n))$, where $n=|V|$.

\begin{claim}
    $\pw(G) \leq 6kt + 6$.
\end{claim}
\begin{claimproof}
For each $i \in [r-1]$, the bags corresponding to $\phi(X_i,X_{i+1})$ contain the vertices of $\widehat{X_{i,j}}$ and $\widehat{X_{i+1,j}}$ for $j \in [k]$ (at most $6kt$). Then each clause gadget together with its connectors can be swept with 7 vertices at a time: we keep the first vertex of the cycle, the current vertex of the cycle, and the  vertices $w_1$ and $w_2$ of the current variable reading gadget. The 3 remaining vertices suffice to sweep the arrows and the negation gadget.
\end{claimproof}

The instances produced in the previous reduction are also instances of Feedback Vertex Set with similar budget. Indeed, removing an odd cycle transversal from arrows and negation gadgets, disconnects the gadgets and the gadgets do not contain even cycles once their odd cycles are broken. Hence the odd cycle transversals of the construction are also feedback vertex sets. Since a feedback vertex set must hit all cycles, it must hit at least the odd cycles, thus it is also an odd cycle transversal.
\end{proof}
\section{Problems parameterized by linear mim-width}\label{sec:mim}
In this section we show that several fundamental graph problems are XNLP-complete when parameterized by the linear mim-width of the input graph.
For completeness, we state the concrete parameterized problems considered in this section.

\defparaproblem
    {\textsc{Independent Set}}
    {A graph $G$, a linear order of $V(G)$ of mim-width $w$, and an integer $k$.}
    {$w$.}
    {Is there a set $S \subseteq V(G)$ such that $E(G[S]) = \emptyset$ and $|S| \ge k$?}
    
\defparaproblem
    {\textsc{Dominating Set}}
    {A graph $G$, a linear order of $V(G)$ of mim-width $w$, and an integer $k$.}
    {$w$.}
    {Is there a set $S \subseteq V(G)$ such that $S \cup N(S) = V(G)$ and $|S| \le k$?}

\defparaproblem
    {\textsc{Feedback Vertex Set}}
    {A graph $G$, a linear order of $V(G)$ of mim-width $w$, and an integer $k$.}
    {$w$.}
    {Is there a set $S \subseteq V(G)$ such that $G - S$ is a forest and $|S| \le k$?}
    
\defparaproblem
    {\textsc{$q$-Coloring}}
    {A graph $G$, a linear order of $V(G)$ of mim-width $w$, and an integer $k$.}
    {$w$.}
    {Does $G$ have a proper vertex-coloring with $q$ colors?}

XNLP-membership for these problems will be shown via the corresponding dynamic programming XP-algorithms. In all of these algorithms, the following equivalence relation is key to defining the table entries.
\begin{definition}[Neighborhood Equivalence]
    Let $G$ be a graph and $A \subseteq V(G)$.
    For all $X, Y \subseteq A$:
    \(
        X \equiv_A Y \Leftrightarrow N(X) \cap (V(G) \setminus A) = N(Y) \cap (V(G) \setminus A).
    \)
\end{definition}

\begin{lemma}\label{lem:mim:membership}
    The following problems parameterized by the mim-width of a linear order of the vertices of the input graph are in XNLP:
    \begin{enumerate}
        \item\label{lem:mim:membership:IS} \textsc{Independent Set}
        \item\label{lem:mim:membership:DS} \textsc{Dominating Set}
        \item\label{lem:mim:membership:qCol} \textsc{$q$-Coloring} for any fixed $q$.
        \item\label{lem:mim:membership:FVS} \textsc{Feedback Vertex Set}
    \end{enumerate}
\end{lemma}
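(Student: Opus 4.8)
The plan is to convert, for each of the four problems, the standard $n^{O(w)}$-time dynamic programming algorithm over the given linear order $v_1,\dots,v_n$ into a nondeterministic algorithm that uses $f(w)\,n^{O(1)}$ time and $f(w)\log n$ space, following the recipe sketched in the introduction: rather than building a full DP table at each cut, we nondeterministically guess the single table entry corresponding to the partial solution currently being tracked, and verify that it is consistent with the entry guessed at the previous cut.

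First I would record the combinatorial fact that makes the table entries small. Writing $A_i=\{v_1,\dots,v_i\}$, if $\mimval_G(A_i)=w$ then the family $\{\,N(X)\cap(V(G)\setminus A_i): X\subseteq A_i\,\}$ has at most $\card{V(G)\setminus A_i}^{w}\le n^{w}$ members, and every $\equiv_{A_i}$-class has a \emph{representative} $R\subseteq A_i$ with $\card R\le w$. Indeed, taking $R$ inclusion-minimal with the correct external neighborhood, minimality gives for each $r\in R$ a vertex $y_r\in N(r)\setminus N(R\setminus\{r\})$ lying outside $A_i$; then $\{\,r\,y_r : r\in R\,\}$ is an induced matching of $G[A_i,V(G)\setminus A_i]$ (the edge $r\,y_{r'}$ is absent for $r\ne r'$ since $y_{r'}\notin N(r)$), so $\card R\le w$. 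Hence an $\equiv_{A_i}$-class is describable by at most $w$ vertex indices, i.e.\ by $O(w\log n)$ bits, and can be guessed with that many nondeterministic bits; and given representatives $R\subseteq A_{i-1}$, $R'\subseteq A_i$ together with $v_i$, a relation such as $N(R')\cap(V(G)\setminus A_i)=N(R\cup\{v_i\})\cap(V(G)\setminus A_i)$ is checked by looping over all $u\in V(G)\setminus A_i$ and testing adjacency of $u$ to the $\le w$ vertices of each representative, using only $O(\log n)$ extra space.

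With this in place, each algorithm sweeps $v_1,\dots,v_n$ while maintaining a constant number of neighborhood-class representatives plus $O(1)$ integers bounded by $n$. For \textsc{Independent Set}: the state is a representative of $S\cap A_i$ and the count $\card{S\cap A_i}$; at step $i$ we guess whether $v_i\in S$, reject if $v_i\in S$ and $v_i$ is adjacent to some vertex of the previous representative, otherwise guess and verify the new representative, and accept at the end if $\card S\ge k$. \textsc{$q$-Coloring} is the same with $q$ representatives, one per colour class, guessing the colour of $v_i$ and rejecting on a clash with the recorded external neighbourhood of that colour class. \textsc{Dominating Set} additionally carries a representative of the set of vertices of $A_i$ not yet dominated (deferring their domination to the right, and demanding this set be empty at the end); the transitions are exactly those of the $(\sigma,\rho)$-domination dynamic program of Bui-Xuan, Telle and Vatshelle and are checkable within the same bound.

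The hard part will be \textsc{Feedback Vertex Set}, where I would simulate the mim-width algorithm of Jaffke, Kwon and Telle: its table entry carries a representative of the kept forest $F\cap A_i$, a partition of a bounded-size representative set recording the tree components of $F$ (with the bookkeeping needed to detect when a future vertex would close a cycle), and the count $\card{F\cap A_i}$. Since the representative set has size $O(w)$, this is still $f(w)\log n$ bits, and the transition --- deciding whether $v_i$ joins $F$ without creating a cycle and updating the component partition accordingly --- is a local computation on the representatives. The point that needs care is precisely that this forest-component bookkeeping fits into $f(w)\log n$ bits and that each transition can be verified without a global recomputation; once that is checked, correctness of all four guess-and-verify procedures follows from correctness of the corresponding XP algorithms, and the time and space bounds are immediate, giving membership in XNLP.
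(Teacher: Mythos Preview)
Your proposal is correct and follows essentially the same approach as the paper: both convert the existing XP dynamic programs over the linear order into guess-and-verify nondeterministic algorithms, using the fact that each $\equiv_{A_i}$-class has a representative of size at most $w$ (your induced-matching argument is exactly the paper's) so that a table index can be stored in $O(w\log n)$ bits. The only cosmetic differences are that the paper phrases the representative maintenance as a deterministic incremental computation of $R_i$ from $R_{i-1}$ (rather than guessing $R'$ and checking), and for \textsc{Feedback Vertex Set} the paper spells out the Jaffke--Kwon--Telle table index explicitly (a forest on at most $6w$ vertices, a partition of its components, and a minimal vertex cover of the remaining crossing graph, replaced by two neighborhood-equivalence representatives), which fills in precisely the ``point that needs care'' you flagged.
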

\begin{proof}
    In all cases, we will show membership using the respective dynamic programming algorithms~\cite{BTV13,JKT20};
    we show here that these algorithms can be implemented using nondeterministic logarithmic space.
    
    Let $v_1, \ldots, v_n$ be the given linear order of the vertices of the input graph, with mim-width $w$.
    With $i$ going from $1$ to $n$, at step $i$ we store partial solutions associated with the subgraph of $G$ induced by the vertices $V_i = \{v_1, \ldots, v_i\}$.
    (For convenience, we let $\overline{V_i} = V(G) \setminus V_i$.)
    In all cases, partial solutions are indexed by constant-size collections of vertex sets 
    each of whose size is bounded by $O(w)$,
    and in some cases, they are representatives of equivalence classes of $\equiv_{V_i}$.
    The following claim has been shown in~\cite{Bui-XuanTV13}, 
    but we reprove it here to clarify that 
    the procedure associated with it can be implemented in logarithmic space.
    \begin{claim}\label{claim:mim:membership:logspace}
        For each $i \in [n-1]$, and each $S_i \subseteq V_i$,
        there is a set $R_i \subseteq V_i$ with $R_i \equiv_{V_i} S_i$ and $\card{R_i} \le w$.
        Furthermore, there is an algorithm using $O(w \log n)$ space
        that determines $R_i$ from $R_{i-1}$,
        where $R_{i-1} \equiv_{V_{i-1}} S_i \cap V_{i-1}$ and $\card{R_{i-1}} \le w$.
    \end{claim}
    \begin{claimproof}
        For $i \le w$, we can simply let $R_i = S_i$,
        so suppose that $i > w \ge 1$,
        and that $\card{S_i} > w$.
        By induction, we can assume that we have $R_{i-1} \subseteq V_{i-1}$ of size at most $w$ such that $R_{i-1} \equiv_{V_{i-1}} S_i \cap V_{i-1}$.
        Let $R_i' = R_{i-1} \cup \{v\}$.
        If $\card{R_i'} \le w$, then we let $R_i = R_i'$ and we are done.
        We may assume that $\card{R_i'} = w + 1$.
        If there is some $x \in R_i'$ such that 
        $N(R_i' \setminus \{x\}) \cap \overline{V_i} = N(R_i') \cap \overline{V_i}$,
        then we let $R_i = R_i' \setminus \{x\}$ and we are done.
        Otherwise, we know that each vertex $x$ in $R_i'$ has a neighbor $y$ in $\overline{V_i}$ such that $y$ 
        is non-adjacent to all vertices in $R_i' \setminus \{x\}$.
        This means that these $xy$-edges form an induced matching in $G[V_i, \overline{V_i}]$, a contradiction.
    \end{claimproof}
    
    The algorithm then works as follows. Upon arrival of the next vertex $v_{i+1}$, we nondeterministically guess its interaction with the solution: in the cases of \textsc{Independent Set}, \textsc{Dominating Set}, and \textsc{Feedback Vertex Set}, whether $v_{i+1}$ is in the solution or not, and in the case of \textsc{$q$-Coloring}, which of the $q$ colors $v_{i+1}$ receives. 
    We then nondeterministically guess the table index corresponding to the updated solution.
    
    In each of the cases, the table entries consist of a collection of a constant number of vertex sets of size
    at most $O(w)$, which implies that the size of each table entry is bounded by $O(w\log n)$.
    For the cases when these sets are representatives of the neighborhood equivalence,
    we can use \cref{claim:mim:membership:logspace} to conclude that the nondeterministic step 
    can be implemented using only $O(w \log n)$ space as well.
    \begin{description}
        \item[\ref{lem:mim:membership:IS}.\ Independent Set~\cite{BTV13}.]
            Here a table index consists of a single representative of an equivalence class of $\equiv_i$, and the table stores the size of a maximum independent set contained in the corresponding equivalence class. 
            Such a table index requires only $O(w\log n)$ bits.
        \item[\ref{lem:mim:membership:DS}.\ Dominating Set~\cite{BTV13}.]
            A table index consists of a pair of equivalence classes, $\mathcal{Q}_i$ of $\equiv_{V_i}$, and $\mathcal{R}_i$ of $\equiv_{\overline{V_i}}$. The table stores the minimum size of a set $Q \in \mathcal{Q}_i$ such that for any $R \in \mathcal{R}_i$, $Q \cup R$ dominates $V_i$. Since we only need two representatives, the table index requires again $O(w\log n)$ bits.
        \item[\ref{lem:mim:membership:qCol}.\ $q$-Coloring for fixed $q$~\cite{BTV13}.]
            Partial solutions are proper colorings of $G[V_i]$ and 
            a table index consists of representatives of equivalence classes $\mathcal{Q}_1, \ldots, \mathcal{Q}_q$ of $\equiv_{V_i}$ such that for all $i \in [q]$,
            color class $i$ in the coloring is contained in $\mathcal{Q}_i$.
            Therefore the table index uses $O(qw\log n)$ bits, which is $O(w\log n)$ since $q$ is a constant.
        \item[\ref{lem:mim:membership:FVS}.\ Feedback Vertex Set~\cite{JKT20}.] 
            The algorithm from~\cite{JKT20} solves the dual problem, \textsc{Maximum Induced Forest}. Here, partial solutions are induced forests $F$ of the subgraph of $G$ induced by $V_i$ and all vertices from $\overline{V_i}$ that have a neighbor in $V_i$.
            The table indices consist of 
            \begin{itemize}
                \item a forest $R$ consisting (roughly speaking) of the restriction of $F$ to $G[V_i, \overline{V_i}]$,
                \item a partition of the connected components of $R$, telling how they are joined together in~$F$,
                \item a minimal vertex cover $S$ of $G[V_i, \overline{V_i}] - V(R)$, and
                \item the number of vertices of $F$.
            \end{itemize}
            The role of $S$ is to control vertices that are either leaves of the solution $F$ or internal vertices of $F$ in $V_i$ that have a neighbor in $\overline{V_i}$. In particular, it indicates that neither of these can have a neighbor in part of the solution coming from $\overline{V_i}$.
            In~\cite{JKT20} it is shown that considering forests on at most $6w$ vertices as candidates for $R$ suffices, which yields that the first and second part of the table index can be represented using $O(w\log n)$ bits.
            The minimal vertex cover $S$ may have many vertices, but it can be replaced by a representative of the equivalence class of $\equiv_{V_i}$ containing $S \cap V_i$, 
            and a representative of the equivalence class of $\equiv_{\overline{V_i}}$ containing $S \cap \overline{V_i}$.
            Storing the size of $F$ clearly only requires $O(\log n)$ bits.
            Therefore, we can have an equivalent definition of the table entries that again only use $O(w\log n)$ bits.
    \end{description}
    This finishes the proof of the lemma.
\end{proof}

The following construction due to Fomin et al.~\cite{FGR20} will be used in the reductions given in this section.
It was used to prove W[1]-hardness of \textsc{Independent Set} and \textsc{Dominating Set} on $H$-graphs, which in turn implied W[1]-hardness of these problems parameterized by linear mim-width.
The \emph{bipartite complement} between two sets $A$ and $B$ in a graph $G=(V,E)$ is obtained by replacing the edges
in $E\cap A\times B$ by the edges in $A\times B \setminus E$.
%
\begin{definition}[\Splitcomp~\cite{FGR20}]\label{def:splitcomp}
    Let $G$ be a graph, and $A, B \subseteq V(G)$ with $A \cap B = \emptyset$.
    The \emph{\splitcomp between $A$ and $B$} is the following operation:
    \begin{enumerate}
        \item Subdivide each edge $uv$ with $u \in A$ and $v \in B$; call the resulting set of vertices $R$.
        \item Take the bipartite complement between $A$ and $R$ and the bipartite complement between $B$ and $R$.
    \end{enumerate}
\end{definition}

The reason why this operation is useful for reductions for problems parameterized by mim-width are the following bounds on the maximum induced matching size of cuts resulting from this construction. This can also be derived from~\cite{FGR20}, but we include a simple direct proof here for completeness.
\begin{lemma}\label{lem:splitcomp}
    Let $G$ be a graph, and $A, B \subseteq V(G)$ with $A \cap B = \emptyset$.
    Let $G'$ be the graph obtained from $G$ by applying the \splitcomp between $A$ and $B$; let $R$ denote the set of vertices created in the construction.
    Then, for all $C \in \{A, B\}$, $\cutmim_{G'}(C, R) \le 2$.
\end{lemma}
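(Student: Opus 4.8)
The plan is to fix $C \in \{A,B\}$ and argue directly about the bipartite graph $G'[C,R]$. First I would recall the structure of $R$: each $r \in R$ corresponds to an original edge $uv$ with $u \in A$, $v \in B$, and after the two bipartite complementations, $r$ is adjacent in $G'$ to \emph{every} vertex of $A$ except $u$ and to \emph{every} vertex of $B$ except $v$. So, writing $\mathrm{orig}(r) \cap C$ for the (unique) endpoint of the original edge that lies in $C$, the non-neighbors of $r$ inside $C$ are exactly $\{\mathrm{orig}(r)\cap C\}$ (if that endpoint lies in $C$) together with any vertices of $C$ not incident to the original edge — wait, more carefully: a vertex $c \in C$ is non-adjacent to $r$ in $G'$ if and only if $c$ was an endpoint of the original edge giving rise to $r$. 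Hence $c \not\sim_{G'} r \iff c = \mathrm{orig}(r)\cap C$.

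With this description in hand, suppose for contradiction that $\{c_1 r_1, c_2 r_2, c_3 r_3\}$ is an induced matching of size $3$ in $G'[C,R]$, with the $c_i \in C$ distinct and the $r_i \in R$ distinct. Being a matching that is \emph{induced} means $c_i \sim_{G'} r_j$ fails for all $i \neq j$; by the characterization above this forces $c_i = \mathrm{orig}(r_j)\cap C$ for all $i \neq j$. But for fixed $j$, the quantity $\mathrm{orig}(r_j)\cap C$ is a single vertex, so $c_i = \mathrm{orig}(r_j)\cap C$ for the two indices $i \neq j$ forces those two $c_i$'s to be equal, contradicting that the $c_i$ are distinct. (Concretely: $c_1 = \mathrm{orig}(r_2)\cap C$ and $c_3 = \mathrm{orig}(r_2)\cap C$ give $c_1 = c_3$.) Therefore no induced matching of size $3$ exists, i.e.\ $\cutmim_{G'}(C,R) \le 2$.

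The only genuinely delicate point is getting the adjacency description of $G'$ right after the two successive bipartite complements — in particular noticing that the complement between $A$ and $R$ does not touch $B$–$R$ pairs and vice versa, so the two operations act on disjoint parts of the edge set and the net effect on $C$–$R$ adjacency is exactly ``flip everything,'' leaving precisely the two original-edge incidences as the non-edges at each $r$. Once that is pinned down, the contradiction argument is a one-line pigeonhole on the $r_j$'s. I would also note that the bound $2$ is tight in general (two original edges $u_1 v$, $u_2 v$ sharing the endpoint $v \in B$ give, in $G'[A,R]$, the induced matching $u_1 r_2, u_2 r_1$ where $r_i$ subdivides $u_i v$), which explains why one cannot hope for $\le 1$; but this is not needed for the statement.
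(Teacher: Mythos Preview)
Your proof is correct and essentially matches the paper's argument: both identify that each $r \in R$ has exactly one non-neighbor in $C$ (the $C$-endpoint of its original edge) and then derive a contradiction from an assumed induced matching of size three. The only cosmetic difference is that the paper fixes $C = A$ and obtains the contradiction as an extra edge $a_2 r_3$ (since $a_1$ is forced to be the $A$-endpoint of both $e_2$ and $e_3$, so $a_2$ cannot be), whereas you pigeonhole on a single $r_j$ to force $c_1 = c_3$; these are the same idea.
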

\begin{proof}
    Suppose for a contradiction that there is an induced matching of size three in $G'[A, R]$, say $M = \{a_i r_i \mid a_i \in A, r_i \in R\}_{i \in [3]}$. For all $i \in [3]$, let $e_i$ denote the edge in $G$ whose subdivision created vertex $r_i$.
    Since $M$ is an induced matching and by construction, $a_1$ is the endpoint of $e_2$ and $e_3$. But this implies that $a_2$ is not the endpoint of $e_3$, and therefore that the edge $a_2r_3$ exists in $G'$.
\end{proof}

To prove the bound on the mim-width of linear orders constructed in the hardness proofs in this section, we need the following additional lemma
which can be seen as a variation of a lemma in~\cite{BrettelHMP22}, but for linear mim-width.
Recall that for a graph $G$ and a partition $\calP$ of $V(G)$,
the \emph{quotient graph} $G/\calP$ is the graph obtained from $G$ 
by contracting each part of $\calP$ into a single vertex.
The \emph{cutwidth} of a linear order $\Lambda = v_1, \ldots, v_n$, 
denoted by $\cutw(\Lambda)$ is the maximum, over all $i$, 
of the number of edges with one endpoint in $\{v_1, \ldots, v_i\}$ and the other in $\{v_{i+1}, \ldots, v_n\}$.
\begin{lemma}\label{lem:mimw:cutw}
    Let $G$ be a graph, let $\calP = (P_1, \ldots, P_r)$ be a partition of $V(G)$, and let $G' = G/\calP$.
    For all $i \in [r]$ let $\Lambda_i$ be a linear order of $P_i$ such that $\mimw_{G[P_i]}(\Lambda_i) \le c$, and suppose that for all distinct $i, j \in [r]$, $\cutmim_G(P_i, P_j) \le d$.
    Let $\Lambda = \Lambda_1, \Lambda_2, \ldots, \Lambda_r$,
    and let $\Lambda' = P_1, \ldots, P_r$ be the corresponding linear order of $G/\calP$. 
    Then, $\mimw(\Lambda) \le 2d\cdot\cutw(\Lambda') + c$.
\end{lemma}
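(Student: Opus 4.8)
The plan is to bound, for every cut of the linear order $\Lambda$, the maximum induced matching size of the corresponding bipartite subgraph of $G$. A cut of $\Lambda = \Lambda_1, \ldots, \Lambda_r$ corresponds to a set $A = P_1 \cup \cdots \cup P_{t-1} \cup Q$, where $t \in [r]$ and $Q$ is a prefix of $\Lambda_t$; write $B = V(G) \setminus A$, so that $B = (P_t \setminus Q) \cup P_{t+1} \cup \cdots \cup P_r$. I would fix a maximum induced matching $M$ of $G[A,B]$, so that $\card{M} = \mimval_G(A)$. For an edge $uv \in M$ with $u \in A$, $v \in B$, $u \in P_a$, $v \in P_b$, one checks that necessarily $a \le t \le b$, and that $a = b$ forces $a = b = t$. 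I would then split $M$ into the set $M_1$ of edges lying inside $P_t$ (those with $a = b = t$) and the set $M_2$ of edges joining two distinct parts; by the above these two sets partition $M$.

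For $M_1$: each of its edges has one endpoint in $Q$ and one in $P_t \setminus Q$, and since $M$ is induced in $G[A,B]$ there is no further edge of $G$ between these endpoints, so $M_1$ is an induced matching of the bipartite graph $G[P_t][Q, P_t\setminus Q]$, which is a cut of the linear order $\Lambda_t$; hence $\card{M_1} \le \mimw_{G[P_t]}(\Lambda_t) \le c$. For $M_2$: I would group its edges by the unordered pair $\{P_a, P_b\}$ of distinct parts they join (so $a < b$ and $a \le t \le b$). For a fixed such pair, I claim the edges of $M_2$ joining $P_a$ and $P_b$ form an induced matching of $G[P_a, P_b]$, so that at most $\cutmim_G(P_a, P_b) \le d$ of them occur. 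Finally, viewing $\{P_a, P_b\}$ as an edge of the quotient graph $G' = G/\calP$, it crosses the cut of $\Lambda'$ at position $t-1$ when $a < t$, and at position $t$ when $a = t$ (which forces $b > t$); since each of these two cuts is crossed by at most $\cutw(\Lambda')$ edges of $G'$, there are at most $2\cutw(\Lambda')$ such pairs, whence $\card{M_2} \le 2d\cdot\cutw(\Lambda')$. Summing, $\mimval_G(A) = \card{M_1} + \card{M_2} \le c + 2d\cdot\cutw(\Lambda')$, and taking the maximum over all cuts of $\Lambda$ gives $\mimw(\Lambda) \le 2d\cdot\cutw(\Lambda') + c$.

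The step I expect to be the main obstacle is the claim inside the $M_2$ analysis: that the edges of a maximum induced matching of $G[A,B]$ joining two fixed parts $P_a$ and $P_b$ form an induced matching of $G[P_a, P_b]$. This is not automatic, since being induced in $G[A,B]$ only forbids extra $A$--$B$ edges between the endpoints, whereas an induced matching of $G[P_a,P_b]$ forbids \emph{all} extra $P_a$--$P_b$ edges. The resolution is the observation that the endpoints of the matching edges under consideration are in fact confined to subparts lying wholly inside $A$ or wholly inside $B$: if $a < t$ then the $P_b$-endpoint lies in $B$, and if additionally $b = t$ it lies in $P_t \setminus Q$, while if $b > t$ it lies in $P_b \subseteq B$; symmetrically, when $a = t$ the $P_t$-endpoint lies in $Q \subseteq A$. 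Hence every edge of $G$ between two such endpoints is an $A$--$B$ edge, so there is none besides those of $M$. The remaining verifications — that $a \le t \le b$, that $M_1$ and $M_2$ partition $M$, and the counting of crossing edges against $\cutw(\Lambda')$ — are routine.
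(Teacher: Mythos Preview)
Your proof is correct and follows essentially the same approach as the paper: split the matching edges into those internal to the ``current'' part $P_t$ (bounded by $c$) and those between distinct parts, then bound the latter by $d$ per pair and count the relevant pairs against two cuts of the quotient order $\Lambda'$. Your treatment is in fact slightly more careful than the paper's --- you explicitly justify why the edges of $M$ between fixed parts $P_a$ and $P_b$ form an induced matching of $G[P_a,P_b]$ (the paper asserts this without comment), and you partition the cross pairs into the cases $a<t$ and $a=t$ rather than covering them by two overlapping cut-sets as the paper does.
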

\begin{proof}
    Let $(A, B)$ be any cut induced by $\Lambda$, and let $M$ be an induced matching in $G[A, B]$. Then, for some $i \in [r]$, 
    \[
    A = P_1 \cup \cdots \cup P_{i-1} \cup (P_i \cap A),
    \mbox{ and } 
    B = (P_i \cap B) \cup P_{i+1} \cup \cdots \cup P_r.
    \]
    The edges of $M$ can now be split into 
    the edges between $A \cap P_i$ and $B \cap P_i$, and 
    for all $h \le i$ and $j \ge i$, 
    where at least one of the inequalities is strict,
    the edges between $P_h$ and $P_j$.
    There are at most $c$ edges of the first kind, 
    since $\mimw_{G[P_i]}(\Lambda_i) \le c$, and $\Lambda$ is equal to $\Lambda_i$ on the vertices in $P_i$.
    For the second kind,
    for each pair $P_h$, $P_j$ when $h < i$ and $j \ge i$,
    we have at most $d$ edges, but only if $P_h P_j$ is an edge in $G/\calP$.
    Therefore the number of such pairs is equal to the size of the cut 
    between positions $i_1$ and $i$ in the 
    linear order $\Lambda'$, and therefore at most $\cutw(\Lambda')$.
    Similarly, the number of pairs
    $P_h$, $P_j$ with $h \le i$ and $j > i$,
    that have edges in $M$
    is equal to the size of the cut between positions $i$ and $i+1$ in $\Lambda'$,
    and therefore again at most $\cutw(\Lambda')$.
    Therefore the total number of the second kind of edges is at most
    $2d\cdot\cutw(\Lambda')$.
\end{proof}

\begin{definition}[Frame graph]\label{def:mim:frame}
    Let $(G, V_1, \ldots, V_r, f)$ be an instance of \CMClique or \CMISet;
    for each $i \in [r]$, let $V(i, 1), \ldots, V(i, k)$ denote the partition of $V_i$ according to $f$.
    
    The \emph{frame graph} $G'$ is obtained from $G$ by
    applying, for each $h \in [r-1]$ and each pair 
    $(i_1, j_1), (i_2, j_2) \in \{h, h+1\} \times [k]$, 
    where $(i_1, j_1) <_{LEX} (i_2, j_2)$,
    the \splitcomp between $V(i_1, j_1)$ and $V(i_2, j_2)$. 
    We denote the set of new vertices by $R(i_1, j_1, i_2, j_2)$.
    
    For convenience, we let $\calP$ denote the partition of $V(G')$ into
    $V(1, 1)$, $\ldots$, $V(r, k)$, $R(1, 1, 1, 1)$, $\ldots$, $R(r, k, r, k)$, and we define the following auxiliary partial function $\phi \colon E(G) \to V(G')$: 
    For all $(i_1, j_1)$ and $(i_2, j_2)$ as above, for each $v_1 \in V(i_1, j_1)$ and $v_2 \in V(i_2, j_2)$ with $v_1v_2 \in E(G)$, we let $\phi(v_1 v_2) \in R(i_1, j_1, i_2, j_2)$ be the vertex created when subdividing $v_1v_2$.
\end{definition}

The following argument will be repeated in several proofs, we therefore extract it as a separate lemma.
\begin{lemma}\label{lem:mim:frame:is}
    Let $(G, V_1, \ldots, V_r, f)$ be an instance of \CMClique,
    and let $G'$ be its frame graph; adapt the notation from \cref{def:mim:frame}.
    Then, $G'$ has an independent set $S$ with $\card{S \cap P} = 1$
    for all $P \in \calP$ if and only if $G$ has a chained multicolored clique.
\end{lemma}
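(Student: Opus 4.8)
The plan is to read off the exact adjacency structure of the frame graph $G'$ produced by the iterated \splitcomp operations, and then transfer a chained multicolored clique of $G$ to the required independent set of $G'$ and back. I would first assume without loss of generality that $G$ has at least one edge between $V(i_1,j_1)$ and $V(i_2,j_2)$ for every $h\in[r-1]$ and every pair $(i_1,j_1)<_{LEX}(i_2,j_2)$ in $\{h,h+1\}\times[k]$; otherwise the corresponding set $R(i_1,j_1,i_2,j_2)$ is empty, and then both sides of the equivalence fail (an independent set cannot hit an empty part of $\calP$, and, as the argument below shows, a chained multicolored clique would force that edge to exist).

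The first real step is to pin down $G'$ exactly. A single \splitcomp between $V(i_1,j_1)$ and $V(i_2,j_2)$ deletes every edge of $G$ between these two sets, makes $R(i_1,j_1,i_2,j_2)$ an independent set, and gives each $\phi(v_1v_2)\in R(i_1,j_1,i_2,j_2)$ (where $v_1\in V(i_1,j_1)$, $v_2\in V(i_2,j_2)$, $v_1v_2\in E(G)$) exactly the neighborhood $(V(i_1,j_1)\setminus\{v_1\})\cup(V(i_2,j_2)\setminus\{v_2\})$; it affects no other adjacency, since every subsequent \splitcomp acts only on pairs of the sets $V(\cdot,\cdot)$, and the $R$-sets are pairwise disjoint and disjoint from all $V(\cdot,\cdot)$. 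Because every edge of $G$ joins blocks $V_i,V_j$ with $|i-j|\le 1$, each edge of $G$ is subdivided by exactly one \splitcomp step, so $G'$ has no edge between two distinct color classes at all; the only edges of $G'$ are the edges of $G[V(i,j)]$ inside the color classes, together with the $R$-to-color-class edges just described.

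For the direction from a chained multicolored clique to the independent set, I would take a chained multicolored clique $W$, pick one vertex $w(i,j)\in W\cap V(i,j)$ for each $i,j$, and note that $\{w(i,j)\}_{i,j}$ is again a chained multicolored clique, since any two of its vertices lying in a common $V_h\cup V_{h+1}$ lie in the clique $W\cap(V_h\cup V_{h+1})$. Hence for every relevant pair the edge $w(i_1,j_1)w(i_2,j_2)$ exists in $G$, so $\phi(w(i_1,j_1)w(i_2,j_2))$ is defined, and I set $S$ to be the union of all $w(i,j)$ with all $\phi(w(i_1,j_1)w(i_2,j_2))$. This set meets each part of $\calP$ exactly once, and it is independent: two $w$-vertices are non-adjacent (distinct color classes, or equal), two $R$-vertices are non-adjacent, and $\phi(w(i_1,j_1)w(i_2,j_2))$ has neighbors only in $V(i_1,j_1)\cup V(i_2,j_2)$ but is adjacent to neither $w(i_1,j_1)$ nor $w(i_2,j_2)$. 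Conversely, given an independent set $S$ hitting each part of $\calP$ once, let $w(i,j)$ be the unique vertex of $S\cap V(i,j)$ and $W=\{w(i,j)\}_{i,j}$; for each relevant pair, the unique vertex $r^\ast=\phi(v_1v_2)$ of $S\cap R(i_1,j_1,i_2,j_2)$ is adjacent to all of $V(i_1,j_1)\setminus\{v_1\}$, forcing $w(i_1,j_1)=v_1$, and symmetrically $w(i_2,j_2)=v_2$, so $w(i_1,j_1)w(i_2,j_2)=v_1v_2\in E(G)$; therefore each $W\cap(V_h\cup V_{h+1})$ is a clique and $W$ is a chained multicolored clique.

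I expect the only genuine difficulty to be the bookkeeping involved in pinning down $G'$: carefully verifying that the successive \splitcomp operations do not interfere, so that the $G'$-neighborhood of each subdivision vertex $\phi(v_1v_2)$ is precisely $(V(i_1,j_1)\setminus\{v_1\})\cup(V(i_2,j_2)\setminus\{v_2\})$ and that $G'$ indeed has no direct edges between distinct color classes. Once this description is secured, both implications reduce to the short adjacency checks sketched above, and the degenerate empty-$R$ case is handled by the preliminary simplification.
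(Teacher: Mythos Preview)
Your proposal is correct and follows essentially the same argument as the paper's proof: both directions hinge on the fact that $\phi(v_1v_2)$ is adjacent in $G'$ to every vertex of $V(i_1,j_1)\cup V(i_2,j_2)$ except $v_1$ and $v_2$, so the unique $R$-vertex in $S$ forces the two chosen color-class vertices to be the endpoints of the corresponding $G$-edge. You are simply more explicit than the paper in spelling out the full adjacency structure of $G'$ (including that successive \splitcomp steps do not interfere and that edges inside each $V(i,j)$ survive) and in treating the degenerate empty-$R$ case, both of which the paper leaves implicit.
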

\begin{proof}
    Suppose $G'$ has an independent set $S$ with $\card{S \cap P} = 1$ for all $P \in \calP$.
    Let $v_{i, j} \in S \cap V(i, j)$ for all $i \in [r]$, $j \in [k]$.
    We claim that this implies that 
    for all $h \in [r-1]$, and all $(i_1, j_1), (i_2, j_2) \in \{h, h+1\} \times [k]$ with $(i_1, j_1) <_{LEX} (i_2, j_2)$, 
    we have that $\phi(v_{i_1, j_1} v_{i_2, j_2}) \in S \cap R(i_1, j_1, i_2, j_2)$, 
    which implies that $v_{i_1, j_1}v_{i_2, j_2} \in E(G)$ and in particular that $S \cap V(G)$ is a chained multicolored clique in $G$. 
    Let $r \in S \cap R(i_1, j_1, i_2, j_2)$ and suppose $r \neq \phi(v_{i_1, j_1} v_{i_2, j_2})$.
    We may assume that $r = \phi(v, w)$ where $v \in V(i_1, j_1) \setminus \{v_{i_1, j_1}\}$. 
    But then, $v_{i_1, j_1}r$ is an edge in $G'$, a contradiction.
    
    For the other direction, let $W \subseteq V(G)$ be the chained multicolored clique in $G$.
    Let $S = \emptyset$.
    For each $i \in [r]$ and $j \in [k]$, we add the vertex $v_{i, j} \in W \cap V(i, j)$ to $S$.
    Next, for each $h \in [r-1]$, and each pair $(i_1, j_1), (i_2, j_2) \in \{h,h+1\} \times [k]$ where $(i_1, j_1) <_{LEX} (i_2, j_2)$ (with $<_{LEX}$ the lexicographic ordering), we add $\phi(v_{i_1, j_1} v_{i_2, j_2})$ to $S$.
    Note that since $W$ is a chained multicolored clique, the edge $v_{i_1, j_1}v_{i_1, j_2}$ always exists in $G$.
    It follows immediately from the construction that $S$ is an independent set in $G'$, and that for all $P \in \calP$, $\card{S \cap P} = 1$.
\end{proof}

\begin{lemma}\label{lem:mim:hardness:IS}
    \textsc{Independent Set} parameterized by the mim-width of a given linear order of the input graph is XNLP-hard.
\end{lemma}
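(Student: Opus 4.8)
The plan is to give a pl-reduction from \CMClique, which is XNLP-complete by \cref{thm:chained:MCC:MCI}. Given an instance $(G, V_1, \ldots, V_r, f)$, I would first build its frame graph $G'$ as in \cref{def:mim:frame}, and then obtain $G''$ from $G'$ by turning each part $P \in \calP$ into a clique (i.e.\ adding every edge missing inside a part). The reduction outputs $(G'', \Lambda, |\calP|)$, where $\Lambda$ is a linear order of $V(G'')$ described below, and I claim that $G$ has a chained multicolored clique if and only if $G''$ has an independent set of size at least $|\calP|$.

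For correctness I would argue as follows. Since every $P \in \calP$ is a clique in $G''$, any independent set of $G''$ meets each part in at most one vertex; hence an independent set of size at least $|\calP|$ meets each part in exactly one vertex. Conversely, an independent set meeting each part exactly once uses no edge inside a part, so it is also an independent set of $G'$; thus the independent sets of $G''$ and of $G'$ that meet each part exactly once coincide. Therefore $G''$ has an independent set of size $\ge |\calP|$ iff $G'$ has an independent set $S$ with $\card{S \cap P} = 1$ for all $P \in \calP$, which by \cref{lem:mim:frame:is} happens iff $G$ has a chained multicolored clique. (When the answer is YES every part of $\calP$ is automatically nonempty, since a chained multicolored clique contains an edge between every pair of colour classes it links, so the target $|\calP|$ is attainable; and when the answer is NO, the clique-per-part structure forbids any independent set of size $\ge |\calP|$.)

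For the parameter, I would bound $\mimw(\Lambda)$ via \cref{lem:mimw:cutw} with the partition $\calP$. Each $G''[P]$ is a clique, so for any order $\Lambda_P$ of $P$ we have $\mimw_{G''[P]}(\Lambda_P) \le 1$, giving $c = 1$. For distinct parts $P, Q$: if both are $V$-parts or both are $R$-parts there are no edges between them in $G'$ (and clique-ifying inside parts adds none), while if one is a $V$-part and the other an $R$-part then \cref{lem:splitcomp} gives $\cutmim_{G''}(P, Q) \le 2$; so $d = 2$. Finally I take $\Lambda'$ to be the level-by-level order of the quotient $G''/\calP$: list the $V$-parts of level $1$, then the $R$-parts created in the \splitcomp block for $h=1$, then the $V$-parts of level $2$, then the block for $h=2$, and so on. Every edge of $G''/\calP$ joins a $V$-part of some level to an $R$-part of an adjacent block, and each $R$-part has degree exactly $2$ there; hence every prefix cut of $\Lambda'$ is crossed only by edges incident to parts lying in $O(1)$ consecutive blocks, of which there are $O(k^2)$, each of degree $O(k)$, so $\cutw(\Lambda') = k^{O(1)}$. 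Letting $\Lambda = \Lambda_{P_1}, \ldots, \Lambda_{P_{|\calP|}}$ be the concatenation of the $\Lambda_P$ in the order given by $\Lambda'$, \cref{lem:mimw:cutw} yields $\mimw(\Lambda) \le 2 \cdot 2 \cdot \cutw(\Lambda') + 1 = k^{O(1)}$, a function of the parameter only. Since building $G'$, clique-ifying the parts, and producing $\Lambda$ are all clearly doable in logarithmic space, this is a pl-reduction.

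The step I expect to be the main obstacle is the mim-width bound: one must choose the clustering and the ordering of the quotient graph carefully so that \cref{lem:mimw:cutw} applies, and check that forcing the parts to be cliques (which is what guarantees a single chosen vertex per part) does not spoil the $\cutmim \le 2$ bounds from \cref{lem:splitcomp} --- it does not, because those cuts only count edges \emph{between} parts whereas the added edges lie \emph{inside} parts.
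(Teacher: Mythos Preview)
Your proposal is correct and follows essentially the same approach as the paper: reduce from \CMClique via the frame graph with each part turned into a clique, invoke \cref{lem:mim:frame:is} for correctness, and bound the linear mim-width by combining \cref{lem:splitcomp} with \cref{lem:mimw:cutw} on a level-by-level order of the quotient. The paper's linear order interleaves each $V(i,j)$ with the $R$-parts whose first index pair is $(i,j)$, yielding the explicit bound $\mimw(\Lambda)=O(k^2)$, but your block-by-block variant works just as well and gives the same asymptotic cutwidth.
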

\begin{proof}
    We give a parameterized logspace reduction from \textsc{Chained Multicolored Clique} (\CMC);
    let $\calI = (G, V_1, \ldots, V_r, f)$ be an instance of \CMC.
    We create an \ISet instance whose graph $G'$ is the frame graph of $\calI$ (\cref{def:mim:frame}).
    We adapt the notation from \cref{def:mim:frame}.
    We make each $P \in \calP$ a clique in $G'$, 
    and we let $k' = \card{\calP} = (r-1)2k^2 + k$.
    
    Since $G'$ is simply the frame graph of $\calI$ where each $P$ is turned into a clique, correctness of this reduction follows immediately from \cref{lem:mim:frame:is}.
    
    \begin{claim}\label{claim:mim:is:mim}
        There is a logspace-transducer that constructs a linear order $\Lambda$ of $V(G')$ such that $\mimw(\Lambda) = O(k^2)$.
    \end{claim}
    \begin{claimproof}
        For all $i \in [r]$ and $j \in [k]$, we let $\Lambda(i, j)$ be an arbitrary linear order of $V(i, j)$.
        For all $h \in [r-1]$, and all $(i_1, j_1), (i_2, j_2) \in \{h, h+1\} \times [k]$ with $(i_1, j_1) <_{LEX} (i_2, j_2)$, we let $\Gamma(i_1, j_1, i_2, j_2)$ be an arbitrary linear order of $R(i_1, j_1, i_2, j_2)$.
        The desired linear order $\Lambda$ traverses $V(G')$ as follows:
        Consider $(i, j) \in [r] \times [k]$ in lexicographically increasing order.
        First, we follow $\Lambda(i, j)$, 
        and then 
        $\Gamma(i, j, i, j+1)$, $\ldots$, $\Gamma(i, j, i, k)$,
        and if $i < r$, then $\Gamma(i, j, i+1, 1)$, $\ldots$, $\Gamma(i, j, i+1, k)$.
        It is clear that this linear order of $G'$ can be created using $O(\log n)$ bits of memory, where $n$ is the number of vertices of $G$.
        
        Clearly, each $\Lambda(i, j)$ and each $\Gamma(i_1, j_1, i_2, j_2)$ has mim-width at most $1$.
        The only edges in $G'$ between different parts of $\calP$
        are between $V(i_1, j_1)$ and $R(i_1, j_1, i_2, j_2)$, 
        and between $V(i_2, j_2)$ and $R(i_1, j_1, i_2, j_2)$, where $(i_1, j_1) <_{LEX} (i_2, j_2)$.
        By construction it therefore follows from \cref{lem:splitcomp} that for each pair of distinct parts $P_1, P_2 \in \calP$,
        $\cutmim_{G'}(P_1, P_2) \le 2$.
        Let $\Lambda'$ is the linear order of $V(G'/\calP)$ where the vertices of $G/\calP$ appear in the same order as in~$\Lambda$.
        We can observe that $\cutw(\Lambda') = O(k^2)$, 
        and therefore the claim follows from \cref{lem:mimw:cutw}.
    \end{claimproof}
    This concludes the proof of \cref{lem:mim:hardness:IS}.
\end{proof}

\begin{lemma}\label{lem:mim:hardness:DS}
    \DomSet parameterized by the mim-width of a given linear order of the input graph is XNLP-hard.
\end{lemma}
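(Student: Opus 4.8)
The plan is to give a parameterized logspace reduction from \CMISet, structurally parallel to the proof of \cref{lem:mim:hardness:IS}, but exploiting that in the frame graph it is the \emph{absence} of an original edge that makes domination easy. Let $\calI = (G, V_1, \ldots, V_r, f)$ be a \CMISet instance; we may assume every color class $V(i,j)$ is nonempty, since otherwise $\calI$ is trivially negative. First I would build the frame graph $G'$ of $\calI$ (\cref{def:mim:frame}) and adopt its notation. Then I would obtain $G''$ from $G'$ by (i) turning each part $P \in \calP$ into a clique and (ii) adding, for each $i \in [r]$ and $j \in [k]$, a fresh vertex $d_{i,j}$ with $N(d_{i,j}) = V(i,j)$. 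Finally I would ask for a dominating set of $G''$ of size $k' = rk$.

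The correctness rests on one observation that follows directly from \cref{def:splitcomp}: for $v_1 \in V(i_1,j_1)$ and $v_2 \in V(i_2,j_2)$, the set $\{v_1, v_2\}$ dominates all of $R(i_1,j_1,i_2,j_2)$ if and only if $v_1v_2 \notin E(G)$; and when $v_1v_2 \in E(G)$, the vertex $\phi(v_1v_2)$ is the unique vertex of $R(i_1,j_1,i_2,j_2)$ not dominated by $\{v_1,v_2\}$, with all of its neighbors outside that part lying in $V(i_1,j_1)\cup V(i_2,j_2)$. For the forward direction, from a \CMISet-solution $W$ I would pick one $v_{i,j} \in W \cap V(i,j)$ per class; then $\{v_{i,j}\}$ has size $rk$, dominates every $d_{i,j}$ and every clique $V(i,j)$, and dominates every $R(i_1,j_1,i_2,j_2)$ because $W$ is independent within each $V_h \cup V_{h+1}$. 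For the reverse direction, a dominating set $S$ with $\card{S}\le rk$ must meet each of the $rk$ pairwise-disjoint sets $N[d_{i,j}] = V(i,j)\cup\{d_{i,j}\}$, so it meets each in exactly one vertex and contains no vertex of any $R$-part; a swap argument then replaces any chosen $d_{i,j}$ by an arbitrary vertex of $V(i,j)$ and keeps $S$ a dominating set of size $rk$ (this is where clique-ness of $V(i,j)$ is used, and it is safe since $d_{i,j}$ only dominated $V(i,j)\cup\{d_{i,j}\}$, which the substitute still dominates). The resulting set $S' = \{v_{i,j}\}$ dominates every $R$-part and picks one vertex per color class, so by the observation $v_{i_1,j_1}v_{i_2,j_2}\notin E(G)$ for every pair handled by the frame construction, i.e., $S'$ is a chained multicolored independent set of $\calI$.

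For the parameter bound I would reuse \cref{claim:mim:is:mim} almost verbatim, placing each $d_{i,j}$ immediately after the vertices of $V(i,j)$ (so that $V(i,j)\cup\{d_{i,j}\}$ becomes a clique of mim-width $1$, and $d_{i,j}$ adds no cross-part edge); \cref{lem:splitcomp} still gives $\cutmim_{G''}(P_1,P_2)\le 2$ for distinct parts, the quotient order has cutwidth $O(k^2)$, and \cref{lem:mimw:cutw} yields a linear order of mim-width $O(k^2)$. Everything is computable by a logspace transducer and $O(k^2)$ depends only on the source parameter $k$, so this is a valid pl-reduction. I expect the reverse direction to be the main obstacle: one must see that a tight budget simultaneously forbids choosing $R$-vertices and — through the swap — renders the guard vertices $d_{i,j}$ useless, because otherwise a dominating set could effectively leave a color class empty and the reduction would be incorrect; relatedly, one has to reduce from \CMISet rather than \CMClique, since an undominated subdivision vertex survives exactly when the two chosen endpoints are adjacent.
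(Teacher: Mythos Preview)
Your proposal is correct and essentially identical to the paper's own proof: both reduce from \CMISet, build the frame graph with each part turned into a clique, attach a private guard vertex to every $V(i,j)$, and set the budget to $rk$. Your write-up is in fact slightly more explicit than the paper's (you spell out the counting-plus-swap argument for the reverse direction and the nonemptiness assumption on color classes, where the paper just says ``we may assume that each $V(i,j)$ contains a vertex from $S$''), but the construction, the correctness argument, and the $O(k^2)$ mim-width bound via \cref{lem:splitcomp} and \cref{lem:mimw:cutw} all match.
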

\begin{proof}
    The proof is very similar to that of \cref{lem:mim:hardness:IS}, so we mainly point out the differences;
    the first one is that we reduce from \CMISet instead of \CMClique.
    Let $\calI = (G, V_1, \ldots, V_r, f)$ be an instance of \CMISet.
    We obtain the graph $G''$ of the \DomSet instance in two steps.
    We first create the frame graph $G'$ of $\calI$, 
    adapt the notation from \cref{def:mim:frame},
    and make each $P \in \calP$ a clique in $G'$.
    To obtain $G''$ from $G'$,
    we add a vertex $z_{i, j}$ whose neighborhood is $V(i, j)$, 
    for each $(i, j) \in [r] \times [k]$;
    we let $V(i, j)'' = V(i, j) \cup \{z_{i, j}\}$, and $Z = \{z_{i, j} \mid i \in [r], j \in [k]\}$.
    We let $k'' = rk$.
    
    Suppose that $G$ has a chained multicolored independent set $W$.
    We claim that $W$ is a dominating set in $G''$, and clearly $\card{W} = rk = k''$. Since each $V(i, j)''$ is a clique, and $W \cap V(i, j) \neq \emptyset$, 
    we have that for all $v \in V(i, j)''$, $N_{G'}[v] \cap W \ge 1$.
    Next suppose that there is some $r \in R(i_1, j_1, i_2, j_2)$ with $N(r) \cap W = \emptyset$. Let $e$ denote the edge in $G$ correspding to $r$. The only non-neighbor of $r$ in $V(i_1, j_1)$ is the endpoint $v_1$ of $e$ in $V(i_1, j_1)$, and the only non-neighbor of $r$ in $V(i_2, j_2)$ is the endpoint $v_2$ of $e$ in $V(i_2, j_2)$. This implies that $\{v_1, v_2\} \subseteq W$, 
    but then $e$ is an edge between two vertices in $W$ in $G$, a contradiction.
    
    Now suppose that $G''$ has a dominating set $S$ of size $k''$. Due to the vertices in $Z$, we may assume that each $V(i, j)$ contains a vertex from $S$. Using similar arguments as in the previous paragraph, we can conclude that $S$ is a chained multicolored independent set in $G$.
    Using the same construction as in \cref{claim:mim:is:mim}, only taking into account the vertices in $Z$, we can argue that there is a logspace-transducer constructing a linear order of mim-width $O(k^2)$ of $G''$.
\end{proof}

\begin{lemma}\label{lem:mim:hardness:FVS}
    \FVSet parameterized by the mim-width of a given linear order of the vertices of the input graph is XNLP-hard.
\end{lemma}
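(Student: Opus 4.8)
The plan is to give a parameterized logspace reduction from \CMClique, following the template of \cref{lem:mim:hardness:IS} and \cref{lem:mim:hardness:DS}. Let $\calI=(G,V_1,\dots,V_r,f)$ be a \CMC instance, let $G'$ be its frame graph (\cref{def:mim:frame}), and adopt that notation. As before, we build the graph $H$ of the \FVSet instance from $G'$ by equipping every part $P\in\calP$ with a \emph{selection gadget}: we turn $P$ into a clique and add two further vertices $s_P,s_P'$, non-adjacent to each other and each adjacent to all of $P$. A short case check shows that an induced forest of $H$ contains at most two vertices of $P$ (as $P$ is a clique), at most one if it also contains $s_P$ or $s_P'$, and in every case at most three vertices of $P\cup\{s_P,s_P'\}$, with three only when it contains exactly one vertex of $P$ and both of $s_P,s_P'$. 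Working with the dual \MIForest formulation and asking whether $H$ has an induced forest on at least $3\card{\calP}$ vertices, any solution of this size must therefore pick exactly one vertex from each part together with both of its pendant vertices.

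Since the pendants attach as leaves, such a forest consists, up to its leaves, of a transversal $T$ of $\calP$ in $G'$ that induces a forest, and we want to conclude as in \cref{lem:mim:frame:is} that $T$ comes from a chained multicolored clique in $G$. For that we must force, for every part $R(i_1,j_1,i_2,j_2)$, that its selected vertex is the subdivision vertex $\phi(v_1v_2)$ of the edge joining the two selected $V$-vertices $v_1\in V(i_1,j_1)$, $v_2\in V(i_2,j_2)$; otherwise the selected $R$-vertex is adjacent to $v_1$ or to $v_2$ and we would only obtain a \emph{partial} clique. I would add, to each selection gadget, a consistency vertex per incident pair so that a mismatched choice of $R$-vertex ends up on a short cycle with the selected $V$-vertices and gadget vertices of that part, while the matched choice keeps everything a forest. \textbf{This is the main obstacle:} because the colour classes inside a single $V_h$ are pairwise linked, a carelessly designed consistency structure would itself close long cycles through the selected transversal, so the gadget must be set up so that the cycle it forces on a mismatched configuration stays local to the part in question; this is exactly where the \FVSet reduction departs from the \ISet and \DomSet ones, and the construction here adapts the gadget used by Jaffke et al.~\cite{JKT20} for their $W[1]$-hardness proof. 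Correctness of the reduction then follows by combining the counting argument above with \cref{lem:mim:frame:is}.

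The mim-width bound is handled as in \cref{claim:mim:is:mim}. Take $\calP'$ to consist of the augmented parts $\widehat P$ (each $P$ together with its gadget vertices) and the consistency vertices as singleton parts; order each $\widehat P$ by listing the vertices of $P$ first and its gadget vertices afterwards, and order the $\widehat P$'s as in \cref{claim:mim:is:mim}. Each $\widehat P$ is a clique with a bounded number of vertices joined to all of it, so the induced linear order has mim-width $O(1)$; the only edges between distinct parts of $\calP'$ run between a $V$-part and an incident $R$-part and satisfy $\cutmim\le 2$ by \cref{lem:splitcomp}, while a consistency vertex forms at most a star into any one part and hence adds only $O(1)$ to each cross cut. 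Since the corresponding quotient order has cutwidth $O(k^2)$, \cref{lem:mimw:cutw} gives $\mimw=O(k^2)$, and the whole construction, together with the new parameter value $k'=O(k^2)$, is plainly computable by a logspace transducer. Membership in XNLP is already given by \cref{lem:mim:membership}(\ref{lem:mim:membership:FVS}).
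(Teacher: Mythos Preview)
Your construction coincides with the paper's up to the point where you add the two vertices $s_P,s_P'$ (the paper calls them $a_P,b_P$), and your counting argument that an induced forest of size $3\card{\calP}$ must contain exactly one vertex of each $P$ together with both pendants is correct and is exactly what the paper argues. You also correctly isolate the real difficulty: a transversal $T$ of $\calP$ that merely induces a forest in $G'$ need not be an independent set, so \cref{lem:mim:frame:is} does not apply directly.

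Where your proposal falls short is the fix. You gesture at ``consistency vertices per incident pair'' and appeal to the gadget from~\cite{JKT20}, but you do not actually specify the construction, and you yourself flag that keeping such gadgets from closing spurious cycles is delicate. The paper sidesteps all of this with a single extra vertex: add one vertex $c$ with $N(c)=V(G')$ (adjacent to every vertex of the frame graph, but not to the $a_P,b_P$), and set the target to $k''=3\card{\calP}+1$. Then any edge inside the transversal $T$ forms a triangle with $c$, so an induced forest containing $c$ forces $T$ to be an independent set, and \cref{lem:mim:frame:is} applies immediately. Conversely, if $S$ is the independent transversal guaranteed by \cref{lem:mim:frame:is}, then $S\cup\{a_P,b_P:P\in\calP\}\cup\{c\}$ induces a tree. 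Adding a single vertex changes the mim-width of any linear order by at most one, so the bound from \cref{claim:mim:is:mim} carries over unchanged; no further gadget analysis is needed.
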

\begin{proof}
    Again the proof is very similar to that of \cref{lem:mim:hardness:IS}.
    We give a parameterized logspace reduction from \CMClique
    to the dual problem of \FVSet, \MIForest.
    Let $\calI = (G, V_1, \ldots, V_r, f)$ be an instance of $\CMClique$.
    We first construct $G'$ as the frame graph of $\calI$,
    adapting the notation of \cref{def:mim:frame},
    and making each $P \in \calP$ a clique in $G'$.
    The graph $G''$ of the \MIForest instance is then obtained from $G'$ 
    by adding
    
    \begin{itemize}
        \item for each $P \in \calP$, two vertices $a_P, b_P$ with $N(a_P) = N(b_P) = P$, and
        \item one more vertex $c$ with $N(c) = V(G')$.
    \end{itemize}
    
    We let 
    $P'' = P \cup \{a_P, b_P\}$, 
    $A = \{a_P \mid P \in \calP\}$,
    $B = \{b_P \mid P \in \calP\}$, and 
    $Z = A \cup B \cup \{c\}$.
    We let $k'' = 3\card{\calP} + 1 = 6k^2(r-1) + 3k + 1$.
    
    Suppose that $G$ has a chained multicolored clique $W$.
    Let $S$ be an independent set in $G'$ with $\card{S \cap P} = 1$ for all $P \in \calP$ which exists by \cref{lem:mim:frame:is}.
    We let $S'' = S \cup A \cup B \cup \{c\}$. 
    Since $S$ is also an independent set in $G''$, it follows from the construction that $S''$ induces a tree on $k''$ vertices in $G''$.
    
    Suppose that $G''$ has an induced forest $S$ on $k''$ vertices.
    Since each $P \in \calP$ is a clique, $\card{S \cap P} \le 2$.
    Moreover, the constraint imposed by $k''$ requires us to use three vertices from each $P''$; and if $\card{S \cap P} = 2$, we have that $a_P \cup (S \cap P)$ and $b_P \cup (S \cap P)$ are triangles.
    We therefore have that $\card{S \cap P} = 1$ and that $A \cup B \subseteq S$. Moreover, $k''$ demands that $c$ is in $S$ as well.
    Now, the only way for $S$ to induce a forest in $G''$ is if $S \cap \bigcup_{P \in \calP} P$ is an independent set of size $\card{\calP}$ in $G'$. We can conclude by \cref{lem:mim:frame:is} that this implies a multicolored clique in $G$.
    
    The logspace construction of a linear order of mim-width $O(k^2)$ of $V(G'')$ can once more be done in analogy with \cref{claim:mim:is:mim}.
\end{proof}

\begin{lemma}\label{lem:mim:hardness:qCol}
    For fixed $q \ge 5$,
    \qCol parameterized by the mim-width of a given linear order of the vertices of the input graph is XNLP-hard.
\end{lemma}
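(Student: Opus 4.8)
The plan is to note first that membership in XNLP follows from \cref{lem:mim:membership}(\ref{lem:mim:membership:qCol}). For hardness I would give a parameterized logspace reduction from \CMClique, following the same template as \cref{lem:mim:hardness:IS,lem:mim:hardness:DS,lem:mim:hardness:FVS}: start from the frame graph $G'$ of an instance $\calI = (G, V_1, \ldots, V_r, f)$ (\cref{def:mim:frame}) and augment it with gadgets to obtain a graph $G''$ and a fixed $q$. The target equivalence is that proper $q$-colourings of $G''$ correspond exactly to independent transversals of $\calP$ in $G'$ (sets $S$ with $\card{S \cap P} = 1$ for all $P \in \calP$), so that \cref{lem:mim:frame:is} delivers the equivalence with chained multicolored cliques. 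The reading of a colouring as a transversal is: designate one colour, say colour~$1$, as \emph{active}, and let $S$ be the set of active vertices. Properness of $G'$ already forces $S$ to be independent, since no $G'$-edge may have two active endpoints; the whole task is to force $\card{S \cap P} = 1$ for every part.

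This I would achieve by attaching to each part $P \in \calP$ a \emph{selector gadget} that, in every proper $q$-colouring, forces exactly one vertex of $P$ to be active and all others to take a fixed ``bulk'' colour: colour~$2$ for the $V$-type parts $V(i,j)$ and colour~$3$ for the $R$-type parts $R(i_1,j_1,i_2,j_2)$. Two distinct bulk colours are needed because in $G'$ a $V$-part and an $R$-part may be joined by an almost complete bipartite graph, while distinct parts of the same type are non-adjacent. The selector gadget is a short state-propagation path along the vertices of $P$, plus a constant number of helper vertices and helper cliques; after a normalisation step (the helper cliques may be assumed to take prescribed colours), the path carries a bit ``an active vertex of $P$ has already been seen'', its two endpoints are forced to the two values of this bit, and passing an active vertex is the only way to flip it (and flipping it a second time is forbidden). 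Realising the endpoint forcing, the single flip, and properly colouring the helper cliques simultaneously is exactly what needs a fifth colour, so the construction works for any fixed $q \ge 5$. Conversely, any single choice of an active vertex of $P$ extends to a proper colouring of its gadget.

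For correctness in the forward direction: given a chained multicolored clique in $G$, take the transversal $S$ from \cref{lem:mim:frame:is}, colour $S$ active, the rest of each $V$-part colour~$2$, the rest of each $R$-part colour~$3$, and extend each selector gadget accordingly; this is a proper $q$-colouring because, once the transversal is fixed, the split-complement structure of $G'$ is properly $3$-coloured (a non-active $V$-part vertex sees only colours $1$ and $3$ across $G'$, and symmetrically for $R$). The backward direction uses the selector gadgets to get exactly one active vertex per part and properness of $G'$ to get independence, hence a transversal, hence by \cref{lem:mim:frame:is} a chained multicolored clique. For the mim-width bound I would mimic \cref{claim:mim:is:mim}: place each ``super-part'' $P'' = P \cup (\text{its selector gadget})$ consecutively, with an internal linear order of $P''$ of constant mim-width (the gadget is path-like), ordering the super-parts as the parts are ordered in \cref{claim:mim:is:mim}. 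Since the selector gadgets introduce no edges between distinct super-parts, the only inter-super-part edges are those of $G'$, for which $\cutmim_{G'} \le 2$ by \cref{lem:splitcomp}; the quotient order has cutwidth $O(k^2)$, so \cref{lem:mimw:cutw} yields mim-width $O(k^2)$, and the whole construction is produced by a logspace transducer.

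The main obstacle I expect is the selector gadget itself: building a gadget of size linear in $\card{P}$ that is \emph{robust} against the colouring of the rest of $G''$, enforces ``exactly one active vertex of $P$'' using only five colours, is extendable for every such choice, and has bounded internal mim-width. Everything else — the frame-graph setup, the equivalence through \cref{lem:mim:frame:is}, and the width analysis via \cref{lem:mimw:cutw} — is essentially a repetition of what was already done for \textsc{Independent Set}, \textsc{Dominating Set}, and \textsc{Feedback Vertex Set}.
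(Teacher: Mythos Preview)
Your proposal is essentially the paper's approach: start from the frame graph, designate one colour as ``active'', attach a path-like selector to each part so that every proper colouring has an active vertex in every part, use disjoint colour palettes for the $V$-parts and the $R$-parts (so the frame-graph edges are automatically proper), and invoke \cref{lem:mim:frame:is} and the width analysis of \cref{claim:mim:is:mim}. The paper makes two simplifications you should note. First, it reduces to \textsc{$5$-List-Coloring} rather than directly to \textsc{$5$-Coloring}: each part $P$ is turned into a \emph{path} $P'' = a(P), P, b(P)$, the interior vertices of a $V$-part get list $\{1,2,3\}$ and of an $R$-part get list $\{3,4,5\}$, and the endpoints $a(P), b(P)$ get singleton lists chosen by the parity of $\card{P}$; this parity trick already forces colour~$3$ to appear on every path, with no helper cliques or state machinery needed. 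The passage to \textsc{$5$-Coloring} is then the standard one (a single global $5$-clique simulating the lists), which raises the mim-width by only a constant. Second, you do not need ``exactly one'' active vertex per part: ``at least one'' suffices, since colour class~$3$ is an independent set and any subset of it with one vertex per part is still an independent transversal to which \cref{lem:mim:frame:is} applies. With these two observations your ``main obstacle'' evaporates --- the selector gadget is literally a path with two forced endpoints.
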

\begin{proof}
    We give a parameterized logspace reduction from \CMClique
    to \fiveLCol.
    Let $\calI = (G, V_1, \ldots, V_r, f)$ be the instance of \CMClique.
    We create the graph $G''$ of the \fiveLCol instance as follows:
    Let $G'$ be the frame graph of $\calI$ and adapt the notation of \cref{def:mim:frame}.
    We obtain $G''$ and the lists $L'' = \{L(v) \mid v \in V(G'')\}$ as follows:
    
    \begin{itemize}
        \item For each $P \in \calP$, we add two vertices $a(P)$ and $b(P)$, and make $P'' = P \cup \{a(P), b(P)\}$ a path from $a(P)$ to $b(P)$. We let $\calP'' = \{P'' \mid P \in \calP\}$.
        \item For each $(i, j) \in [r] \times [k]$, each list of a vertex  in $P = V(i, j)$ is $[3]$. If $\card{P}$ is even, the lists of both $a(P)$ and $b(P)$ are $\{1\}$;
        and if $\card{P}$ is odd, the list of $a(P)$ is $\{1\}$, and the list of $b(P)$ is $\{2\}$.
        \item For each $h \in [r-1]$ and $(i_1, j_1), (i_2, j_2) \in \{h, h+1\} \times [k]$ with $(i_1, j_1) <_{LEX} (i_2, j_2)$,
        each list of a vertex  in $R = R(i_1, j_1, i_2, j_2)$ is $\{3,4,5\}$. If $\card{R}$ is even, the lists of both $a(R)$ and $b(R)$ are $\{5\}$;
        and if $\card{R}$ is odd, the list of $a(R)$ is $\{5\}$, and the list of $b(R)$ is $\{4\}$.
    \end{itemize}
    

    The following observation is immediate from the above construction.
    
    \begin{observation}\label{obs:mim:5col:3}
        In each proper list coloring of $(G'', L'')$ and each $P \in \calP$, there is a vertex in $P$ that received color $3$.
        Conversely, if some vertex $v \in P$ received color $3$ in a proper list coloring of $(G'', L'')$, then the vertices in $P'' \setminus \{v\}$ can be properly list-colored with colors $\{1, 2\}$, if $P = V(i, j)$ for some $i, j$ or with colors $\{4, 5\}$, if $P = R(i_1, j_1, i_2, j_2)$, for some $i_1, i_2, j_1, j_2$.
    \end{observation}
    
    Now suppose that $G$ has a chained multicolored clique $W$. Then by \cref{lem:mim:frame:is}, there is an independent set $S$ in $G'$ such that $\card{S \cap P} = 1$ for all $P \in \calP$. 
    Note that $S$ is also an independent set in $G''$.
    We can therefore let $S$ be color class $3$, and by \cref{obs:mim:5col:3}, each path $P$ can be properly list colored without using color $3$. The only remaining edges that need to by checked are between some $V(i_1, j_2)$ and $R(i_1, j_1, i_2, j_2)$, for all valid choices of $i_1, i_2, j_1, j_2$; but since the sets of colors appearing on these two sets of vertices are disjoint, the coloring is proper here as well.
    
    Conversely, suppose that $(G'', L'')$ has a proper list-coloring. Then we can combine \cref{obs:mim:5col:3} and \cref{lem:mim:frame:is} to conclude that $G$ has a chained multicolored clique (with color class $3$ being the independent set required by \cref{lem:mim:frame:is}).
    
    It remains to argue that there is a logspace-transducer creating a linear order $\Lambda''$ of $G''$ of mim-width $O(k^2)$. 
    This can once more be done in analogy with \cref{claim:mim:is:mim}, with the linear order on each $P'' \in \calP''$ now following the path from one endpoint to the other (which has mim-width $1$).
    
    We have shown that \fiveLCol parameterized by the mim-width of a given linear order of the input graph is XNLP-hard. To derive XNLP-hardness of \fiveCol in the same parameterization, observe that we can use the standard trick of adding a clique on vertices $\{1, \ldots, 5\}$, and for each $i \in [5]$, connecting $i$ and $v$ if $i \notin L(v)$. 
    Since adding $c$ vertices can only increase the mim-width of a given linear order by at most $c$, no matter where the new vertices are placed, this does not prohibitively increase the linear mim-width either.
    
    To obtain hardness for any $q > 5$, we simply add $q - 5$ universal vertices to the \fiveCol instance obtained in the previous paragraph. Adding universal vertices cannot increase the mim-width $w$ of any linear order, regardless of where they are placed, unless $w = 0$.
\end{proof}

Combining \cref{lem:mim:membership,lem:mim:hardness:IS,lem:mim:hardness:DS,lem:mim:hardness:FVS,lem:mim:hardness:qCol},
we obtain the main result of this section.
\mimwidth*
\section{Bipartite bandwidth}
\label{sec:bw}
In this section, we show that the \textsc{Bipartite Bandwidth} problem 
is XNLP-complete in the natural parameterization.

A \textit{caterpillar} is a tree where all vertices of degree at least three are on a common path. Bodlaender et al. \cite{XNLP-comp} showed that the following problem is XNLP-complete.

\defparaproblem{\textsc{Bandwidth on Caterpillars}}{A caterpillar $G=(V,E)$ and an integer $k$.}{$k$.}{Is there a bijection $f:V \rightarrow [1,|V|]$ such that for all edges
$\{v,w\}\in E$: $|f(v)-f(w)|\leq k$?}

We reduce from that problem in order to prove the following result.
\bandwidth*
\begin{proof}
We first show membership. Consider a bipartite graph $G=(X,Y,E)$ and integer $k$.
We can assume $|X|=|Y|$ since adding isolated vertices does not increase the bipartite bandwidth.
We will maintain the following certificate: the $2k+1$ last chosen vertices of $X$ and the $2k+1$ last chosen vertices of $Y$. This takes $(2k+1)\log n$ bits.
For $i=1,\dots, |X|$, we guess the $i$th vertex of $x_i$ of in the order $\alpha$ and the $i$th vertex $y_i$ in the order $\beta$. If  $i\geq 2k+1$, we verify that all neighbors of $x_{i-k}$ are among $\{y_{i-2k},\dots,y_i\}$ and that all neighbors of $y_{i-k}$ are among $\{x_{i-2k},\dots,x_i\}$.

We show hardness via a reduction from \textsc{Bandwidth on Caterpillars}. 
Take a caterpillar $G=(V,E)$ and integer $k$.
Set $L = k^4 + 2k^2$. 

Build a graph $H_0$ as follows.
For each $v\in V$, take a vertex $x_v$ in $H_0$ with $2k^4 = 2L- 4k^2$ vertices of degree 1 adjacent to it.
Replace each edge $\{v,w\}\in E$ by a path with $2k$ edges.

Now, take an arbitrary vertex $v\in V$. Take two copies of $H_0$, and add an edge between the two copies of $x_v$, for this vertex $v$. Let $H$ be the resulting graph.

\begin{claim}
If $H$ has bipartite bandwidth at most $L$, then $G$ has bandwidth at most $k$. 
\end{claim}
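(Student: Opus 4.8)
The plan is to take a bipartite‑bandwidth layout of $H$ of width at most $L$ and read off from it a bandwidth‑$k$ arrangement of $G$; only the first copy $H_0^{(1)}$ of $H$ will be used. First some bookkeeping. $H$ is a tree, hence bipartite, and within a copy any two centre vertices $x_u, x_v$ are joined by a path of even length $2k$, so the centres together with the subdivided edges form a connected graph (a subdivision of the caterpillar $G$) in which all centres of one copy lie in the same class. We therefore take the bipartition $(X,Y)$ of $H$ with all $x_v^{(1)}\in X$ (so the leaves attached to each $x_v^{(1)}$ lie in $Y$). Fix orderings $\alpha$ of $X$ and $\beta$ of $Y$ with $|\alpha(u)-\beta(w)|\le L$ for every edge $uw$ of $H$, set $a_v=\alpha(x_v^{(1)})$, and let $f\colon V\to[|V|]$ send each $v$ to the rank of $a_v$ among $\{a_w : w\in V\}$; these values are distinct, so $f$ is a bijection. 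We will show $f$ has bandwidth at most $k$ in $G$.

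The first ingredient is a \emph{spreading} estimate produced by the leaves. Each of the $2k^4 = 2L-4k^2$ leaves $d$ at $x_v^{(1)}$ satisfies $\beta(d)\in[a_v-L,\,a_v+L]$. Enumerate the centres so that $a_{\sigma(1)}<\cdots<a_{\sigma(n)}$. For $i\le j$, the $2k^4(j-i+1)$ leaves attached to $x_{\sigma(i)}^{(1)},\dots,x_{\sigma(j)}^{(1)}$ are pairwise distinct, so their $\beta$‑images are $2k^4(j-i+1)$ distinct integers, all lying in $[a_{\sigma(i)}-L,\,a_{\sigma(j)}+L]$, an interval containing $a_{\sigma(j)}-a_{\sigma(i)}+2L+1$ integers; hence $a_{\sigma(j)}-a_{\sigma(i)}\ge 2k^4(j-i+1)-2L-1$. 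The second ingredient is a crude distance bound from the subdivided edges: if $\{v,w\}\in E(G)$ then $x_v^{(1)}$ and $x_w^{(1)}$ are joined in $H_0^{(1)}$ by a path $u_0u_1\cdots u_{2k}$ whose vertices alternate between $X$ and $Y$ with $u_0,u_{2k}\in X$; applying the width bound to the two edges at each $u_{2t+1}\in Y$ gives $|\alpha(u_{2t})-\alpha(u_{2t+2})|\le 2L$, and summing over $t=0,\dots,k-1$ yields $|a_v-a_w|\le 2kL$.

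Combining the two bounds: for an edge $\{\sigma(i),\sigma(j)\}\in E(G)$ with $i<j$ we get $2k^4(j-i+1)-2L-1\le 2kL$, i.e. $2k^4(j-i+1)\le 2(k+1)L+1$. Substituting $L=k^4+2k^2$ gives $j-i\le k+\frac2k+\frac2{k^2}+\frac1{2k^4}$, which is strictly less than $k+1$ for $k\ge 3$; hence $|f(\sigma(i))-f(\sigma(j))|=j-i\le k$. Since Bandwidth is solvable in polynomial time for each fixed $k$, instances of \textsc{Bandwidth on Caterpillars} with $k\le 2$ can be handled directly and contribute nothing to hardness, so we may assume $k\ge 3$; then $f$ witnesses that $G$ has bandwidth at most $k$.

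The one point that needs care is the \emph{form} of the spreading estimate: it must be applied to the whole block $\sigma(i),\dots,\sigma(j)$ of centres at once rather than to consecutive pairs, so that the additive loss $2L+1$ is paid only once instead of $j-i$ times — this is exactly what makes the final bound come out as $k$ and not something larger. Everything else is the routine telescoping along the $2k$‑edge paths and the arithmetic check, which closes precisely because $L$ is chosen of order $k^4$.
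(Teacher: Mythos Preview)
Your proof is correct and follows the same overall strategy as the paper: order the vertices of $G$ by the $\alpha$-positions of the centres $x_v$, use the many leaves to force large gaps between centres, and use the $2k$-edge path to bound $|a_v-a_w|\le 2kL$ for every edge $\{v,w\}\in E(G)$.

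The one genuine technical difference is the form of the spreading estimate. The paper applies the leaf count to \emph{consecutive} centres (obtaining $\alpha(x_w)-\alpha(x_v)\ge 4k^4-(2L+1)$ when $g(w)=g(v)+1$) and then sums over $q=j-i$ gaps, incurring the $2L+1$ loss $q$ times. You instead count all $2k^4(j-i+1)$ leaves across the whole block $\sigma(i),\dots,\sigma(j)$ at once, paying the $2L+1$ loss only once. Your version is therefore tighter: it yields $j-i\le k$ for all $k\ge 3$, whereas the paper's pairwise summation only closes for $k\ge 5$. Both arguments handle the remaining small values of $k$ by noting that \textsc{Bandwidth} is polynomial for each fixed $k$, so the difference is cosmetic for the reduction but your bound is the cleaner one. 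Your closing remark slightly overstates the case, though: the pairwise approach does still give $j-i\le k$, just for a larger threshold on $k$.
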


\begin{claimproof}
Fix a copy of $H_0$. All vertices of the form $x_v$ in this copy are in the same side of bipartition $(X,Y)$ of $H$, say all are in $X$. Suppose $(\alpha,\beta)$ are the orderings of $(X,Y)$ that show that $H$ has bipartite bandwidth at most $L$. 
Let $g$ be the layout of $G$ where $g(v)<g(w)$ iff $\alpha(x_v)<\alpha(x_w)$.
We claim this ordering has bandwidth at most $k$.

Look at $v$, $w$ with $g(w)=g(v)+1$. There are $4k^4$ leaves adjacent to $x_v$ and $x_w$ and for any such leaf $\ell$, we know that $\beta(\ell)\in [\alpha(x_v)-L,\alpha(x_w)+L]$. This implies that $\alpha(x_w)-\alpha(x_v)\geq 4k^4-(2L+1)$.

Consider an edge $\{v,w\}\in E$ and suppose w.l.o.g. that $v$ comes before $w$ in $g$. Repeatedly applying the observation above, if $g(w)=g(v)+q$, then 
\[
\alpha(x_w) \geq \alpha(x_v) + 2q(2k^4-L)-q = \alpha(x_v) + q(2k^4-4k^2-1).
\]
We may assume that $k$ is large enough so that $2k^4-4k^2-1 > L = k^4 + 2k^2$. This implies
that when $g(w)>g(v)+k$, then $\alpha(x_w)>\alpha(x_v)+2kL$,
But there is also a path $x_v=u_0,u_1,\dots,u_{2k}=x_w$ with $2k$ edges between $x_v$ and $x_w$, and we find that \[
|\alpha(u_{2k})-\alpha(u_0)|\leq \sum_{i=0}^{k-1}|\alpha(u_{2i})-\beta(u_{2i+1})|+|\beta(u_{2i+1})-\alpha(u_{2i+2})|\leq 2kL.
\]
This gives a contradiction, so we must have $g(w)\leq g(v)+k$ for all edges $\{v,w\}$. It follows that $G$ has bandwidth at most $k$.
\end{claimproof}

\begin{claim}
If $G$ has bandwidth at most $k$, then $H_0$ has bandwidth at most $L$.
\end{claim}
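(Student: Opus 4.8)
The plan is to convert a bandwidth-$k$ layout $g$ of $G$ into a bandwidth-$L$ layout $f$ of $H_0$ by a \emph{blow-up}: each of the $n$ positions used by $g$ becomes a block of $M=\Theta(k^4)$ consecutive positions, and the new vertices of $H_0$ are spread over these blocks.

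Concretely, I would write $V(G)=\{v_1,\dots,v_n\}$ with $g(v_i)=i$ and reserve for $v_i$ a block $B_i$ of $M$ consecutive positions, laid out in the order $B_1,B_2,\dots,B_n$; here $M$ is chosen strictly between $2k^4+ck^2$ and $2L=2k^4+4k^2$ for a suitable constant $c$ (a nonempty interval once $k$ exceeds an absolute constant). Inside $B_i$ I would place, contiguously and with $x_{v_i}$ in the centre, the vertex $x_{v_i}$ together with its $2k^4$ pendant leaves, so that this part has radius $k^4<L$ and hence every leaf of $x_{v_i}$ sits within distance $L$ of it. For an edge $e=\{v_a,v_b\}$ of $G$ with $a<b$ and corresponding path $x_{v_a}=u_0^e,u_1^e,\dots,u_{2k}^e=x_{v_b}$ in $H_0$, I would place the internal vertex $u_\ell^e$ at (a rounding of) the position dividing the segment from $x_{v_a}$ to $x_{v_b}$ in the ratio $\ell:(2k-\ell)$, i.e.\ at $\approx\ell$ equal steps of length $(b-a)M/(2k)$ from $x_{v_a}$. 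Since $b-a\le k$, consecutive internal vertices — and likewise the end-edges $x_{v_a}u_1^e$ and $u_{2k-1}^ex_{v_b}$ — are at distance at most $M/2\le L$; together with the pendant-edges this bounds the bandwidth of $f$ by $L$, provided the rounding does not spoil it.

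Two counting points underlie this. First, I would bound how many internal vertices want to sit in a single block $B_i$: an edge of length $d=b-a$ has consecutive internal vertices $\Omega(dk^3)$ apart, so $O(k/d)$ of them land in $B_i$, and $B_i$ is met by at most $d+1$ edges of each length $d$; summing $\sum_{d=1}^{k}(d+1)\cdot O(k/d)=O(k^2)$ shows only $O(k^2)$ internal vertices compete for $B_i$. With $c$ chosen above that hidden constant, $B_i$ (of size $M\ge 2k^4+ck^2$) then holds $x_{v_i}$, its $2k^4$ pendants and all these internal vertices with $\Omega(k^2)$ positions to spare. Second, because the internal vertices of a single edge are $\Omega(k^3)$ apart, the collisions between rounded target positions are \emph{local}: in any window of $O(k^3)$ positions only $O(k^2)$ internal vertices are targeted, so I can use the spare positions in each block to move every internal vertex to a distinct position so close to its target that each edge of $H_0$ still spans at most $L$. (All inequalities hold for $k$ past an absolute constant; the finitely many smaller $k$ are handled separately — for fixed $k$, \textsc{Bandwidth on Caterpillars} is decidable in polynomial time, so the reduction may emit a canonical instance there.)

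The same description makes $f$ computable from $g$ in logarithmic space (each block is produced by a short loop over the $G$-edges incident with or straddling $v_i$), although the present claim asks only for the bound on the bandwidth of $H_0$. I expect the main obstacle to be precisely the tension in the size of the blocks: they must be large enough ($\ge 2k^4$) to contain all pendants of one vertex, yet small enough ($\le 2L$) for those pendants to stay within distance $L$ of it, leaving only an $O(k^2)$-wide window for $M$ into which the $O(k^2)$ routed path-vertices per block and the rounding slack must be fit; checking that $L=k^4+2k^2$ (equivalently, $2k^4$ pendants) leaves just enough room is the heart of the argument.
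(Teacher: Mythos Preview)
Your blow-up idea is exactly the paper's approach, and your $O(k^2)$ count of path vertices per block is correct. The gap is in the placement step. You put the $2k^4$ pendants \emph{contiguously} around $x_{v_i}$, so the only spare positions in $B_i$ are two strips of width $O(k^2)$ at the very ends of the block. Now look at an edge of layout-length $k$: its internal vertices have target step $M/2$, so their targets alternate between block boundaries and block \emph{centres}. A centre-targeted vertex must be shifted by roughly $k^4$ to reach a spare position; whichever end it goes to, one of its two path-neighbours (sitting near the adjacent block boundary) is then about $M/2 + k^4 \approx 2k^4 > L$ away. Your window argument correctly bounds collisions among path-vertex \emph{targets}, but that is not the obstruction: with contiguous leaves there are simply no free positions anywhere near the centre targets. (A secondary issue: a fixed $M$ cannot be ``chosen'' in $[2k^4 + ck^2, 2L]$, since the layout is a bijection onto $[|V(H_0)|]$ and the average block size is only $2k^4 + O(k)$.)

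The paper sidesteps rounding entirely. Rather than even spacing, it threads the $2k-1$ internal vertices of each path one-per-slot through the natural ``slots'' between consecutive groups of $k^4$ leaves (and just before each $x_y$), dumping any leftover vertices into the last two slots near $x_w$. A direct count gives at most $k^2$ vertices in any slot, so every edge of $H_0$ spans at most one block of $k^4$ leaves plus one slot, yielding bandwidth $\le k^4 + k^2 + 1 \le L$. This is essentially your construction with ``interleave path vertices among the leaves at fixed slot positions'' in place of ``contiguous leaves plus rounding'', and it is precisely this change that makes the constants close.
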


\begin{claimproof}
We build an ordering on the vertices of $H_0$ as follows. First, we order the vertices
of the form $x_v$ in the same order as the vertices of the form $v$ in the layout of $G$. 
For each $x_v$, insert half of its incident leaves directly before $x_v$ in the ordering, 
and half of its incident leaves directly after $x_v$. 

For each edge $\{v,w\}\in E$, we have to insert the vertices of the path with $2k-1$ vertices
between $x_v$ and $x_w$. Suppose $v$ is earlier in the ordering than $w$, with $y_1, y_2, \ldots, y_\ell$ between $v$ and $w$, in this order, with $\ell\leq k-1$. Write $y_{\ell+1}=w$.
Insert the $2k-1$ vertices on the path consecutively in the ordering as follows: directly
before the first incident leaf of $x_{y_1}$, directly before $x_{y_1}$, directly before the first
incident leaf of $x_{y_2}$, directly before $x_{y_2}$, etc., until we placed a vertex of the path
directly before the first leaf of $x_{y_{\ell+1}}=x_w$. If the path has remaining vertices left, split these evenly before $x_w$ and  before its first leaf.

Let a `slot' be defined as a vertex $x_v$ along with all vertices added before it by the procedure above, or the set of vertices added right before a leaf of a vertex $x_v$.
We now show that all slots have size at most $k(k-1)+1$. 
We first count the vertices added due to an edge $uv$. For each $i \in [k]$, there is at most one such $u$ at distance $i$ from $v$ in the layout, and this contributes $k-i+1$ vertices before the first leaf of $x_v$, and $k-i$ vertices before $x_v$.
We now count the vertices due to edges $uw$ from a vertex $u$ strictly before $v$ to a vertex $w$ strictly after $v$. We count one vertex for each such edge. If $u$ is the $i$th vertex before $v$, for $i \in [k-1]$, then there will be at most $k-i$ such edges coming from $u$.
The number of vertices in the slot of $x_v$ is hence at most 
$\left(\sum_{i=1}^k k-i\right)+\left(\sum_{i=1}^{k-1} k-i \right)=1+k(k-1)$, whereas the number of vertices in the slot before the first leaf of $x_v$ is at most $\left(\sum_{i=1}^k k-i+1\right)+\left(\sum_{i=1}^{k-1} k-i \right)=k^2$. 

If $k>0$, then $1+k(k-1)\leq k^2$, so all slots have at most $k^2$ vertices. Edges of $H_0$ are either on a path between vertices $x_v,x_w$, or edges from $x_v$ to a leaf. The number of vertices in between the endpoints of the edge in the order is hence at most $k^4$ (allowing leaves to be in between) plus $k^2$ (allowing a `slot' to be in between). So we find they are at most $k^4+k^2+1\leq L$ apart. We conclude that $H_0$ has bandwidth at most $L$.

If $k=0$, $G$ has no edge and $H_0$ has no edge so it has bandwidth $L=0$.
\end{claimproof}

\begin{claim}
If $H_0$ has bandwidth at most $L$, then $H$ has bipartite bandwidth at most $L$.
\end{claim}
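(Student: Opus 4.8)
The plan is to exploit the symmetry between the two copies of $H_0$ inside $H$, together with the fact that $H$ has a very rigid bipartition. First I would observe that $H_0$ is bipartite: colour every $x_v$ and every internal path vertex at even distance from an $x_v$ with one colour, and all pendant leaves together with all internal path vertices at odd distance with the other colour; since each edge $\{v,w\}$ of $G$ is replaced by a path with $2k$ edges (even length), this is a proper $2$-colouring, and in it every vertex $x_v$ lies in the same class, which I call $X_0$; let $Y_0$ be the other class. Since $G$ is connected, so is $H$, and $H$ is bipartite with the bipartition obtained by keeping the colouring of the first copy and \emph{swapping} the colouring of the second copy, so that the added edge $\{x_v^{(1)},x_v^{(2)}\}$ (both endpoints being in the $X_0$-class of their respective copies) becomes bichromatic. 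Writing $\iota_i\colon V(H_0)\to V(H_0^{(i)})$ for the natural isomorphism onto the $i$-th copy, the two sides of $H$ are $X=\iota_1(X_0)\cup\iota_2(Y_0)$ and $Y=\iota_1(Y_0)\cup\iota_2(X_0)$.

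Next, given a layout $f\colon V(H_0)\to[N]$ (with $N=|V(H_0)|$) of bandwidth at most $L$, I would simply transport $f$ to both copies: define $\alpha\colon X\to[N]$ and $\beta\colon Y\to[N]$ by $\alpha(\iota_i(u))=f(u)$ whenever $\iota_i(u)\in X$, and $\beta(\iota_i(u))=f(u)$ whenever $\iota_i(u)\in Y$. The only bookkeeping step is to check that these are bijections (i.e.\ genuine orderings): on $X$, the first copy contributes exactly the values $f(X_0)$ and the second copy exactly $f(Y_0)$, and $f(X_0)$, $f(Y_0)$ partition $[N]$; the argument for $\beta$ is symmetric.

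Finally I would verify the distance bound edge by edge. An edge inside the first copy joins $\iota_1(u_1)\in X$ to $\iota_1(u_2)\in Y$ with $\{u_1,u_2\}\in E(H_0)$, so $|\alpha(\iota_1(u_1))-\beta(\iota_1(u_2))|=|f(u_1)-f(u_2)|\le L$; the same holds for edges inside the second copy, with the two sides swapped. For the unique connecting edge, both endpoints are images of the \emph{same} vertex $x_v$ of $H_0$, so $|\alpha(\iota_1(x_v))-\beta(\iota_2(x_v))|=|f(x_v)-f(x_v)|=0\le L$. Hence $H$ has bipartite bandwidth at most $L$, completing the claim.

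I do not expect a genuine obstacle in this direction: the construction here uses none of the quantitative choices ($L=k^4+2k^2$, the $2k^4$ pendant leaves, the paths of length $2k$) that are needed for the converse implications. The only thing to get right is the \emph{orientation} of the bipartition — one must flip the $2$-colouring of the second copy so that the connecting edge is legal — and then confirm that $\alpha$ and $\beta$ are bijections onto $[N]$ rather than merely injective partial maps.
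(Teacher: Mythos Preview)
Your proof is correct and follows essentially the same approach as the paper: both transport the bandwidth layout $f$ of $H_0$ to each copy, flip the bipartition on the second copy so the bridging edge becomes bichromatic, and observe that the bridging edge then has stretch $0$. You are more careful than the paper in explicitly arguing that $H_0$ is bipartite with all $x_v$ on the same side and in verifying that $\alpha,\beta$ are bijections onto $[N]$, but the underlying construction is identical.
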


\begin{claimproof}
Suppose $f$ is a layout of $H_0$ of bandwidth at most $k$. Let $(X,Y)$ be the bipartition of $H_0$. We set $\alpha(x)=f(x)$ for all $x\in X$ and $\beta(y)=f(y)$ for all $y\in Y$. For a copy $x'$ of the vertex $x\in X$ in the second copy of $H$, we set $\beta(x')=f(x)$, and similarly we set $\alpha(y')=f(y)$ for a copy $y'$ of a vertex $y$. We find that $|\alpha(x)-\beta(y)|=|f(x)-f(y)|\leq L$ whenever $\{x,y\}$ is an edge of $H_0$, and similarly $|\alpha(y')-\beta(x')|\leq L$. There is only one edge $\{x,x'\}$ between vertices of the two copies, and $\alpha(x)=\beta(x')$.
\end{claimproof}

The three claims show that $H$ has bipartite bandwidth at most $L$ if and only if $G$ has bandwidth at most $k$, as desired.

\end{proof}

\section{Conclusion}
\label{section:conclusion}
In this paper, we gave a number of XNLP-completeness proofs for graph problems 
parameterized by linear width measures. 
Such results are interesting for a number of reasons: 
they pinpoint the ``right'' complexity class for parameterized problems, 
they imply hardness for all classes $W[t]$, 
and they tell that it is unlikely that there is an algorithm that uses
`XP time and FPT space' by \cref{conj}.


This paper gives among others
the first examples of XNLP-complete problems when the
linear clique-width or linear mim-width is taken as parameter.
Our hardness results give new starting points for future hardness and
completeness proofs,
in particular for problems with width measures like pathwidth, (linear) clique-width, or
(linear) mim-width as parameter. 

Other interesting directions for future research in this line are, for instance,
to consider the parameterization by \emph{cutwidth}; 
a promising candidate problem to show XNLP-completeness parametized by cutwidth is \textsc{List Edge Coloring}.
Another interesting parameter to consider in this context is the \emph{degeneracy} of a graph.
It is also interesting to explore the concept of XNLP-completeness for width measures of other objects than graphs.
One could for instance consider (linear) width measures of digraphs or hypergraphs.

We also leave open what the correct parameterized complexity class is for {\sc Feedback Vertex Set} parameterized by logarithmic pathwidth or logaritmic linear cliquewidth -- we showed XNLP-hardness, but did not prove containment in XNLP.

\bibliographystyle{plainurl}
\bibliography{main}


\end{document}